\newtheorem{theorem}{Theorem}
\newtheorem{definition}{Definition}
\newtheorem{lemma}{Lemma}
\newtheorem{corollary}{Corollary}
\newcommand{\pfs}{\frac{\partial}{\partial s}}
\newcommand{\igzo}{\int_0^1}
\newcommand{\igzx}{\int_0^x}
\newcommand{\igxo}{\int_x^1}
\newcommand{\igzs}{\int_0^s}
\newcommand{\igso}{\int_s^1}
\newcommand{\wh}{\hat{w}}
\newcommand{\wt}{\tilde{w}}
\newcommand{\lt}{L_2(0,1)}
\newcommand{\mcl}[1]{\mathcal{#1}}
\newcommand{\pft}{\frac{\partial}{\partial t}}
\newcommand{\hlf}{\frac{1}{2}}
\tikzset{
  -|-/.style={
    to path={
      (\tikztostart) -| ($(\tikztostart)!#1!(\tikztotarget)$) |- (\tikztotarget)
      \tikztonodes
    }
  },
  -|-/.default=0.5,
  |-|/.style={
    to path={
      (\tikztostart) |- ($(\tikztostart)!#1!(\tikztotarget)$) -| (\tikztotarget)
      \tikztonodes
    }
  },
  |-|/.default=0.5,
}
\tikzstyle{decision} = [diamond, draw, text width=4.5em,
\tikzstyle{block} = [rectangle, draw, text width=10cm,
\tikzstyle{line} = [draw, -latex']
\tikzstyle{smallblock} = [rectangle, draw,
\tikzstyle{cloud} = [draw, circle, node distance=2.5cm, minimum height=2.8em]
\tikzstyle{blank} = [node distance=1cm]
\begin{document}
%
\title{A Convex Approach to Output Feedback Control of Parabolic PDEs Using Sum-of-Squares}
%
%
%

\author{Aditya~Gahlawat
         and~Matthew.~M.~Peet
\thanks{Aditya Gahlawat is with the Department
of Mechanical, Materials and Aerospace Engineering at the Illinois Institute of Technology, Chicago,
IL, 60616 USA and with the Grenoble Image Parole Signal Automatique Lab., Universit\'{e} Joseph Fourier/Centre National de la Recherche Scientifique, St. Martin d'Heres, France e-mail: (agahlawa@hawk.iit.edu).}
\thanks{Matthew. M. Peet is with the School of Engineering of Matter, Transport and Energy at Arizona State University, Tempe, AZ, 85287-6106 USA e-mail: (mpeet@asu.edu).}
}


\maketitle

\begin{abstract}
In this paper we use optimization-based methods to design output-feedback controllers for a class of one-dimensional parabolic partial differential equations. The output may be distributed or point-measurements. The input may be distributed or boundary actuation. We use Lyapunov operators, duality, and the Luenberger observer framework to reformulate the synthesis problem as a convex optimization problem expressed as a set of Linear-Operator-Inequalities (LOIs). We then show how feasibility of these LOIs may be tested using Semidefinite Programming (SDP) and the Sum-of-Squares methodology.
\end{abstract}
\IEEEpeerreviewmaketitle

\section{Introduction}

Parabolic Partial Differential Equations (PDEs) are a simple class of system used to model processes such as diffusion, transport and reaction.  Some examples of systems which have been modelled using Parabolic PDEs include plasma in a tokamak~\cite{witrant2007control}, heat propagation, and spatial dynamics of population in an ecosystem~\cite{murray2002mathematical}. Despite the wide variety of physical phenomena modeled by partial-differential equations, our knowledge of how to control these systems is underdeveloped. While much attention has focused on the use of advanced computing strategies for simulation of partial-differential equations, relatively little work has focused on the development of numerical methods for control of PDEs. This is in particular contrast to the state of the art for linear Ordinary Differential Equations (ODEs), wherein Linear Matrix Inequalities and Convex optimization have been used to resolve a vast array of long-standing problems - e.g. $H_\infty$-optimal output feedback. The goal of this paper, then, is to attempt to extend some of the computational methods for control of linear ODEs to control of linear PDEs.

Differential models incorporating multiple independent variables (e.g. time and space) have been around since the time of Newton. Indeed, many of the models we use today date from this time - e.g. D'Alembert and the wave equation; the Euler-Bernoulli beam; The Euler Equations. Although research into PDEs over the past century has mainly focused on constructing analytic or numerical solutions to these systems, an effort has also been made to define a framework for control. One facet of this research into defining a framework for control of PDEs has been to define a general class of forward-time PDE systems using the label of ``strongly-continuous semigroup''. For such systems, existence and continuity of solutions is guaranteed for bounded feedback operators. See~\cite{curtain1995introduction,bensoussan1995representation,
hale1971functional,lasiecka2000control} for several excellent volumes on this subject.
One of the advantages of a well-defined state-space is the ability to use Lyapunov analysis to prove properties of the state. Indeed, application of Lyapunov theory to infinite-dimensional systems has been studied for some time - See early results in~\cite{krasovskistability,datko1970extending,baker1969lyapunov}.

PDE models of control can vary significantly based on the type of PDE, boundary conditions, measurements, etc. Unlike ODE systems, these differences may dramatically alter the definition of state and other mathematical properties of the solution. For instance, control of PDEs can be classified as either distributed input or boundary/point input. For distributed inputs, the control effort is spread over some measurable subset of the domain. For boundary/point inputs, the input precisely determines the state at a collection of points of zero measure. An example of a distributed input is RF heating of a plasma in a tokamak~\cite{bribiesca2011strict}.  Examples of point actuation include a thermostat in HVAC regulation or the speaker in noise-cancelling headphones. In a similar manner, output may also be classified using either distributed or boundary/point measurements. A more subtle distinction is the classification as hyperbolic, parabolic or elliptic - a distinction determined by the number and type of partial derivatives. Additionally, we distinguish between isotropic and anisotropic systems. In isotropic systems, independent variables (spatial or temporal) do not appear in the coefficients, whereas the anisotropic form allows such dependence. Examples of anisotropic systems include heat conduction with non-homogeneous/time-varying conductive properties or a wave propagating through a medium of varying density. Finally, we classify the boundary conditions using terms such as Neumann/Dirichlet/Robin/etc. to denote which boundary points are specified or controlled. Classification of boundary conditions has a significant influence on the existence and mathematical properties of the solution~\cite{lions1972non}.

In this paper we focus on the more difficult case of point actuation of a single-state anisotropic parabolic partial-differential equation in a single spatial variable using point observations and non-homogeneous boundary conditions.

There has been significant recent effort to understand and solve the problem of optimal control for PDE systems of this form. For instance,~\cite{van1993h} solved certain distributed input/distributed output optimal control problems using infinite-dimensional Ricatti equations. Additionally,~\cite{lasiecka2000control} and related work considered the problem of point actuation using Ricatti Equations and also discusses potential numerical methods for solving these equations. In~\cite{lasiecka1994control}, an extension of this approach to output feedback through the use of a Luenberger observer is developed. One relatively popular and practical method for controlling parabolic PDE systems has been backstepping~\cite{krstic2008boundary} and its numerous extensions (e.g.~\cite{krstic2008adaptive,smyshlyaev2007adaptive,smyshlyaev2007adaptive2,smyshlyaev2006lyapunov}). This method is attractive due to its straightforward explanation and implementation. However, it does have drawbacks including suboptimality due to the fixed structure of the controller and Lyapunov function.
Additionally, we note some other recent use of Lyapunov functions for analysis and control of infinite dimensional systems including: a rotating beam~\cite{coron1998stabilization}; quasilinear hyperbolic systems~\cite{coron2008dissipative}; and control of systems governed by conservation laws \cite{coron2007strict}.

Alternatively, Sturm-Liouville theory can also be used to devise stabilizing controllers for the class of PDEs we consider. In particular, the problem of searching for the eigenvalues of the differential operators defining the PDEs under consideration can be cast as a Sturm-Liouville eigenvalue problem. Thus, the eigenvalues of the differential operators can be found and consequently, stability properties can be inferred. Moreover, using the same approach, static output feedback controllers which stabilize the PDEs can also be found. Albeit relatively more complicated, the methodology presented in this work has various advantages over the Sturm-Liouville approach, chiefly among which is that we use Lyapunov functionals to achieve our results. Due to this, the presented work can be generalized to construct robust controllers for not only the systems under consideration, but also for nonlinear and uncertain PDEs. Moreover, the numerical examples provided in the paper show that the presented methodology is more effective in constructing stabilizing controllers.

To summarize, although there are a number of methods for control of PDEs, none of them are an ideal solution in the sense that if a controller exists, we have a practical and numerically efficient way to find it. Some previous work in this direction includes the use of Sum-of-Squares for stability analysis of nonlinear PDEs in~\cite{papachristodoulou2005constructing} and was applied to fluid-flow in~\cite{tanaka2009sum}. Additionally, the use of LMIs for stability analysis of semilinear parabolic and hyperbolic systems can be found in \cite{fridman2009lmi}.
The results presented in this paper are a further step towards that ideal solution in the sense that the conditions are convex (meaning they are tractable) and asymptotically accurate (meaning that for any desired accuracy, we can find a convex set of conditions).

Specifically, in this paper, we consider a linear 1-D parabolic partial-differential equation with spatially- and temporally-varying coefficients. We focus on point actuation of Neumann-type boundary conditions, although the use of Dirichlet, Robin, or distributed inputs is also discussed. Our approach to controller synthesis is to use the semigroup framework to formulate the controller synthesis problem as a set of linear operator inequalities. These operator inequalities represent the conditions for existence of a decreasing quadratic Lyapunov function. For point observation, we use the Luenberger observer framework to construct additional inequalities which define the observer. Once we have defined our operator inequalities, we parameterize the set of solutions using operators with polynomial multipliers and kernels. This parametrization is convex and can be tested using recently developed methods for the optimization of positive polynomials such as Sum-of-Squares~ \cite{prajna2001introducing}. Some illustrative examples are also included. The results in this paper fall short of the ideal solution in that they rely on the Luenberger observer for state estimation - meaning the closed loop system may be suboptimal. In addition, the results in this paper cannot be directly applied to vector-valued PDE systems.

\section{Notation}\label{Prelim}
The set $\R^{m \times n}$ contains real matrices of dimensions $m$-by-$n$. The set $\S^n$ contains real symmetric matrices of dimension $n$-by-$n$. $C^1[X]$ is the space of continuously differentiable functions defined on $X$. The shorthand $u_x$ denotes the partial derivative of $u$ with respect to independent variable $x$. $(L_2[X])^n$ denotes the Hilbert space of Lebesgue measurable maps from $X$ to $\R^n$.
$I_n$ is the identity matrix of dimension $n \times n$ and we denote $I=I_n$ when $n$ is clear from context.
We define $Z_d(x)$ to be the vector of monomials in variables $x$ of degree $d$ or less. We define $Z_{n,d}(x)=I_n \otimes Z_d(x)$ - the polynomial matrix whose rows form a basis for vector-valued polynomials of degree $d$ or less.

Unless otherwise indicated, $\langle \cdot,\cdot \rangle$ denotes the inner product on $L_2$ and $\|\cdot\|=\|\cdot\|_{L_2}$ denotes the norm induced by the inner product.
The Sobolev subspace of differentiable functions 
\begin{align*}H^n(0,1):=\{& y \in L_2 \,:y,\cdots,\frac{d^{n-1}y}{dt^{n-1}} \text{ are absolutely continuous} \\
& \text{ with } \frac{d^{n}y}{dt^{n}} \in \lt \}
\end{align*} is equipped with inner product $\ip{x}{y}_{H^n} = \sum_{m=0}^n\ip{\frac{d^m x}{dt^m}}{\frac{d^m y}{dt^m}}$. For Hilbert spaces $X$ and $Y$, the set $\mcl{L}(X,Y)$ includes bounded linear operators from $X$ to $Y$ endowed with the induced norm $\|\cdot\|_\mathcal{L}$.

\section{Background}\label{proset}
In this paper, we focus on the following class of parabolic PDE.
\begin{align}
 &\label{eqn:PDE_form}w_t(x,t)=a(x)w_{xx}(x,t)+b(x)w_x(x,t)+c(x)w(x,t),\\
 & x \in [0,1], \quad t\ge 0 \nonumber
\end{align} with mixed boundary conditions of the form
\begin{equation}
\label{eqn:PDE_form_BC}
 w(0,t)=0, \qquad w_x(1,t)=u(t).
\end{equation}
 For this paper, we assume $w$ is scalar-valued ($w(x,t) \in \mathbb{R}$). Additionally, we assume that $a$, $b$ and $c$ are known polynomial functions with $a(x) \geq \alpha > 0$, for $x \in [0,1]$. Note that the results of this paper can be readily modified to cover Dirichlet, Neumann or Robin-type boundary conditions or systems with time-varying uncertainty in the coefficients.
In addition, note that conditions for well-posedness of this model under feedback have been established in, e.g.~\cite{triggiani1980well,lasiecka1983feedback,lasiecka1980unified,lions1972non}.

In this paper, we will consider state-feedback of the form $u(t)=(\mathcal{F}w)(t)$ where $\mathcal{F}: H^1(0,1) \rightarrow \mathbb{R}$ is a bounded linear operator. It has been shown~\cite{fridman2009lmi} that such feedback is well-posed with a unique local strong solution for any initial condition $w(x,0)=w_0(x) \in \mathcal{D}$, where we define the space
\begin{equation}
 \label{state_space}
 \mathcal{D}=\{z \in H^2(0,1): \quad z(0)=0, \quad z_x(1)=\mathcal{F}z\}.
\end{equation} For the purposes of stability analysis, we also define
\begin{equation}
 \label{state_space_stab}
 \mathcal{D}_0=\{z \in H^2(0,1): \quad z(0)=0, \quad z_x(1)=0\}.
\end{equation}

\subsection{Sum-of-Squares Polynomials (SOSPs)}\label{SOS_sec}

Sum-of-Squares (SOS) is an approach to  the optimization of positive polynomial variables. A typical formalism for the polynomial optimization problem is given by
\begin{align*}
&\max_x \; \; c^T x,\quad \text{subject to }\quad \sum_{i=1}^m x_i f_i(y)  + f_0(y) \ge 0,
\end{align*}
for all $y\in \R^n$, where the $f_i$ are real polynomial functions. The key difficulty is that the feasibility problem of determining whether a polynomial is globally positive ($f(y)\ge0$ for all $y\in \R^n$) is NP-hard~\cite{blum1998complexity}. To overcome this difficulty, there are a number of sufficient conditions for polynomial positivity. A particularly important such condition is that the polynomial, $p$, be a Sum-of-Squares,
\[
p\,(x)=\sum_{i=1}^k g_i(x)^2,
\]
where the $g_i$ are polynomials and which is denoted $p \in\Sigma_s$. The importance of the SOS condition lies in the fact that it can be readily enforced using semidefinite programming. This is due to the easily proven fact that for a polynomial $p$ of degree $2d$, $p \in \Sigma_s$ if and only if $p=Z(x)^T Q Z(x)$ for some $Q\ge 0$, where $Z(x)$ is the vector of monomials of degree $d$ or less. In this way, optimization of positive polynomials can be converted to semidefinite programming.
The semidefinite-programming approach to polynomial positivity was described in the thesis work of~\cite{parrilo2000structured} and also in~\cite{powers1998algorithm}. See also~\cite{chesi1999convexification} and~\cite{lasserre2001global} for contemporaneous work. MATLAB toolboxes for manipulation of SOS variables have been developed and can be found in~\cite{prajna2001introducing} and~\cite{henrion2003gloptipoly}.

SOS can also be used to optimize polynomials which are positive on a subset of $\R^n$ via Positivstellensatz (PS) results~\cite{stengle1974nullstellensatz,
schmudgen1991thek,putinar1993positive,jacobi2001representation}. To see this, consider a semialgebraic set
\begin{equation}
X:=\{x \in \R^n\,:\, g_i(x)\ge 0,\;\; i=1,\cdots,k\}\label{eqn:semialgebraic_set}
\end{equation}
for polynomials $g_i$. A simplified form of PS result can be derived from~\cite{putinar1993positive} and summarized as follows.
\begin{theorem}
For given polynomials $g_i$, suppose that $X$ is defined as per Equation~\eqref{eqn:semialgebraic_set}. Further suppose that $\{x\,:\, g_i(x) \ge 0\}$ is compact for some $i$. If the polynomial $f$ satisfies $f(x)>0$ for $x \in X$, then there exist Sum-of-Squares polynomials $s_i \in \Sigma_s$ such that
\begin{equation*}
f(x) = s_0(s) + \sum_{i=1}^m s_i(s)g_i(s)
\end{equation*}
\end{theorem}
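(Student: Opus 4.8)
The plan is to obtain the claim as a specialization of Putinar's Positivstellensatz~\cite{putinar1993positive}, whose content is precisely that strict positivity on $X$ forces membership in the quadratic module $M := \{s_0 + \sum_{i=1}^k s_i g_i : s_i \in \Sigma_s\}$ generated by the constraint polynomials. That theorem is stated under the \emph{Archimedean} hypothesis: there must exist a constant $N$ with $N - \sum_{j=1}^n x_j^2 \in M$. Hence the real work is not the positivity-to-representation implication, which we import wholesale, but rather showing that the stated hypothesis --- compactness of $\{x : g_i(x) \ge 0\}$ for some single index $i$ --- suffices to guarantee this Archimedean condition.

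First I would fix an index $\ell$ for which $K_\ell := \{x : g_\ell(x) \ge 0\}$ is compact and pick $N$ large enough that $K_\ell \subset \{x : \sum_{j=1}^n x_j^2 < N\}$. Since $X \subseteq K_\ell$, the polynomial $g_{k+1}(x) := N - \sum_{j=1}^n x_j^2$ is strictly positive on all of $X$, so adjoining the redundant constraint $g_{k+1} \ge 0$ leaves $X$ unchanged. The enlarged quadratic module $M' := M + \Sigma_s \cdot g_{k+1}$ is then manifestly Archimedean, because $g_{k+1}$ itself lies in $M'$ (take multiplier $1$).

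Next I would apply Putinar's theorem to the Archimedean module $M'$: from $f(x) > 0$ on $X$ we obtain $f = s_0 + \sum_{i=1}^k s_i g_i + s_{k+1} g_{k+1}$ with each $s_i, s_{k+1} \in \Sigma_s$, which is exactly the asserted form once the ball constraint is counted among the $g_i$. To reach the conclusion phrased in terms of only the original constraints, the remaining step is to absorb $s_{k+1} g_{k+1}$ back into $M$, i.e.\ to prove $N - \sum_{j=1}^n x_j^2 \in M$ so that the original module is already Archimedean. This is where I expect the main obstacle: deducing $N - \sum_{j=1}^n x_j^2 \in M$ from compactness of $K_\ell$ alone is the genuinely nontrivial algebraic point, and it is exactly why the hypothesis isolates a single compact constraint set rather than mere compactness of $X$ (the latter yields only Schm\"udgen's preordering representation, which admits products of the $g_i$). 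In practice one either invokes the standard lemma that compactness of one constraint level set renders $M$ Archimedean, or --- as is nearly universal in the SOS literature --- simply keeps the redundant ball constraint in the data, whereupon the enlarged representation above already is the desired conclusion.
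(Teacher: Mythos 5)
The paper offers no proof of this theorem at all---it is presented as a simplified summary of Putinar's Positivstellensatz, with the sentence ``can be derived from [Putinar 1993]'' standing in for the argument---so your proposal is essentially the intended derivation: the role of the hypothesis that $\{x : g_\ell(x)\ge 0\}$ is compact for some $\ell$ is precisely to force the Archimedean property of the quadratic module $M$, after which Putinar's representation theorem applies verbatim. The ``standard lemma'' you defer to is genuine and is itself contained in the cited work (for a single constraint the preordering and quadratic module coincide, so Schm\"udgen's theorem applied to the compact set $\{g_\ell \ge 0\}$ yields $N-\sum_{j=1}^n x_j^2 = s_0 + s_1 g_\ell \in M$, i.e.\ $M$ is Archimedean); just note that your fallback of permanently retaining the redundant ball constraint would prove a slightly different statement than the one asserted, since the representation would then involve a polynomial not among the original $g_i$.
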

As an illustration of this result, suppose we can find Sums-of-Squares polynomials $s_0$ and $s_1$, such that $p\,(x)=s_0\,(x)+ (1-x^2) \; s_1\,(x)$. Then $p(x)\ge 0$ for $x^2\le 1$. The PS tells us that if $p$ is strictly positive ($p(x)\ge \epsilon>0$ for $x^2 \le 1$), then such polynomials $s_0$ and $s_1$ will always exist. A summary of PS results can be found in~\cite{scheiderer2009positivity}.

\section{A Framework for Analysis and Synthesis of PDEs}

The goal of this paper is to create a practical framework for controller synthesis akin to the LMI framework for ordinary differential equations. To motivate this approach, we recall some notation from the well-developed field of Semigroup theory discussed in the introduction. Within the semigroup framework are certain classes of systems which admit a continuously parameterized operator $S(t)$ which represents the solution map so that any solution $w(t)$ satisfies $S(s)w(t) = S(t+s)w(0)$. Associated with such systems is a possibly unbounded operator $\mathcal{A}:X\rightarrow Y$ known as the infinitesimal generator which satisfies $\dot w(t) = \mathcal{A}w(t)$ for any $w(t)=S(t)w(0)$ where $X$ and $Y$ are Hilbert spaces which depend on the system.

Although we do not explicitly use semigroup theory in this paper, it provides a convenient shorthand for presenting and interpreting our results. Specifically, for PDEs in the form of Equation~\eqref{eqn:PDE_form}, we define the first-order differential form
\begin{equation}
\dot w(t) = \mathcal{A}w(t)+\mathcal{B}u(t) \label{eqn:firstorderform}
\end{equation}
where the operator $\mcl{A}:\mcl{D}_0 \subset \lt \rightarrow \lt$ is defined as
\begin{equation}
(\mathcal{A}w)(x):= a(x)\frac{d^2}{dx^2}w(x)+b(x)\frac{d}{dx}w(x)+c(x)w(x)\label{eqn:Aoperator},
\end{equation} and the space $\mcl{D}_0$ has been defined in Equation~\eqref{state_space_stab}.
Moreover, analogous to the examples in \cite{van1993h} and \cite{byrnes1999example}, it can be established that
\begin{align*}
&(\mathcal{B}u(t))(x)=\delta_1(x)u(t) \quad \text{and} \\
& y(t)=\mcl{C}w(t)=\ip{\delta_1(\cdot)}{w(t)}=w(1,t),
\end{align*} where $\delta_1$ is the Dirac delta functional centered at $x=1$. It can be established that the operator $\mcl{A}$, with domain $\mcl{D}_0$, generates a strongly continuous semigroup $S(t)$ on $\lt$ \cite{curtain1995introduction}. Let $\mcl{D}_1=\mcl{D}_0$ with the norm $\norm{x}_1=\norm{(\alpha I -\mcl{A})x}$, $\alpha \in \rho(\mcl{A})$, where $\rho(\mcl{A})$ is the resolvent set of $\mcl{A}$. Additionally, let $\mcl{D}_{-1}$ be the completion of $\lt$ with respect to the norm $\norm{x}_{-1}=\norm{(\alpha I -\mcl{A})^{-1}x}$. Then, it has been shown in \cite{harkort2011discrete}, using the results presented in \cite{nagel2000one} and \cite{weiss1989representation}, that $\mcl{B} \in \mcl{L}(\R,\mcl{D}_{-1})$ and $\mcl{C} \in \mcl{D}_1^\star$, the dual space of $\mcl{D}_1$. Additionally, it has been shown that Equation~\eqref{eqn:firstorderform} has a continuous state strong solution $w(\cdot) \in C([0,\infty];\lt)$ for any $u \in L_2([0,T];\lt)$, for all $0 < T <\infty$.  

One of the advantages of the operator framework associated with the Semigroup approach is a simplified treatment of Lyapunov functions. Specifically, it is known~\cite{curtain1995introduction} that the strongly continuous semigroup $S(t)$ generated by $\dot{w}=\mathcal{A}w$ is exponentially stable if and only if there exists a positive operator $\mathcal{P}:X\rightarrow X$ such that
\begin{equation}
\ip{\mathcal{A}w}{\mathcal{P}w}_X+\ip{\mathcal{P}w}{\mathcal{A}w}_X\le -\norm{w}^2.\label{eqn:lyapunov_LOI}
\end{equation}

We refer to the feasibility of Condition~\eqref{eqn:lyapunov_LOI} as a \emph{Linear Operator Inequality} (LOI). This condition in particular is equivalent to the existence of a decreasing Lyapunov function of the form $V(w) = \ip{w}{\mathcal{P}w}_X$. Of course, there have been many Lyapunov stability tests proposed in the literature for analysis of infinite-dimensional systems. The goal of this paper, however, is to extend these results to controller and observer synthesis.

Roughly speaking, the approach we take in this paper is to formulate linear operator inequalities similar to Condition~\eqref{eqn:lyapunov_LOI} and interpret these inequalities using Lyapunov functions of the form $V(w) = \ip{w}{\mathcal{P}w}$ where the operator $\mathcal{P}$ is parameterized using polynomials. Positivity is enforced using Sum-of-Squares and the results in~\cite{peetlmi}. The sections in this paper are defined by the particular form of LOI problem which we hope to solve. Specifically, we have the following problems.
\begin{enumerate}
\item Stability
\[
\ip{\mathcal{A}w}{\mathcal{P}w}+\ip{\mathcal{P}w}{\mathcal{A}w} \le -\epsilon \norm{w}^2, 
\]
\item Controller Synthesis
\begin{equation}
\label{eqn:LOI_control}
\ip{(\mathcal{A}\mathcal{P}+\mathcal{B}\mathcal{Z}) w}{w}+\ip{w}{(\mathcal{A}\mathcal{P}+\mathcal{B}\mathcal{Z}) w} \le -\epsilon \norm{w}^2, 
\end{equation}
\item Observer Synthesis
\begin{equation}
\label{eqn:LOI_observer}
\ip{(\mathcal{P}\mathcal{A}+\mathcal{V}\mathcal{C})w}{\mathcal{P}w}+\ip{w}{(\mathcal{P}\mathcal{A}+\mathcal{V}\mathcal{C}) w} \le -\epsilon \norm{w}^2, 
\end{equation}
\end{enumerate} for $w \in \mcl{D}_0$.
In the inequalities above, $\mcl{A}$, $\mcl{B}$ and $\mathcal{C}$ are as defined previously.
Furthermore, we parameterize the operators $\mathcal{P}$, $\mathcal{Z}$ and $\mathcal{V}$ as follows.
\begin{align}
(\mathcal{P}w)(x)=M(x)w(x) +  &\int_0^x K_1(x,y) w(y) d y \nonumber \\
 +  &\label{eqn:operator}\int_y^1 K_2(x,y) w(y) d y, 
\end{align}
where $M(x):[0,1] \rightarrow \mathbb{S}^n$ and $K_1(x,y),K_2(x,y): [0,1] \times [0,1] \rightarrow \mathbb{R}^{n \times n}$ are polynomial matrices and $w \in \lt^n$.

The operator $\mathcal{Z}:H^1(0,1) \rightarrow \R$ is parameterized using  $R_1\in \R$ and  polynomial $R_2$ as
\begin{equation}
\mathcal{Z}w:= R_1 w(1) + \int_0^1 R_2(y)w(y)dy.
\end{equation}

The operator $\mathcal{V}:\R \rightarrow \lt$ is parameterized using polynomial $G_0$ as
\begin{equation}
\left(\mathcal{V}r\right)(y):=G_0(y)r.
\end{equation}
\section{Positive operators and semi-separable polynomial kernels}\label{subsec:positivity}

In this paper, our results are expressed as optimization over a set of positive operators. To solve these optimization problems, we use positive matrices to parameterize a subset of positive operators on $(\lt)^n$ as described in~\cite{peetlmi}. We consider operators of the form
\begin{align}
(\mathcal{P}w)(x)=M(x)w(x) +  &\int_0^x K_1(x,y) w(y) d y \nonumber \\
 +  &\label{eqn:Poperator}\int_x^1 K_2(x,y) w(y) d y, 
\end{align}
where $M(x):[0,1] \rightarrow \mathbb{S}^n$ and $K_1(x,y),K_2(x,y): [0,1] \times [0,1] \rightarrow \mathbb{R}^{n \times n}$ are polynomial matrices and $w \in \lt^n$.  In~\cite{peet2008using}, we gave necessary and sufficient conditions for positivity of multiplier and integral operators of similar form using pointwise constraints on the functions $M$, $K_1$ and $K_2$. Recently, in~\cite{peetlmi}, these conditions was sharpened - See Theorem~\ref{thm:jointpos}.

\begin{theorem}\label{thm:jointpos}
Given $d_1, d_2, n \in \mathbb{N}$ and $\epsilon \in \mathbb{R}$, $\epsilon > 0$, let $Z_1(x) = Z_{n,d_1}(x)$ and $Z_2(x,y) = Z_{n,d_2}(x,y)$.
Suppose there exists a matrix $U$ such that
\[U=\left[\begin{array}{ccc} U_{11}-\epsilon I & U_{12} & U_{13} \\
\star & U_{22} & U_{23} \\
\star & \star & U_{33}
\end{array} \right] \ge 0,\] where the $U_{ij}$ are a partition of $U$. Let
\begin{align*}
&M(s) = Z_{1}(x)^T Q_{11}Z_{1}(x),\\
&K_1(x,y) = Z_{1}(x)^T U_{12}Z_{2}(x,y) + Z_{2}(y,x)U_{31}Z_1(y)\\
&+\int_0^y Z_{2}(\theta,x)^T U_{33}Z_{2}(\theta,y)d\theta  +\int_y^x Z_{2}(\theta,x)^T U_{32}Z_{2}(\theta,y)d\theta \\
&+\int_x^1 Z_{2}(\theta,x)^T U_{22}Z_{2}(\theta,y)d\theta,
\end{align*}
and
\begin{align*}
&K_2(x,y) = Z_{1}(x)^T U_{13}Z_2(x,y) + Z_{2}(y,x)U_{21}Z_{1}(y)\\
&+\int_0^x Z_{2}(\theta,x)^T U_{33}Z_{2}(\theta,y)d\theta+  \int_x^y Z_{2}(\theta,x)^T U_{23}Z_{2}(\theta,y)d\theta \\
& +\int_y^1 Z_{2}(\theta,x)^T U_{22}Z_{2}(\theta,y)d\theta .
\end{align*}

Then the operator $\mathcal{P}$, defined by Equation~\eqref{eqn:Poperator} is  self-adjoint and satisfies
\[\langle \mathcal{P}w,w \rangle \geq \epsilon \|w\|^2, \text{ for all } w \in \lt^n.\]

\end{theorem}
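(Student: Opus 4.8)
The plan is to reduce both assertions---self-adjointness and the coercivity bound---to algebraic properties of the single matrix $U$: its block symmetry $U_{ij}^T = U_{ji}$ will give self-adjointness, and its positive semidefiniteness will give the bound. The device that couples the two is a quadratic-form identity: I will exhibit a vector-valued map $w \mapsto \zeta_w$ for which
\begin{equation*}
\langle \mathcal{P}w, w\rangle = \int_0^1 \zeta_w(\theta)^T Q\, \zeta_w(\theta)\, d\theta,
\end{equation*}
where $Q$ is obtained from $U$ by replacing the corner block $U_{11} - \epsilon I$ with $Q_{11}=U_{11}$ (so $Q = U + \mathrm{diag}(\epsilon I, 0, 0)$), and
\begin{equation*}
\zeta_w(\theta) = \begin{bmatrix} Z_1(\theta) w(\theta) \\ \int_0^\theta Z_2(\theta, y) w(y)\, dy \\ \int_\theta^1 Z_2(\theta, y) w(y)\, dy \end{bmatrix}.
\end{equation*}
I denote the three blocks $\zeta_1, \zeta_2, \zeta_3$.

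First I would dispatch self-adjointness, which does not need the identity. Since $M(x) = Z_1(x)^T Q_{11} Z_1(x)$ with $Q_{11}$ symmetric, $M$ is pointwise symmetric, so it remains to verify $K_1(y,x)^T = K_2(x,y)$. Transposing $K_1(y,x)$ sends each boundary term $Z_1^T U_{1j} Z_2 \mapsto Z_2^T U_{j1} Z_1$ and each integral block $U_{ij} \mapsto U_{ij}^T = U_{ji}$, while the interchange $x \leftrightarrow y$ carries the three subintervals $(0,y),(y,x),(x,1)$ of the $K_1$ kernel onto $(0,x),(x,y),(y,1)$ of the $K_2$ kernel. Reading off term by term and using $U_{22}^T = U_{22}$, $U_{33}^T = U_{33}$, $U_{32}^T = U_{23}$, $U_{21}=U_{12}^T$, $U_{31}=U_{13}^T$ reproduces $K_2(x,y)$ exactly, so $\mathcal{P}$ is self-adjoint.

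The heart of the proof is the quadratic-form identity. I would substitute $M, K_1, K_2$ into $\langle \mathcal{P}w, w\rangle = \int_0^1 w(x)^T (\mathcal{P}w)(x)\, dx$ and expand $\int_0^1 \zeta_w^T Q \zeta_w\, d\theta$ into its nine blocks, then match them. The $(1,1)$ block gives the multiplier contribution $\int_0^1 (Z_1 w)^T Q_{11}(Z_1 w)$ directly from $M$. The blocks $(1,2),(1,3)$ and their transposes match the boundary terms: one recognizes $\int_0^x Z_2(x,y)w(y)\,dy = \zeta_2(x)$ and $\int_x^1 Z_2(x,y)w(y)\,dy = \zeta_3(x)$, and, after interchanging the $x$- and $y$-integrations in the two transpose terms, recovers $\zeta_3,\zeta_1$ and $\zeta_2,\zeta_1$ likewise. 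The delicate part is the four purely-integral blocks carrying $U_{22}, U_{23}, U_{32}, U_{33}$: each kernel already contains an inner $\theta$-integral, so inserting it into $\int_0^x(\cdot)\,dy$ or $\int_x^1(\cdot)\,dy$ and the outer $\int_0^1(\cdot)\,dx$ produces triple integrals over ordered sub-regions of the unit cube. I would apply Fubini to promote $\theta$ to the outermost variable and check that the pieces assemble into $\int_0^1 \zeta_i(\theta)^T U_{ij}\zeta_j(\theta)\,d\theta$: the $K_1$- and $K_2$-pieces carrying $U_{22}$ cover exactly the region $\{\max(x,y)\le\theta\}$ on which the $\zeta_2$ block is supported, those carrying $U_{33}$ cover $\{\theta\le\min(x,y)\}$, and the $U_{32},U_{23}$ pieces fill the two mixed wedges $\{y\le\theta\le x\}$ and $\{x\le\theta\le y\}$. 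This region accounting---together with the fact that the assignment of $U_{33}$ to $(0,y)$, $U_{32}$ to $(y,x)$, $U_{22}$ to $(x,1)$ in $K_1$ (and its mirror in $K_2$) is precisely what makes the wedges close up---is where I expect the real work, and the main obstacle, to lie.

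With the identity in hand, positivity is immediate. Since $U \ge 0$ we have $Q = U + \mathrm{diag}(\epsilon I, 0, 0) \ge \mathrm{diag}(\epsilon I, 0, 0)$, so pointwise $\zeta_w(\theta)^T Q\, \zeta_w(\theta) \ge \epsilon\,|\zeta_1(\theta)|^2 = \epsilon\,|Z_1(\theta)w(\theta)|^2 \ge \epsilon\,|w(\theta)|^2$, the last step because the constant monomial in $Z_1 = Z_{n,d_1}$ contributes the block $w(\theta)$. Integrating over $[0,1]$ yields $\langle \mathcal{P}w, w\rangle \ge \epsilon\|w\|^2$ for all $w \in \lt^n$, which completes the argument.
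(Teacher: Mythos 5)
Your proposal is correct, and it is essentially the standard argument for this result: note that the paper itself does not prove Theorem~\ref{thm:jointpos} but defers to~\cite{peetlmi}, where the theorem is established by precisely the factorization you describe, i.e., $\langle \mathcal{P}w,w\rangle=\int_0^1 \zeta_w(\theta)^T Q\,\zeta_w(\theta)\,d\theta$ with $Q=U+\mathrm{diag}(\epsilon I,0,0)$, followed by $\zeta_w^T Q\zeta_w\ge\epsilon|Z_1 w|^2\ge\epsilon|w|^2$. Your region accounting (the $U_{22}$ pieces assembling over $\{\theta\ge\max(x,y)\}$, the $U_{33}$ pieces over $\{\theta\le\min(x,y)\}$, and $U_{23},U_{32}$ filling the two wedges) and your transposition check for self-adjointness both match the kernel definitions exactly (modulo the paper's typographical omission of transposes on the $Z_2(y,x)U_{31}Z_1(y)$ and $Z_2(y,x)U_{21}Z_1(y)$ terms), so the quadratic-form identity holds as you claim.
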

\begin{proof}
See~\cite{peetlmi} for a proof.
\end{proof}

For convenience, we define the set of multipliers and kernels which satisfy Theorem~\ref{thm:jointpos}.
\begin{align*}
 \Xi_{\{d_1,d_2,\epsilon\}} =\{& M,K_1,K_2 \, : \, M,K_1,K_2 \text{ satisfy the conditions of} \\
 & \text{ Theorem~\ref{thm:jointpos} for $d_1,d_2,\epsilon$.}\}
\end{align*}

\section{Inverses of Positive Operators}

As is the case for the finite-dimensional equivalents of Operator Inequalities~\eqref{eqn:LOI_control} and~\eqref{eqn:LOI_observer}, reconstruction of the controller ($u=\mathcal{F}w$) and observer ($\dot{\hat w} = \mathcal{A}\hat w + \mathcal{O}(\hat y - y)$) from a feasible solution of the LOI requires inversion of the operator $\mathcal{P}$ as $\mathcal{F} = \mathcal{Z}\mathcal{P}^{-1}$ and $\mathcal{O} = \mathcal{P}^{-1}\mathcal{V}$. Thus, if we are to use the parametrization of positive operators described in Section~\ref{subsec:positivity}, then given such a positive operator, we must have a reliable way of finding its inverse. For operators without joint positivity, this procedure has been presented in~\cite{peet2009inverses} and expanded in~\cite{peet2013inverses}. In this subsection, we further expand these results by proposing a numerical method for constructing inverses for the class of operators considered in Subsection~\ref{subsec:positivity}.
 Specifically, for scalar valued polynomials $M(x)$, $K_1(x,\xi)$ and $K_2(x,\xi)$ which satisfy the conditions of Theorem~\ref{thm:jointpos}, we will provide a method to construct $\mathcal{P}^{-1}$.

Naturally, all positive operators in the sense of Theorem~\ref{thm:jointpos} are invertible. Our approach is to use a power series expansion with terms which are readily constructed from the matrices described in Theorem~\ref{thm:jointpos}. A closely related result for operators which consist of the identity plus a Volterra operator can be found in \cite[Sec 1.99]{shilov1974elementary}. Our case is slightly different in that we have a positive multiplier  and the Volterra operator is combined with its transpose. Note that the conditions of this theorem are very conservative. In our experience, the series converges whenever $\mathcal{P}$ is positive.

\begin{theorem}\label{thm:inv_op}
Suppose $\{M,K_1,K_2\} \in \Xi_{d_1,d_2,\epsilon}$ for some $d_1,d_2 \in \mathbb{N}$ and $\epsilon >0$. Additionally assume that
\[ |K_1(x,y)| < \epsilon \quad \text{and}\quad  |K_2(x,y)| < \epsilon\quad \text{for all } (x,y) \in [0,1] \times [0,1].\] Then for the operator $\mathcal{P}$ defined as $\mathcal{P}=\mathcal{T}+\mathcal{S}$, where
\begin{align*}
(\mathcal{T}w)(x)=&M(x)w(x)\text{ and } \\
(\mathcal{S}w)(x)=&\igzx K_1(x,y)w(y)dy+\igxo K_2(x,y)w(y)dy,
\end{align*} the inverse is given by
\[\mathcal{P}^{-1}=\left( \sum_{k=0}^\infty(-\mathcal{T}^{-1}\mathcal{S})^k  \right)\mathcal{T}^{-1},\] where
\[(\mathcal{T}^{-1}w)(x)=M(x)^{-1}w(x).\]
\end{theorem}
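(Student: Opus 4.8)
The plan is to recognize the right-hand side as a Neumann series and to justify its convergence by showing that the multiplier part $\mathcal{T}$ dominates the integral part $\mathcal{S}$ under the stated smallness hypothesis on $K_1$ and $K_2$. First I would establish that $\mathcal{T}$ is boundedly invertible. Since $\{M,K_1,K_2\}\in\Xi_{d_1,d_2,\epsilon}$, these data arise from the construction of Theorem~\ref{thm:jointpos} with a matrix $U$ whose leading block satisfies $U_{11}-\epsilon I\ge 0$; because $M(x)$ is the associated quadratic form $Z_1(x)^T U_{11} Z_1(x)$ and the monomial basis contains the constant $1$, this forces $M(x)\ge\epsilon>0$ for every $x\in[0,1]$. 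Hence $M$ is pointwise invertible, the multiplication operator $(\mathcal{T}^{-1}w)(x)=M(x)^{-1}w(x)$ is a well-defined bounded operator on $\lt$ with $\|\mathcal{T}^{-1}\|_{\mathcal{L}}\le 1/\epsilon$, and I may factor $\mathcal{P}=\mathcal{T}+\mathcal{S}=\mathcal{T}(\mathcal{I}+\mathcal{T}^{-1}\mathcal{S})$. Consequently $\mathcal{P}^{-1}=(\mathcal{I}+\mathcal{T}^{-1}\mathcal{S})^{-1}\mathcal{T}^{-1}$, so it remains only to expand $(\mathcal{I}+\mathcal{T}^{-1}\mathcal{S})^{-1}$ as the geometric operator series $\sum_{k=0}^{\infty}(-\mathcal{T}^{-1}\mathcal{S})^k$.

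The crux is therefore the norm estimate $\|\mathcal{T}^{-1}\mathcal{S}\|_{\mathcal{L}}<1$, which legitimizes the expansion. The operator $\mathcal{T}^{-1}\mathcal{S}$ is the integral operator with kernel $\tilde{K}(x,y)=M(x)^{-1}K_1(x,y)$ for $y\le x$ and $\tilde{K}(x,y)=M(x)^{-1}K_2(x,y)$ for $y>x$. I would bound its operator norm by its Hilbert--Schmidt norm, $\|\mathcal{T}^{-1}\mathcal{S}\|_{\mathcal{L}}^2 \le \int_0^1\int_0^1 |\tilde{K}(x,y)|^2\,dy\,dx$. Because $K_1$ and $K_2$ are polynomials, they attain their suprema on the compact square $[0,1]\times[0,1]$, so the hypotheses $|K_i|<\epsilon$ yield a single constant $c:=\max(\sup|K_1|,\sup|K_2|)/\epsilon<1$ with $|\tilde K(x,y)|\le M(x)^{-1}|K_i(x,y)|\le c$ everywhere. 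Since the domain has unit measure, $\|\mathcal{T}^{-1}\mathcal{S}\|_{\mathcal{L}}^2\le c^2<1$.

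With this contraction estimate, the standard Neumann-series result for bounded operators on a Banach space applies: because $\|-\mathcal{T}^{-1}\mathcal{S}\|_{\mathcal{L}}<1$, the operator $\mathcal{I}+\mathcal{T}^{-1}\mathcal{S}$ is boundedly invertible and its inverse equals the norm-convergent series $\sum_{k=0}^{\infty}(-\mathcal{T}^{-1}\mathcal{S})^k$. Substituting into $\mathcal{P}^{-1}=(\mathcal{I}+\mathcal{T}^{-1}\mathcal{S})^{-1}\mathcal{T}^{-1}$ produces exactly the claimed formula, and checking $\mathcal{P}\mathcal{P}^{-1}=\mathcal{P}^{-1}\mathcal{P}=\mathcal{I}$ follows immediately from the factorization.

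I expect the main obstacle to be the norm bound rather than the algebra: one must ensure the smallness condition on the kernels, paired with $M\ge\epsilon$, produces a genuinely strict contraction constant. The Hilbert--Schmidt estimate is the convenient device here, as it converts the pointwise bounds $|K_i|<\epsilon$ directly into the required $<1$ operator bound; the only care needed is to pass from the pointwise strict inequalities to a uniform constant $c<1$, which is legitimate precisely because the $K_i$ are continuous on a compact set. It is worth noting that this bound is quite conservative, consistent with the remark preceding the theorem that convergence is observed in practice whenever $\mathcal{P}$ is positive; the argument above establishes only the stated sufficient condition.
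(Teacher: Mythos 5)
Your proof is correct, and its skeleton coincides with the paper's: both factor $\mathcal{P}=\mathcal{T}(I+\mathcal{T}^{-1}\mathcal{S})$, reduce the problem to showing $\|\mathcal{T}^{-1}\mathcal{S}\|<1$, and then invoke the Neumann/small-gain series for $(I+\mathcal{T}^{-1}\mathcal{S})^{-1}$. Where you genuinely diverge is in the contraction estimate itself. The paper bounds the two factors separately, showing $\|\mathcal{T}^{-1}\|\le 1/\epsilon$ and $\|\mathcal{S}\|<\epsilon$ and concluding by submultiplicativity; both of those bounds are obtained from the quadratic-form characterization $\|\mathcal{R}\|=\sup_{\|w\|=1}|\ip{\mathcal{R}w}{w}|$, which is legitimate only because $\mathcal{T}^{-1}$ and $\mathcal{S}$ are self-adjoint (a fact the paper uses tacitly, via $K_1(x,y)=K_2(y,x)$ for members of $\Xi_{d_1,d_2,\epsilon}$). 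You instead bound the composite operator $\mathcal{T}^{-1}\mathcal{S}$ directly by its Hilbert--Schmidt norm, using the pointwise kernel bound $|M(x)^{-1}K_i(x,y)|\le c<1$. This buys two things: it requires no symmetry of the kernel at all, and it handles the strict-inequality issue cleanly, since you pass from $|K_i|<\epsilon$ pointwise to a uniform constant $c<1$ by compactness and continuity of the polynomial kernels --- a step the paper glosses over when it carries ``$<$'' through a supremum. The paper's route, in exchange, is more elementary in that it needs nothing beyond Cauchy--Schwarz on $\int_0^1|w|$, whereas yours invokes the standard fact that the operator norm is dominated by the Hilbert--Schmidt norm. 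Your justification that $M(x)\ge\epsilon$ (via $U_{11}\succeq\epsilon I$ and the constant monomial in $Z_1$) is also more explicit than the paper's bare assertion. Both arguments are sound and yield the same conclusion.
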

\begin{proof}
We begin by noting that since $M,K_1,K_2 \in  \Xi_{d_1,d_2,\epsilon}$, $M(x) \geq \epsilon >0$ for all $x \in [0,1]$. Thus
\[
(\mathcal{T}^{-1}w)(x)=M(x)^{-1}w(x), \text{ for all }w \in \lt.
\]
Consequently,
$\mathcal{P}=\mathcal{T}+\mathcal{S}=\mathcal{T}(I+\mathcal{T}^{-1}\mathcal{S})$ is well defined. The small-gain theorem states that  if $\|\mathcal{T}^{-1}\mathcal{S}\| <1 $ then $(I+\mathcal{T}^{-1}\mathcal{S})^{-1}$ exists, is bounded and is given by the convergent series
\[
(I+\mathcal{T}^{-1}\mathcal{S})^{-1}= \sum_{k=0}^\infty(-\mathcal{T}^{-1}\mathcal{S})^k.
\]
First we examine $\mathcal{T}^{-1}$.
\begin{align}
\|\mathcal{T}^{-1}\|&=\sup_{\|w\|=1}|\ip{\mathcal{T}^{-1}w}{w}|=\sup_{\|w\|=1} \left|\igzo \frac{w(x)^2}{M(x)}dx \right| \nonumber \\
&\label{eqn:inv_op_1}\leq \frac{1}{\epsilon} \sup_{\|w\|=1} \igzo w(x)^2 dx = \frac{1}{\epsilon}.
\end{align}
Now, looking at $\mathcal{S}$,
\begin{align*}
\|\mathcal{S}\|= \sup_{\|w\|=1} \bigg| &\igzo \igzx w(x)K_1(x,y)w(y) dy dx   \\
  + &\igzo \igxo w(x)K_2(x,y)w(y) dy dx  \bigg| \\
\leq  \sup_{\|w\|=1} \bigg( &\igzo \igzx |w(x)||K_1(x,y)||w(y)| dy dx   \\
 + &\igzo \igxo |w(x)||K_2(x,y)||w(y)| dy dx  \bigg) .
\end{align*}
By hypothesis we have that $|K_1(x,y)| < \epsilon$ and $|K_2(x,y)| < \epsilon$ and from the triangle, submultiplicative and Holder inequalities we have
\begin{align}
\|\mathcal{S}\| <\,  \epsilon \sup_{\|w\|=1}  \bigg( &\igzo \igzx |w(x)||w(y)| dy dx  \nonumber \\
  + &\igzo \igxo |w(x)||w(y)| dy dx \bigg) \nonumber \\
= \, \epsilon \sup_{\|w\|=1}  \bigg( &\igzo \igzo |w(x)||w(y)| dy dx  \bigg) \nonumber \\
= \, \epsilon \sup_{\|w\|=1}  \bigg( &\igzo |w(x)| dx  \bigg)^2 \nonumber \\
\leq \, \epsilon \sup_{\|w\|=1} \quad &\label{eqn:inv_op_2} \igzo (w(x))^2 dx = \epsilon.
\end{align} Thus from \eqref{eqn:inv_op_1} and \eqref{eqn:inv_op_2},
\[\|\mathcal{T}^{-1}\mathcal{S}\| \leq \|\mathcal{T}^{-1}\|\|\mathcal{S}\|<1.\] Hence $(I+\mathcal{T}^{-1}\mathcal{S})^{-1}= \sum_{k=0}^\infty(-\mathcal{T}^{-1}\mathcal{S})^k$, which implies
\[\mathcal{P}^{-1}=(\mathcal{T}+\mathcal{S})^{-1}=(I+\mathcal{T}^{-1}\mathcal{S})^{-1} \mathcal{T}^{-1}= \left( \sum_{k=0}^\infty(-\mathcal{T}^{-1}\mathcal{S})^k \right) \mathcal{T}^{-1}.\]
\end{proof}

For convenience, we define the set of multipliers and kernels which satisfy the conditions of both Theorem~\ref{thm:jointpos} and Theorem~\ref{thm:inv_op}.
\begin{align*}
\Omega_{d_1,d_2,\epsilon}=&\{M,K_1,K_2:M,K_1,K_2 \in \Xi_{d_1,d_2,\epsilon} \text{ and satisfy the} \\
& \text{ conditions of Theorem~\ref{thm:inv_op} for }d_1,d_2,\epsilon \}.
\end{align*}

To construct the inverse, then, we use the MuPAD symbolic engine of MATLAB to evaluate the series $\left(\sum^K_{k=0}\left(-\mathcal{T}^{-1}\mathcal{S}\right)^k\right)\mathcal{T}^{-1}$ for some finite $K$ where $K$ is chosen sufficiently large so that the series adequately approximates the inverse. In practice, we have found that only a few terms are required for convergence. To illustrate, in Figures~\ref{fig:opinverse1} and \ref{fig:opinverse2} we find some $(M,K_1,M_2) \in \Omega_{2,2,2}$ and find $\mathcal{P}_K^{-1}=\left(\sum^K_{k=0}\left(-\mathcal{T}^{-1}\mathcal{S}\right)^k\right)\mathcal{T}^{-1}$
for several values of $K$. Then we plot $\|w-\mathcal{P}\mathcal{P}_K^{-1}w\|_{L_2}$ and $\|w-\mathcal{P}_K^{-1}\mathcal{P}w\|_{L_2}$ as a function of $K$ for the arbitrarily chose function $w(x)=\sin(5 \pi x)/(x+1)$. In this case, $K=10$ yields norm error of order $\approx 10^{-12}$.

\begin{figure}[ht]
\centering
\subfigure[$\|w-\mathcal{P}\mathcal{P}_K^{-1}w\|_{L_2}$]{%
\includegraphics[scale=0.14]{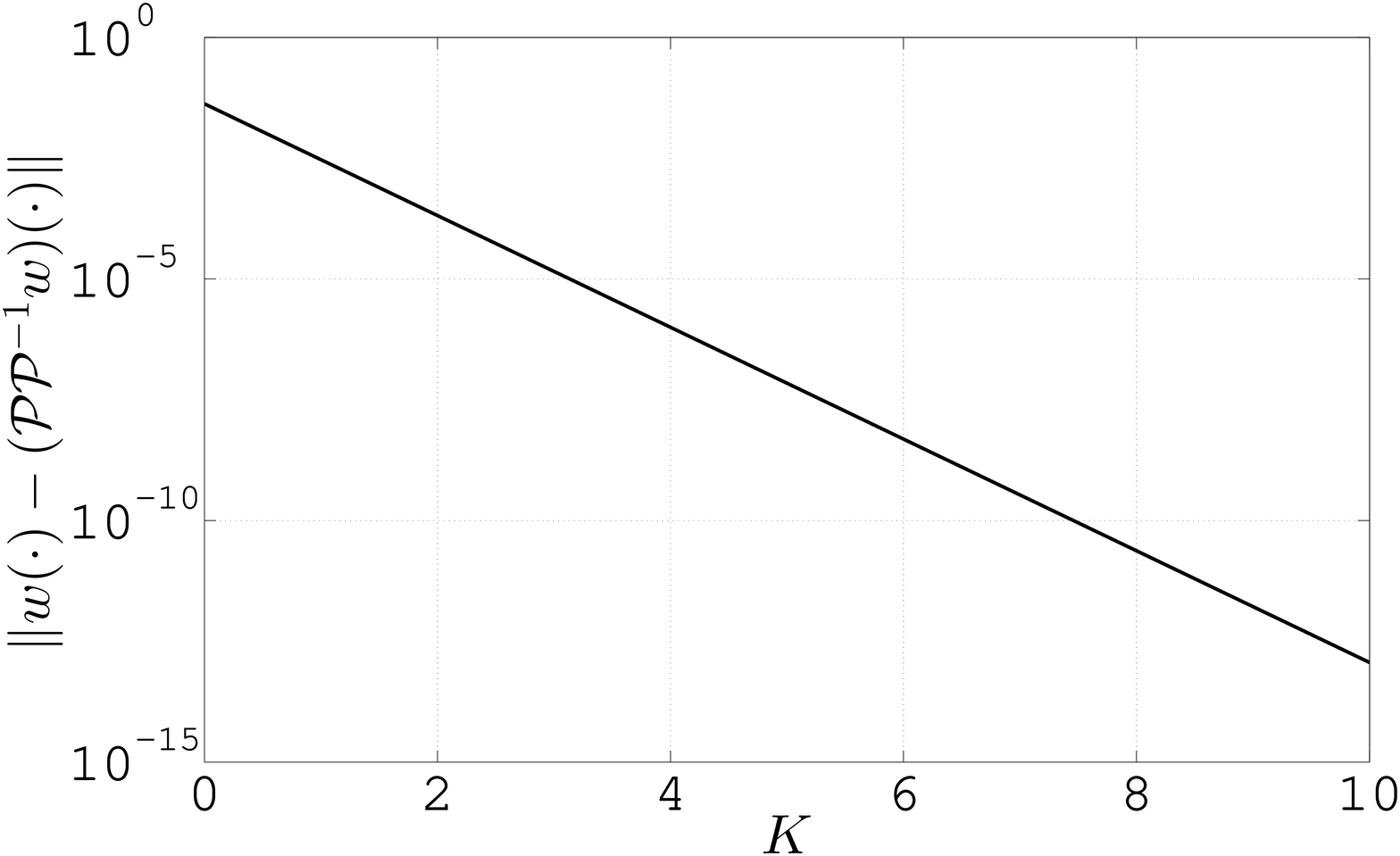}
\label{fig:opinverse1}}
\quad
\subfigure[$\|w-\mathcal{P}_K^{-1}\mathcal{P}w\|_{L_2}$ ]{%
\includegraphics[scale=0.14]{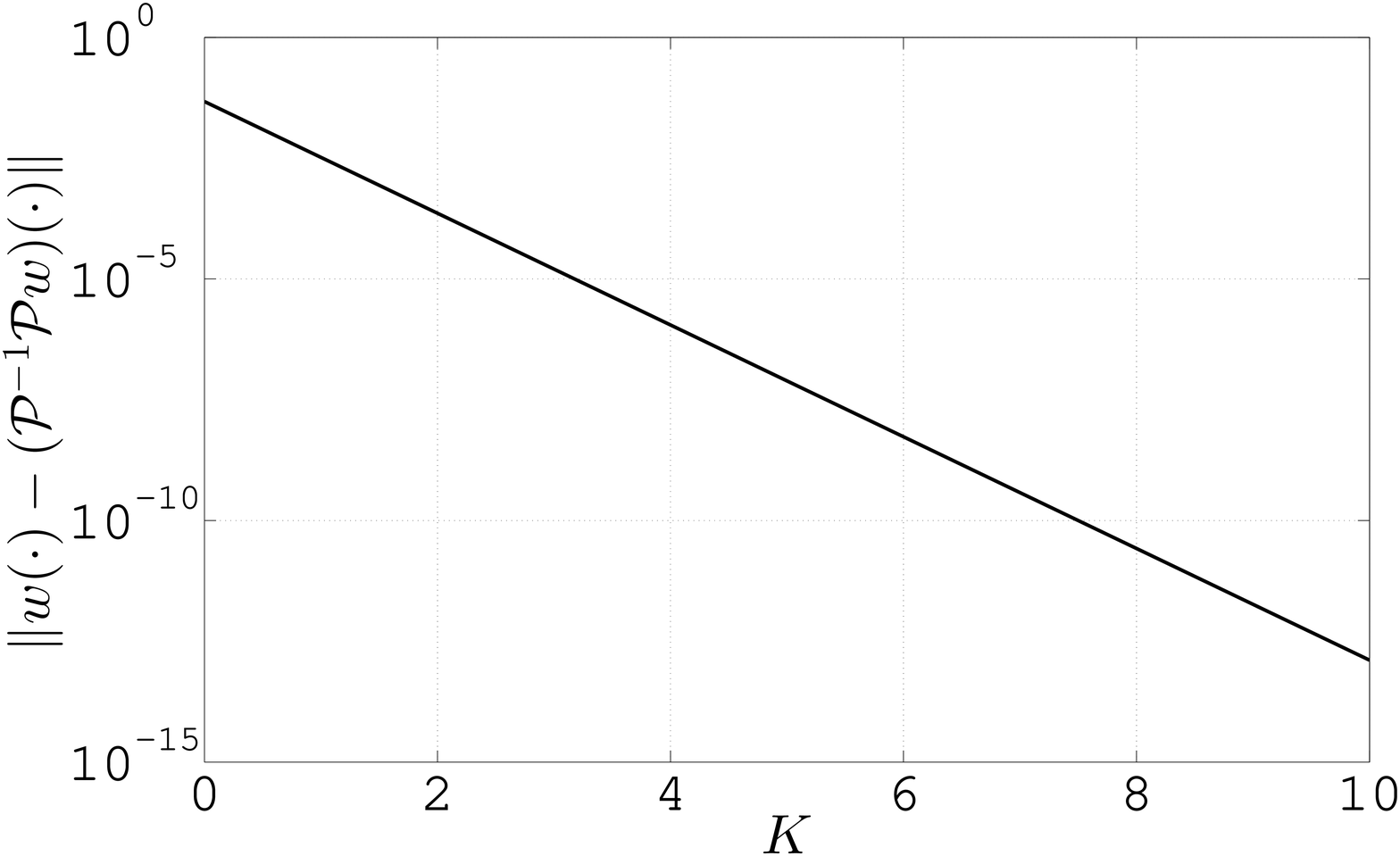}
\label{fig:opinverse2}}
\caption{$\|w-\mathcal{P}\mathcal{P}_K^{-1}w\|_{L_2}$ and $\|w-\mathcal{P}_K^{-1}\mathcal{P}w\|_{L_2}$  as a function of $K$.}
\end{figure}

\section{Stability Analysis}\label{sec:stability}

In this section, we address the simpler problem of stability of PDE systems of the Form~\eqref{eqn:PDE_form}. Roughly speaking, we are looking for a positive operator in the form of Equation~\eqref{eqn:Poperator} which satisfies the inequality
\[
\ip{\mathcal{A}x}{\mathcal{P}x}+\ip{x}{\mathcal{P}\mathcal{A}x} \le -\epsilon \norm{x}^2
\]
for all $x \in \mathcal{D}_0$ where the operator $\mathcal{A}$ is defined in Equation~\eqref{eqn:Aoperator}. The main result relies primarily on the following upper-bound - the proof of which is included in the appendix.
\begin{align}
  \ip{\mathcal{A}x}{\mathcal{P}x}+\ip{x}{\mathcal{P}\mathcal{A}x} \le &  \ip{\bmat{x(1)\\x}}{ \mathcal{Q} \bmat{x(1)\\x}}_{\R \times L_2} \nonumber \\
  &\label{eqn:stab_LOI}+\igzo x_s(0)Q_3(s)x(s)ds,
 \end{align}
where we define the operator $\mathcal{Q}$ as
\begin{align*}
(\mathcal{Q}y)(s) := Q_0(s)\bmat{y(1) \\ y(s)} +   &\int_0^s  \bmat{0 & 0\\ 0 & Q_1(s,t)}\bmat{y(1) \\ y(t)} dt \\
 + &\int_s^1  \bmat{0 & 0\\ 0 & Q_2(s,t)} \bmat{y(1) \\ y(t)} dt,
\end{align*}
where $\{Q_0,Q_1,Q_2,Q_3\}=\mathcal{M}_\epsilon(M,K_1,K_2)$ and where the linear operator $\mathcal{M}_\epsilon$ is defined as follows.
\begin{definition}\label{def:stab_map}
We say $\{Q_0,Q_1,Q_2,Q_3\}=\mathcal{M}_\epsilon(M,K_1,K_2)$ if the following hold
\begin{align}
 &Q_0(s)_{1,1}=\left[\left(b(1)-a_s(1) \right)M(1)-a(1)M_s(1) \right],\\
 &Q_0(s)_{1,2}=Q_0(s)_{2,1} \nonumber \\
 &=\left[\left(b(1)-a_s(1) \right)K_1(1,s)-a(1)K_{1,s}(1,s) \right], \\
&Q_0(s)_{2,2}=\pfs \left[ \pfs \left[ a(s)M(s)\right]-b(s)M(s) \right]+2M(s)c(s) \nonumber \\
&   +  \left[\pfs \left[2a(s)\left(K_1(s,t)-K_2(s,t) \right) \right] \right]_{t=s}-\frac{\pi^2}{2}\alpha \epsilon ,\\
&Q_1(s,t) \nonumber \\
&=\left(\pfs \left[\pfs \left[a(s)K_1(s,t)\right]-b(s)K_1(s,t) \right]+c(s)K_1(s,t) \right) \nonumber \\
&+\left(\pft \left[\pft \left[a(t)K_1(s,t)\right]-b(t)K_1(s,t) \right]+c(t)K_1(s,t) \right), \\ 
&Q_2(s,t)\nonumber \\
&=\left(\pfs \left[\pfs \left[a(s)K_2(s,t)\right]-b(s)K_2(s,t) \right]+c(s)K_2(s,t) \right) \nonumber \\
&+\left(\pft \left[\pft \left[a(t)K_2(s,t)\right]-b(t)K_2(s,t) \right]+c(t)K_2(s,t) \right)\text{ and }\\
&Q_3(s)=-2a(0)K_2(0,s),
\end{align} where $ K_{1,s}(1,s)=\left[K_{1,s}(s,t)|_{s=1} \right]_{t=s}$.
\end{definition}

\begin{theorem}\label{thm:stability}
Suppose that there exist $\{M,K_1,K_2\} \in \Xi_{d_1,d_2,\epsilon}$ and $\epsilon,\delta>0$ such that
\begin{align*}
 &\left\{-Q_{0_{2,2}}- 2\delta M,-Q_1- 2\delta K_1,-Q_2-  2\delta K_2\right\} \in \Xi_{d_1,d_2,0}, \\
 &Q_{0_{1,1}}=0, \quad Q_{0_{1,2}}=0 \quad \text{and} \quad K_2(0,x)=0,
\end{align*}
where $\{Q_0,Q_1,Q_2,Q_3\}=\mathcal{M}_\epsilon(M,K_1,K_2)$. Then, for any initial condition $w(0) \in \mathcal{D}_0$, the solution $w(x,t)$ of Equations~\eqref{eqn:PDE_form}-\eqref{eqn:PDE_form_BC} with $u(t)=0$ satisfies
\[
\|w(\cdot,t)\|_{L_2} \leq e^{-\delta t} \sqrt{\frac{\langle w_0,\mathcal{P}w_0 \rangle}{\epsilon}}, \quad t > 0,
\]
where
\[
(\mathcal{P}z)(x)=M(x)z(x) + \igzx K_1(x,\xi) z(\xi) d \xi + \igxo K_2(x,\xi) z(\xi) d \xi.
\]
\end{theorem}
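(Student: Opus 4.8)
The plan is to show that $V(w):=\langle w,\mathcal{P}w\rangle$ is a coercive Lyapunov function which decays exponentially along solutions. First I would invoke the hypothesis $\{M,K_1,K_2\}\in\Xi_{d_1,d_2,\epsilon}$ together with Theorem~\ref{thm:jointpos} to conclude that $\mathcal{P}$ is self-adjoint and coercive, so that $V(w)\ge\epsilon\|w\|^2$ for all $w\in\lt$. Since $w_0\in\mathcal{D}_0=\mathcal{D}(\mathcal{A})$ and $\mathcal{A}$ generates a strongly continuous semigroup on $\lt$, the solution with $u\equiv0$ is a strong solution satisfying $\dot w(t)=\mathcal{A}w(t)$ with $w(t)\in\mathcal{D}_0$ for all $t\ge0$; because $\mathcal{P}$ is a bounded self-adjoint operator that is constant in time, $t\mapsto V(w(t))$ is differentiable and the product rule gives
\[
\frac{d}{dt}V(w(t)) = \langle \mathcal{A}w,\mathcal{P}w\rangle + \langle w,\mathcal{P}\mathcal{A}w\rangle.
\]

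Next I would apply the upper bound~\eqref{eqn:stab_LOI} to this right-hand side and use the three equality hypotheses to discard every term not governed by a positivity certificate. The condition $K_2(0,s)=0$ forces $Q_3(s)=-2a(0)K_2(0,s)\equiv0$ by Definition~\ref{def:stab_map}, eliminating the integral $\igzo w_s(0)Q_3(s)w(s)\,ds$. The conditions $Q_{0_{1,1}}=0$ and $Q_{0_{1,2}}=Q_{0_{2,1}}=0$ annihilate the $w(1)^2$ term and the $w(1)$-cross terms in the quadratic form $\langle \bmat{w(1)\\w},\mathcal{Q}\bmat{w(1)\\w}\rangle_{\R\times\lt}$, so that what survives is the pure $\lt$ form $\langle w,\mathcal{Q}_0 w\rangle$, where $\mathcal{Q}_0$ is the multiplier-plus-semiseparable-kernel operator built from the data $(Q_{0_{2,2}},Q_1,Q_2)$. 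This operator has precisely the structure of $\mathcal{P}$ in Equation~\eqref{eqn:Poperator}, which is what allows Theorem~\ref{thm:jointpos} to be reused in the final step.

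The key step is then to observe that $-\mathcal{Q}_0-2\delta\mathcal{P}$ is exactly the operator associated with the triple $\{-Q_{0_{2,2}}-2\delta M,\,-Q_1-2\delta K_1,\,-Q_2-2\delta K_2\}$, which by hypothesis lies in $\Xi_{d_1,d_2,0}$; Theorem~\ref{thm:jointpos} then yields $\langle w,(-\mathcal{Q}_0-2\delta\mathcal{P})w\rangle\ge0$, i.e. $\langle w,\mathcal{Q}_0 w\rangle\le -2\delta V(w)$. Chaining the estimates gives $\frac{d}{dt}V(w(t))\le -2\delta V(w(t))$, and Grönwall's inequality yields $V(w(t))\le e^{-2\delta t}V(w_0)$. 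Combining this with coercivity, $\epsilon\|w(\cdot,t)\|^2\le V(w(t))$, produces $\|w(\cdot,t)\|\le e^{-\delta t}\sqrt{\langle w_0,\mathcal{P}w_0\rangle/\epsilon}$, as claimed. I expect the main obstacle to be the rigorous justification of differentiating $V$ along the solution, which rests on the well-posedness and strong-solution theory cited earlier and on the invariance $w(t)\in\mathcal{D}_0$, combined with careful bookkeeping to confirm that the three equality hypotheses cancel all boundary and $w(1)$ terms exactly, leaving a quadratic form to which Theorem~\ref{thm:jointpos} applies verbatim.
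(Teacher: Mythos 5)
Your proposal is correct and follows essentially the same route as the paper's proof: the Lyapunov function $V(w)=\langle w,\mathcal{P}w\rangle$, the upper bound~\eqref{eqn:stab_LOI} from the appendix, cancellation of the $Q_3$ and boundary/cross terms via the three equality hypotheses, positivity of the operator built from $\{-Q_{0_{2,2}}-2\delta M,-Q_1-2\delta K_1,-Q_2-2\delta K_2\}$ via Theorem~\ref{thm:jointpos} to get $\frac{d}{dt}V\le-2\delta V$, and then Gr\"onwall plus coercivity. Your added care about justifying differentiability of $t\mapsto V(w(t))$ and making explicit that the residual quadratic form has the same multiplier-plus-kernel structure as $\mathcal{P}$ (so Theorem~\ref{thm:jointpos} applies with margin $0$) only spells out steps the paper leaves implicit.
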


\begin{proof}
 Consider the following Lyapunov function $ V(w)= \langle w,\mathcal{P}w \rangle_{L_2}$.
 Taking the derivative along trajectories of the system, we have
 \begin{align*}
 \frac{d}{dt}V(w(t))&= \langle w_t(t),(\mathcal{P}w(t)) \rangle+\langle w(t),(\mathcal{P}w_t(t)) \rangle\\
&=\ip{\mathcal{A}w(t)}{\mathcal{P}w(t)}+\ip{w(t)}{\mathcal{P}\mathcal{A}w(t)}.
\end{align*} Since the initial condition $w(0) \in \mathcal{D}_0$, $w(t) \in \mathcal{D}_0$ exists for all $t\ge 0$. For $\mathcal{P}$ as defined in~\eqref{eqn:Poperator} and $\mathcal{M}_\epsilon$ as defined in Definition~\ref{def:stab_map}, it is shown in the Appendix that if $\{Q_0,Q_1,Q_2,Q_3\}=\mathcal{M}_\epsilon(M,K_1,K_2)$, then
\begin{align*}
\frac{d}{dt}V(w(t))=&\ip{\mathcal{A}w(t)}{\mathcal{P}w(t)}+\ip{w(t)}{\mathcal{P}\mathcal{A}w(t)} \\
\leq&  \ip{\bmat{w(1,t)\\w(\cdot,t)}}{ \mathcal{Q} \bmat{w(1,t)\\w(\cdot,t)}}_{\R \times \lt} \\
&+\igzo w_x(0,t)Q_3(x)w(x,t)dx.
\end{align*} Now, by definition, $Q_3(x)=-2a(0)K_2(0,x)$ and since by assumption $K_2(0,x)=0$, we have $Q_3=0$. Moreover, since $Q_{0_{1,1}}=0$ and $Q_{0_{1,2}}=Q_{0_{2,1}}=0$, we have
\begin{alignat*}{2}
&\frac{d}{dt}V(w(t)) \\
&\leq  \igzo w(x,t) \bigg( Q_0(x)_{2,2}(x)w(x,t) &&+\igzx Q_1(x,s)w(s,t)ds \\
 &  &&+\igxo Q_2(x,s)w(s,t)ds \bigg)dx.  
\end{alignat*}
Since
\[
 \left\{-Q_{0_{2,2}}- 2\delta M,-Q_1- 2\delta K_1,-Q_2-2\delta K_2\right\} \in \Xi_{d_1,d_2,0},
\]
we have that
\begin{align*}
&\igzo w(x,t) \left( Q_0(x)_{2,2}(x)w(x,t)+\igzx Q_1(x,s)w(s,t)ds \right. \\
& \left. +\igxo Q_2(x,s)w(s,t)ds \right)dx \leq -2\delta \ip{w(\cdot,t)}{\mathcal{P}w(\cdot,t)}.
\end{align*}
Hence we conclude that
\[\frac{d}{dt}  V(w(t)) \leq -2\delta V(w(t)), \quad t>0.\]
Integrating in time yields $\langle w(\cdot,t), (\mathcal{P}w) (\cdot,t) \rangle \leq e^{-2 \delta t} \langle w_0, \mathcal{P}w_0 \rangle$ and since, $\{M,K_1,K_2\} \in \Xi_{d_1,d_2,\epsilon}$, we have
 \[
 \epsilon \|w(\cdot,t)\|^2 \leq \langle w(\cdot,t), (\mathcal{P}w) (\cdot,t) \rangle \leq e^{-2 \delta t} \langle w_0, \mathcal{P}w_0 \rangle, \quad t>0
 \]
 which implies
 \[
 \|w(\cdot,t)\|_{L_2} \leq e^{-\delta t} \sqrt{\frac{\langle w_0,\mathcal{P}w_0 \rangle}{\epsilon}}, \quad t > 0.
 \]
 \end{proof}

 \subsection{Stability Analysis Numerical Results}\label{stabannum}

\paragraph*{Example 1} To illustrate the accuracy of the stability test, we perform several numerical experiments. For the first test, we check the conditions of Theorem~\ref{thm:stability} on a system whose stability properties are known a priori - $w_t= w_{xx}+\lambda w$. The system is defined by Equations~\eqref{eqn:PDE_form} -~\eqref{eqn:PDE_form_BC} with $u(t)=0$, $a(x)=1$, $b(x)=0$ and $c(x)=\lambda$, where $\lambda > 0$. The analytic solution to this PDE is given by
\[w(x,t)=\sum_{n=1}^\infty e^{\lambda_n t}\langle w_0, \phi_n \rangle \phi_n(x) , \] where $\lambda_n = \lambda - \frac{(2n-1)^2 \pi^2}{4}$, $\phi_n(x)=\sqrt{2} \sin \left( \frac{2n-1}{2} \pi x\right)$ and $w_0$ is the initial condition. Thus, one can see that the boundary-value problem is stable for $\lambda \in [0, \frac{\pi^2}{4})$.
\begin{table}{}
\begin{center}
    \begin{tabular}{l *{7}{c}}\hline \hline
 & $d=3$ & $4$ & $5$ & $6$ & $7$ \\ \hline
$\delta=0.1$ &  $0.55$ & $2.19$ & $2.35$ & $2.36$ & $2.36$ \\
$\delta=0.01$ & $0.59$ & $2.19$ & $2.448$ & $2.451$ & $2.452$ \\
$\delta=0.001$ & $0.59$ & $2.19$ & $2.457$ & $2.46$ & $2.461$
\end{tabular}
\end{center}
\caption{Maximum $\lambda$ as a function of polynomial degree, $d_1=d_2=d$ for $w_t= w_{xx}+\lambda w$ and different exponential decay rates $\delta$.}
\label{table_analysis}
\end{table}
 Table~\ref{table_analysis} presents the accuracy of Theorem~\ref{thm:stability} when applied to the problem of determination of the maximum stable $\lambda$. Note that an increase in the degree of polynomials $d=d_1=d_2$ increases the accuracy of the test in terms of the maximum detectable stable value of $\lambda$. For degree $7$, we can construct a Lyapunov function which proves stability for $\lambda=2.461$, with $\delta=0.001$, which is $99.74 \%$ of the stability margin $\frac{\pi^2}{4}=2.4674$.
\begin{figure}[ht]
\centering
\includegraphics[scale=0.15]{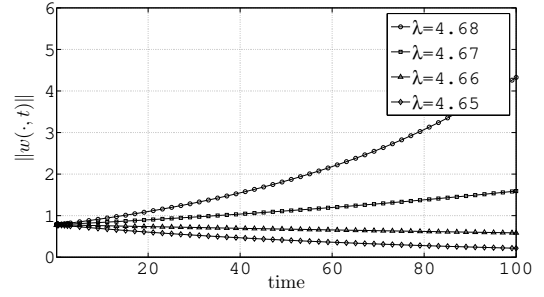}
\caption{State norm evolution for different $\lambda$ for Example 2.}
\label{fig:num_stability}
\end{figure}

\paragraph*{Example 2} For the second numerical test, we consider a completely arbitrary system defined by Equations~\eqref{eqn:PDE_form} -~\eqref{eqn:PDE_form_BC} with $u(t)=0$, $a(x)=x^3-x^2+2$, $b(x)=3x^2-2x$ and $c(x)= -0.5x^3 + 1.3x^2 - 1.5x + 0.7+\lambda$. Again, we seek to determine the maximum value of $\lambda$ for which the system is exponentially stable. The maximum stable $\lambda$ predicted by Theorem~\ref{thm:stability} is shown in Table~\ref{table_analysis_1} for $\epsilon =0.001$. For this system, there is no analytic solution and hence if we wish to determine the accuracy of our results, we must use finite difference methods to simulate the system and hence estimate the true maximum stable value of $\lambda$. This work is presented in Figure~\ref{fig:num_stability}, which suggests that the system is unstable for $\lambda >  4.66 $.
\begin{table}{}
\begin{center}
    \begin{tabular}{l *{7}{c}}\hline \hline
 & $d=3$ & $4$ & $5$ & $6$ & $7$ \\ \hline
$\delta=0.1$ &  $4.27$ & $4.51$ & $4.51$ & $4.52$ & $4.52$ \\
$\delta=0.01$ & $4.36$ & $4.60$ & $4.60$ & $4.61$ & $4.61$ \\
$\delta=0.001$ & $4.37$ & $4.61$ & $4.61$ & $4.62$ & $4.62$
\end{tabular}
\end{center}
\caption{Maximum stable $\lambda$ as a function of polynomial degree for Example 2.}
\label{table_analysis_1}
\end{table}
The maximum $\lambda$ for which we can prove the exponential stability for is $\lambda=4.62$, which is $99.14 \%$ of the predicted stability margin of $4.66$.
\begin{figure}[ht]
\centering
\subfigure[Illustration of $V(t) \geq \epsilon \|w(\cdot,t)\|^2$.]{%
\includegraphics[scale=0.22]{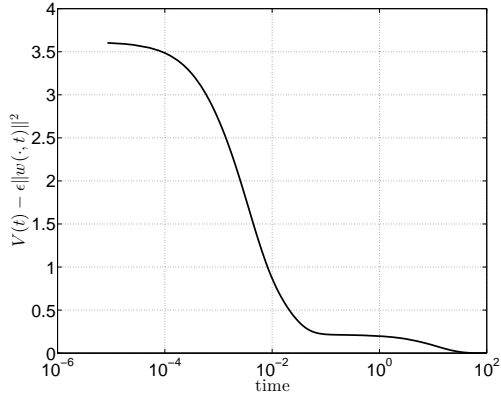}
\label{}}
\quad
\subfigure[Illustration of $\dot{V}(t) \leq -2 \delta V(t) $. ]{%
\includegraphics[scale=0.22]{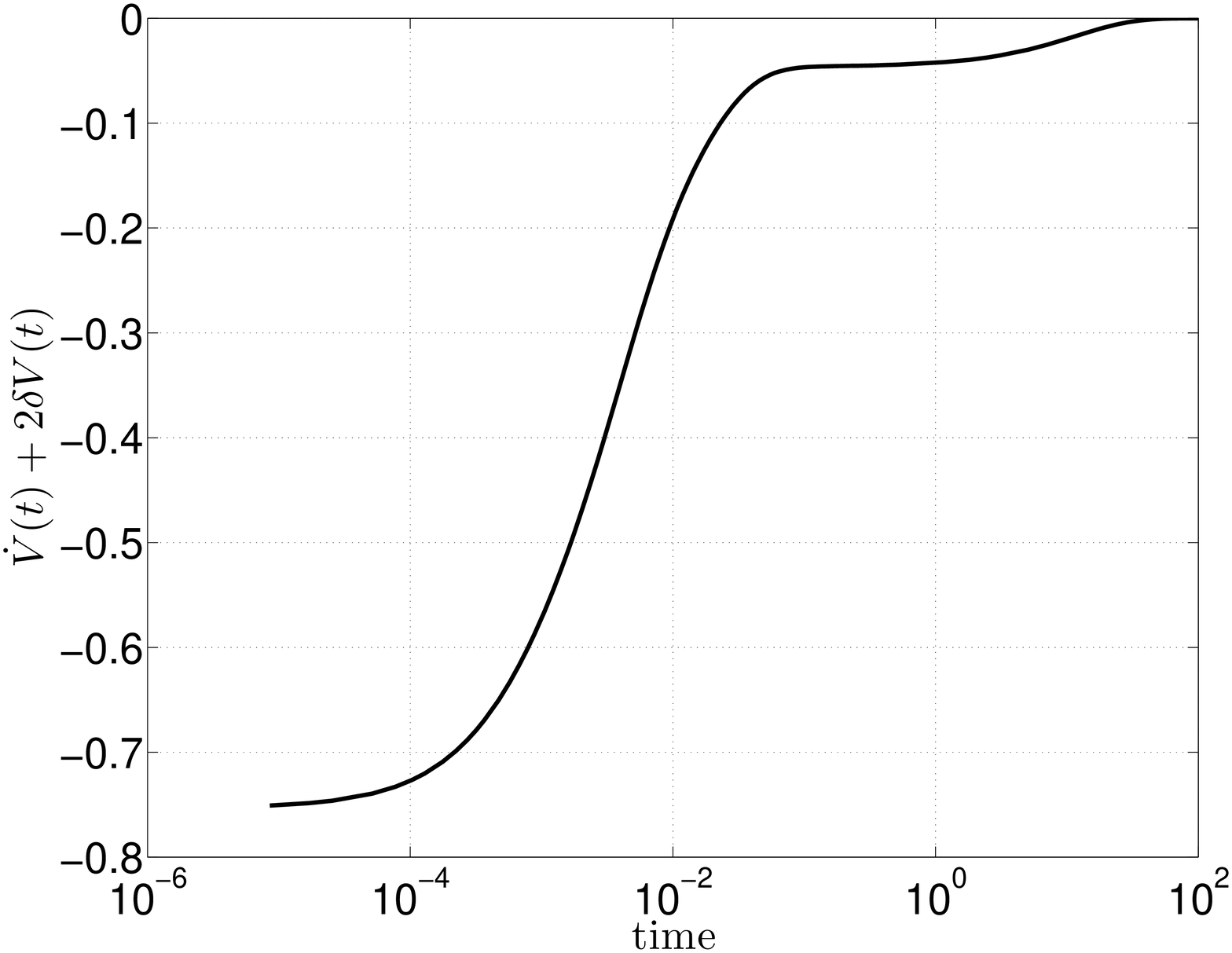}
\label{}}
\caption{Evolution of the Lyapunov functional and its time derivative for $a(x)=x^3-x^2+2$, $b(x)=3x^2-2x$ and $c(x)= -0.5x^3 + 1.3x^2 - 1.5x + 0.7+\lambda$ with $\lambda=4.62$ and $\delta = \epsilon =0.001$.}
\label{fig:V_analysis}
\end{figure}
Finally, although the Lyapunov function generated for this system is too complicated for print, Figure~\ref{fig:V_analysis} illustrates the evolution of this Lyapunov functional time derivative.

\paragraph*{Example 3} In this numerical example we wish to examine if we achieve any performance improvement in the stability analysis by including the integral kernels $K_1$ and $K_2$ in the Lyapunov functional operator $\mathcal{P}$. Thus, we apply Theorem~\ref{thm:stability}, with $K_1=K_2=0$, on the systems considered in Examples 1 and 2. Table~\ref{table_comparison_analysis} presents the results.
\begin{table}{}
\begin{center}
    \begin{tabular}{l *{7}{c}}\hline \hline
 & Example $1$ & Example $2$ \\ \hline
$\lambda$ &  $2.461$ & $4.38$ 
\end{tabular}
\end{center}
\caption{Maximum stable $\lambda$, for $K_1=K_2=0$, for Examples $1$ and $2$ for $\delta=0.001$.}
\label{table_comparison_analysis}
\end{table}
 Comparing Table~\ref{table_comparison_analysis} to Tables~\ref{table_analysis} and \ref{table_analysis_1} shows that for the system considered in Example 1, the integral kernels $K_1$ and $K_2$ do not have an effect. However, the inclusion of $K_1$ and $K_2$ increases the precision in predicting the stability margin for the system considered in Example 2. Thus, this numerical experiment indicates that for systems with distributed coefficients, including $K_1$ and $K_2$ produces sharper results for stability analysis.

\paragraph*{Example 4} For the final numerical test, we wish to examine the effectiveness of the presented method on a system with different boundary conditions. In particular, we consider $w_t=w_{xx}+\lambda w$ with Dirichlet boundary conditions $w(0,t)=w(1,t)=0$. The analytic solution of this PDE can be calculated as
\[w(x,t)=\sum_{n=1}^\infty e^{\lambda_n t} \ip{w_0}{\phi_n}\phi_n(x),\] where $\lambda_n=\lambda-n^2 \pi^2$, $\phi_n(x)=\sqrt{2}\sin (n \pi x)$ and $w_0(x)$ is the initial condition. Thus, the system is stable for $\lambda < \pi^2$. The conditions of Theorem~\ref{thm:stability} can be easily modified to analyze this system. 

\begin{table}{}
\begin{center}
    \begin{tabular}{l *{7}{c}}\hline \hline
 &  $d=4$ & $5$ & $6$ & $7$ & $8$\\ \hline
$\delta=0.1$ &   $1.4$ & $4.9$ & $7.59$ & $9.61$ & $9.7$ \\
$\delta=0.01$ &  $1.5$ & $5.1$ & $7.69$ & $9.63$ & $9.79$ \\
$\delta=0.001$ &  $1.8$ & $5.3$ & $7.99$ & $9.66$ & $9.82$
\end{tabular}
\end{center}
\caption{Maximum $\lambda$ as a function of polynomial degree, $d_1=d_2=d$ for $w_t= w_{xx}+\lambda w$ with Dirichlet boundary conditions and different exponential decay rates $\delta$.}
\label{table_analysis_dirichlet}
\end{table}
 Table~\ref{table_analysis_dirichlet} presents the accuracy of the modified Theorem~\ref{thm:stability} when applied to the problem of determination of the maximum stable $\lambda$ for $w_t=w_{xx}+\lambda w$ with $w(0,t)=w(1,t)=0$. For degree $8$, we can construct a Lyapunov function which proves stability for $\lambda=9.82$, with $\delta=0.001$, which is $99.49 \%$ of the stability margin $\pi^2$.

\section{State-Feedback Controller Synthesis}\label{sec:synthesis}

In this section, we use a dual version of the stability condition in Theorem~\ref{thm:stability} to synthesize full-state feedback controllers. Roughly speaking, the dual stability condition is expressed as the search for a positive operator, $\mathcal{P}$, of the form of Equation~\eqref{eqn:Poperator} which satisfies the inequality
\[
\ip{\mathcal{A}\mathcal{P}x}{x}+\ip{x}{\mathcal{A}\mathcal{P}x} \le -\epsilon \norm{x}^2.
\]
When we include an input of the form $w_x(1,t)=u(t)=\mathcal{F}w(t)$, this becomes
\[
\ip{(\mathcal{A}\mathcal{P}+\mathcal{B}\mathcal{Z}) x}{x}+\ip{x}{(\mathcal{A}\mathcal{P}+\mathcal{B}\mathcal{Z}) x} \le -\epsilon \norm{x}^2
\]
where $\mathcal{F} = \mathcal{Z}\mathcal{P}^{-1}$. Recall the dynamics in Equation~\eqref{eqn:PDE_form}:
\begin{equation}
\label{eqn:PDE_cont_form}
 w_t(x,t)=a(x)w_{xx}(x,t)+b(x)w_x(x,t)+c(x)w(x,t)
\end{equation} with
\begin{equation}
 \label{eqn:PDE_cont_form_BC}
 w(0,t)=0, \qquad w_x(1,t)=u(t)
\end{equation}
with initial condition $w(\cdot,0)=w_0\in \mcl{D}$.
As before our main result uses an upper-bound of the form
\begin{align}
  \ip{\mathcal{A}\mathcal{P}x}{x}+\ip{x}{\mathcal{A}\mathcal{P}x}  \le&  \ip{\bmat{x(1)\\x}}{ \mathcal{T} \bmat{x(1)\\x}}_{\R \times L_2}   \nonumber \\
  &+ x(0)(T_3 x(0)+T_4 x_s(0)),\label{eqn:dual_stab_LOI}
 \end{align}
where the operator $\mathcal{T}$ is defined as
\begin{align*}
(\mathcal{T}y)(s) := T_0(s)\bmat{y(1) \\ y(s)} +   &\int_0^s  \bmat{0 & 0 \\ 0 & T_1(s,t)} \bmat{y(1) \\ y(t)} dt  \\
+ &\int_s^1  \bmat{0 & 0 \\ 0 & T_2(s,t)} \bmat{y(1) \\ y(t)} dt,
\end{align*}
where $\{T_0,T_1,T_2,T_3,T_4\}=\mathcal{N}_\epsilon(M,K_1,K_2)$ and where the linear operator $\mathcal{N}_\epsilon$ is defined as follows.
\begin{definition}
We say $\{T_0,T_1,T_2,T_3,T_4\}=\mathcal{N}_\epsilon(M,K_1,K_2)$ if
\begin{alignat}{2}
T_0(s)_{1,1}&= &&\left[-a(1)M_s(1)+(b(1)-a_s(1))M(1) \right], \\
T_0(s)_{1,2}&=&&T_0(s)_{2,1}=-a(1)K_{1,s}(1,s), \\
T_0(s)_{2,2}&= &&\left[(a_{ss}(s)-b_s(s))M(s)+b(s)M_s(s)\right]+2 M(s)c(s) \nonumber \\
 & && +a(s)\left[M_{ss}(s)+2\pfs \left[K_1(s,t)-K_2(s,t) \right] \right]_{t=s} \nonumber \\
 & && -\frac{\pi^2}{2}\alpha \epsilon ,\\
T_1(s,t)&= &&a(s)K_{1,ss}(s,t)+b(s)K_{1,s}(s,t)+c(s)K_1(s,t) \nonumber \\
 & &&+a(t)K_{1,tt}(s,t)+b(t)K_{1,t}(s,t)+c(t)K_1(s,t), \\
T_2(s,t)&=&& a(s)K_{2,ss}(s,t)+b(s)K_{2,s}(s,t)+c(s)K_2(s,t) \nonumber \\
 & &&+a(t)K_{2,tt}(s,t)+b(t)K_{2,t}(s,t)+c(t)K_2(s,t),\\
T_3&=&&a_x(0)M(0)-a(0)M_x(0)-b(0)M(0)+\frac{\pi^2}{2}\alpha \epsilon \text{ and }\\
T_4&=&&-2 a(0)M(0).
\end{alignat}
\end{definition}
\begin{theorem}[Dual Stability]\label{thm:dualstability}
Suppose there exist $\{M,K_1,K_2\} \in \Omega_{d_1,d_2,\epsilon}$  and $\epsilon,\delta>0$ such that
\begin{align*}
&\left\{-T_{0_{2,2}}-2\delta M,-T_1-2\delta K_1, -T_2-2\delta K_2\right\} \in \Xi_{d_1,d_2,0},\\
&\quad   T_{0_{1,1}}=0,\quad T_{0_{1,2}}=0 \quad \text{and} \quad K_2(0,x)=0,
\end{align*} 
where
$\{T_0,T_1,T_2,T_3,T_4\}=\mathcal{N}_\epsilon(M,K_1,K_2)$.

Then any solution $w$ of~\eqref{eqn:PDE_cont_form} -~\eqref{eqn:PDE_cont_form_BC} with $u(t)=0$ and $w_0 \in \mathcal{D}_0$ satisfies
\[
\|w(\cdot,t) \| \leq \|P\|_\mathcal{L} e^{-\delta t} \sqrt{\frac{\langle w_0,P^{-1}w_0 \rangle}{\epsilon}},
\]
where
 \[
 (\mathcal{P}v)(x)=M(x)v(x) + \igzx K_1(x,\xi) v(\xi) d \xi + \igxo K_2(x,\xi) v(\xi) d \xi.
 \]
 \end{theorem}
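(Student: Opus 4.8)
The plan is to mirror the proof of Theorem~\ref{thm:stability}, but with the \emph{inverse} operator $\mathcal{P}^{-1}$ serving as the Lyapunov operator. Since $\{M,K_1,K_2\}\in\Omega_{d_1,d_2,\epsilon}$, Theorem~\ref{thm:inv_op} guarantees that $\mathcal{P}$ is invertible with bounded inverse, while Theorem~\ref{thm:jointpos} gives $\mathcal{P}\ge\epsilon I>0$; hence the candidate functional $V(w)=\langle w,\mathcal{P}^{-1}w\rangle$ is well defined, positive, and satisfies the lower bound $V(w)\ge\|\mathcal{P}\|_\mathcal{L}^{-1}\|w\|^2$ (from $\mathcal{P}\le\|\mathcal{P}\|_\mathcal{L}I$). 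First I would differentiate $V$ along trajectories of $w_t=\mathcal{A}w$ with $u=0$, using self-adjointness of $\mathcal{P}^{-1}$, to obtain $\frac{d}{dt}V=\langle\mathcal{A}w,\mathcal{P}^{-1}w\rangle+\langle\mathcal{P}^{-1}w,\mathcal{A}w\rangle$. Introducing the substitution $x=\mathcal{P}^{-1}w$ (equivalently $w=\mathcal{P}x$) collapses this to $\frac{d}{dt}V=\langle\mathcal{A}\mathcal{P}x,x\rangle+\langle x,\mathcal{A}\mathcal{P}x\rangle$, which is precisely the quantity controlled by the dual upper bound~\eqref{eqn:dual_stab_LOI}.

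Next I would apply~\eqref{eqn:dual_stab_LOI} with $\{T_0,T_1,T_2,T_3,T_4\}=\mathcal{N}_\epsilon(M,K_1,K_2)$ and show that all endpoint and coupling terms drop. The key regularity observation is that $x=\mathcal{P}^{-1}w\in\mathcal{D}_0$: since $w(\cdot,t)\in\mathcal{D}_0$ for all $t\ge0$ (the flow preserves $\mathcal{D}_0$), we have $(\mathcal{P}x)(0)=w(0,t)=0$, and evaluating $\mathcal{P}$ at $s=0$ gives $(\mathcal{P}x)(0)=M(0)x(0)+\int_0^1 K_2(0,\xi)x(\xi)\,d\xi$; the hypothesis $K_2(0,x)=0$ together with $M(0)\ge\epsilon>0$ then forces $x(0)=0$. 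Consequently the endpoint term $x(0)(T_3x(0)+T_4x_s(0))$ in~\eqref{eqn:dual_stab_LOI} vanishes, and the hypotheses $T_{0_{1,1}}=0$ and $T_{0_{1,2}}=0$ eliminate the coupling to $x(1)$ in the form $\langle(x(1),x),\mathcal{T}(x(1),x)\rangle$. What survives is the purely $L_2$ quadratic form built from the multiplier $T_{0_{2,2}}$ and the kernels $T_1,T_2$.

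It then remains to convert this residual form into a multiple of $V$. Because $\{-T_{0_{2,2}}-2\delta M,\,-T_1-2\delta K_1,\,-T_2-2\delta K_2\}\in\Xi_{d_1,d_2,0}$, Theorem~\ref{thm:jointpos} shows the operator assembled from $T_{0_{2,2}},T_1,T_2$ is dominated above by $-2\delta$ times the operator $\mathcal{P}$ assembled from $M,K_1,K_2$; evaluating at $x$ yields $\frac{d}{dt}V\le-2\delta\langle x,\mathcal{P}x\rangle=-2\delta\langle\mathcal{P}^{-1}w,w\rangle=-2\delta V(w)$. Integrating in time gives $\langle w(\cdot,t),\mathcal{P}^{-1}w(\cdot,t)\rangle\le e^{-2\delta t}\langle w_0,\mathcal{P}^{-1}w_0\rangle$. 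Finally I would combine the lower bound $V(w)\ge\|\mathcal{P}\|_\mathcal{L}^{-1}\|w\|^2$ with the crude estimate $\|\mathcal{P}\|_\mathcal{L}^{1/2}\le\|\mathcal{P}\|_\mathcal{L}/\sqrt{\epsilon}$ (valid because $\mathcal{P}\ge\epsilon I$ forces $\|\mathcal{P}\|_\mathcal{L}\ge\epsilon$) to land on the stated estimate $\|w(\cdot,t)\|\le\|\mathcal{P}\|_\mathcal{L}\,e^{-\delta t}\sqrt{\langle w_0,\mathcal{P}^{-1}w_0\rangle/\epsilon}$.

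The step I expect to be the main obstacle is the domain/regularity bookkeeping in the second paragraph: one must justify that $x=\mathcal{P}^{-1}w$ lies not merely in $\lt$ but in $\mathcal{D}_0$, so that $\mathcal{A}\mathcal{P}x=\mathcal{A}w$ is meaningful, the appendix bound~\eqref{eqn:dual_stab_LOI} legitimately applies, and the endpoint term genuinely vanishes. This hinges on $\mathcal{P}^{-1}$ preserving the boundary structure at $x=0$, for which the conditions $K_2(0,x)=0$ and $M(0)>0$ are precisely tailored. The heavier analytic content---the integration-by-parts derivation of~\eqref{eqn:dual_stab_LOI}---is deferred to the appendix and may be taken as given; modulo that, the result is a duality and change-of-variable argument strictly parallel to Theorem~\ref{thm:stability}.
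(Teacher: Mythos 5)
Your proposal is correct and follows essentially the same route as the paper: the paper proves this theorem as a byproduct of the controller-synthesis proof (Theorem~\ref{thm:synthesis}), which uses the same Lyapunov functional $V(w)=\langle w,\mathcal{P}^{-1}w\rangle$, the same substitution $y=\mathcal{P}^{-1}w$, the same appendix bound, and the same argument that $K_2(0,x)=0$ forces $y(0,t)=0$; your use of the hypotheses $T_{0_{1,1}}=T_{0_{1,2}}=0$ in place of the feedback gains $R_1,R_2$ is exactly the right specialization to $u=0$. The only cosmetic difference is the final norm conversion, where you use $\mathcal{P}^{-1}\ge\|\mathcal{P}\|_{\mathcal{L}}^{-1}I$ while the paper uses $\epsilon\|y\|^2\le\langle\mathcal{P}y,y\rangle$ followed by $\|w\|\le\|\mathcal{P}\|_{\mathcal{L}}\|y\|$; both yield the stated estimate.
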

The proof of Theorem~\ref{thm:dualstability} will be implied by the proof of Theorem~\ref{thm:synthesis}.

\begin{theorem}[Controller Synthesis]\label{thm:synthesis}
For $\epsilon,\delta>0$, $d_1, d_2 \in \N$, suppose there exist $\{M,K_1,K_2\} \in \Omega_{d_1,d_2,\epsilon}$ such that
\begin{align*}
&\left\{-T_{0_{2,2}}-2 \delta M,-T_1-2 \delta K_1, -W_2-2 \delta K_2\right\} \in \Xi_{d_1,d_2,0}\text{ and } \\
& \quad K_2(0,x)=0,
\end{align*}where
$\{T_0,T_1,T_2,T_3,T_4\}=\mathcal{N}_\epsilon(M,K_1,K_2)$.

Define the operator $\mathcal{F}:=\mathcal{Z}\mathcal{P}^{-1}$ where
\begin{align*}
&(\mathcal{Z}y)=R_1y(1)+\igzo R_2(x)y(x)dx,\\
& R_1=-\frac{T_{0_{1,1}}}{2 a(1)}, \quad R_2=-\frac{T_{0_{1,2}}}{a(1)}.
\end{align*}
Then any solution $w$ of~\eqref{eqn:PDE_cont_form} -~\eqref{eqn:PDE_cont_form_BC} with $u(t)=(\mathcal{F}w)(t)$ and $w_0 \in \mathcal{D}$ satisfies
\[
  \|w(\cdot,t) \| \leq \|\mathcal{P}\|_\mathcal{L} e^{-\delta t} \sqrt{\frac{\langle w_0,\mathcal{P}^{-1}w_0 \rangle}{\epsilon}}, \quad t>0.
\]
 \end{theorem}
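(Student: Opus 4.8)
The plan is to run the standard ``dual'' Lyapunov argument: use the functional $V(w)=\ip{w}{\mathcal{P}^{-1}w}$, perform the change of variables $v=\mathcal{P}^{-1}w$ so that $\frac{d}{dt}V$ becomes exactly the controller-synthesis operator inequality~\eqref{eqn:LOI_control}, and then bound that using the upper-bound~\eqref{eqn:dual_stab_LOI}. Since $\{M,K_1,K_2\}\in\Omega_{d_1,d_2,\epsilon}$, Theorem~\ref{thm:jointpos} gives $\mathcal{P}\ge\epsilon I>0$ and Theorem~\ref{thm:inv_op} guarantees that $\mathcal{P}^{-1}$ exists and is bounded, so $V$ is well defined. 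Differentiating along closed-loop trajectories of~\eqref{eqn:PDE_cont_form}--\eqref{eqn:PDE_cont_form_BC} with $u=(\mathcal{F}w)(t)=\mathcal{Z}\mathcal{P}^{-1}w$ and writing $w=\mathcal{P}v$, I would obtain
\[
\frac{d}{dt}V = \ip{(\mathcal{A}\mathcal{P}+\mathcal{B}\mathcal{Z})v}{v}+\ip{v}{(\mathcal{A}\mathcal{P}+\mathcal{B}\mathcal{Z})v}.
\]

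Next I would bound the $\mathcal{A}\mathcal{P}$ contribution with~\eqref{eqn:dual_stab_LOI} applied at $x=v$. Two simplifications are available. First, since $w(0)=0$ and $K_2(0,x)=0$, evaluating $\mathcal{P}v$ at $x=0$ reduces to $M(0)v(0)$, and $M(0)>0$ forces $v(0)=0$; hence the $x=0$ boundary terms $v(0)\big(T_3v(0)+T_4v_x(0)\big)$ vanish. Second, expanding $\ip{\bmat{v(1)\\ v}}{\mathcal{T}\bmat{v(1)\\ v}}$ splits it into the $x=1$ boundary part $T_{0_{1,1}}v(1)^2+2v(1)\igzo T_{0_{1,2}}(s)v(s)\,ds$ and the purely distributed $(2,2)$-block form $\ip{v}{\mathcal{T}_{2,2}v}$, where $\mathcal{T}_{2,2}$ is the multiplier-plus-kernel operator assembled from $T_{0_{2,2}},T_1,T_2$.

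The heart of the argument is to show that the feedback term $\ip{\mathcal{B}\mathcal{Z}v}{v}+\ip{v}{\mathcal{B}\mathcal{Z}v}$ cancels the residual $x=1$ boundary terms. Because actuation enters through the Neumann condition $(\mathcal{P}v)_x(1)=u=\mathcal{Z}v$, this contribution is an $x=1$ boundary term weighted by $a(1)$ and proportional to $v(1)\,\mathcal{Z}v=R_1v(1)^2+v(1)\igzo R_2(s)v(s)\,ds$; substituting $R_1=-\frac{T_{0_{1,1}}}{2a(1)}$ and $R_2=-\frac{T_{0_{1,2}}}{a(1)}$ is meant to annihilate the $T_{0_{1,1}},T_{0_{1,2}}$ terms exactly, leaving $\frac{d}{dt}V\le\ip{v}{\mathcal{T}_{2,2}v}$. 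I expect this bookkeeping to be the main obstacle: one must carefully track every $x=1$ boundary term generated by integrating $\mathcal{A}\mathcal{P}v$ against $v$, justify that $\mathcal{A}\mathcal{P}v$ is well-defined despite the nonhomogeneous boundary $(\mathcal{P}v)_x(1)=\mathcal{Z}v$, and match the $a(1)$-weighted actuation against $T_{0_{1,1}},T_{0_{1,2}}$. This is precisely the step in which the dropped hypotheses $T_{0_{1,1}}=T_{0_{1,2}}=0$ of the dual-stability theorem are replaced by the feedback.

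Finally, the hypothesis $\{-T_{0_{2,2}}-2\delta M,-T_1-2\delta K_1,-T_2-2\delta K_2\}\in\Xi_{d_1,d_2,0}$ says exactly that $-(\mathcal{T}_{2,2}+2\delta\mathcal{P})$ is a positive operator, i.e.\ $\ip{v}{\mathcal{T}_{2,2}v}\le-2\delta\ip{\mathcal{P}v}{v}=-2\delta V$, so $\frac{d}{dt}V\le-2\delta V$. Integrating gives $V(t)\le e^{-2\delta t}\ip{w_0}{\mathcal{P}^{-1}w_0}$. Combining $\norm{w}^2=\norm{\mathcal{P}v}^2\le\norm{\mathcal{P}}_\mathcal{L}^2\norm{v}^2$ with $\epsilon\norm{v}^2\le\ip{\mathcal{P}v}{v}=V$ yields $\norm{w}^2\le(\norm{\mathcal{P}}_\mathcal{L}^2/\epsilon)V$, and taking square roots produces the stated bound. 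The dual-stability claim of Theorem~\ref{thm:dualstability} then follows by specializing to $\mathcal{Z}=0$, which reinstates the requirements $T_{0_{1,1}}=T_{0_{1,2}}=0$.
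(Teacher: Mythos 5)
Your proposal is correct and follows essentially the same route as the paper's proof: the Lyapunov functional $V(w)=\ip{w}{\mathcal{P}^{-1}w}$, the substitution $v=\mathcal{P}^{-1}w$, the dual bound with the $x=1$ boundary terms retained (the paper's Corollary~\ref{cor:dual2}), cancellation of the $T_{0_{1,1}},T_{0_{1,2}}$ terms by the choice of $R_1,R_2$ via the Neumann condition $(\mathcal{P}v)_x(1)=\mathcal{Z}v$, vanishing of the $x=0$ terms from $K_2(0,x)=0$ and $M(0)>0$, and the final chain $\epsilon\|v\|^2\le V$, $\|w\|\le\|\mathcal{P}\|_{\mathcal{L}}\|v\|$. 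Your only cosmetic deviation is phrasing the input as an explicit $\mathcal{B}\mathcal{Z}$ term rather than substituting the closed-loop boundary condition directly into the boundary terms, which yields the same bookkeeping; your closing remark that Theorem~\ref{thm:dualstability} follows by setting $\mathcal{Z}=0$ also matches the paper.
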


\begin{proof}
Consider the following Lyapunov function $V(w)=\ip{w}{\mathcal{P}^{-1}w}$. Taking the time derivative along trajectories of the system, we have
\begin{align*}
\frac{d}{dt}V(w(t))=
 \ip{\mathcal{A}w(t)}{\mathcal{P}^{-1}w(t)}+\ip{\mathcal{P}^{-1}w(t)}{\mathcal{A}w(t)},
\end{align*} where we have used the fact that $\mathcal{P}=\mathcal{P}^\star$ implies $\mathcal{P}^{-1}= {\left(\mathcal{P}^\star\right)}^{-1}$.
Now let $y=\mathcal{P}^{-1}w$. Then $y \in P^{-1}\mathcal{D}$ and
\begin{align*}
\frac{d}{dt}V(w(t))
=& \ip{\mathcal{A}\mathcal{P}y(t)}{y(t)}+\ip{y(t)}{\mathcal{A}\mathcal{P}y(t)}.
\end{align*}

From Corollary~\ref{cor:dual2}, we have
\begin{align}
&\frac{d}{dt}V(w(t))=\ip{\mathcal{A}\mathcal{P}y(t)}{y(t)}+\ip{y(t)}{\mathcal{A}\mathcal{P}y(t)} \nonumber \\
&\leq \ip{\bmat{y(1,t)\\y(\cdot,t)}} {\mathcal{T} \bmat{y(1,t)\\y(\cdot,t)}}_{\R \times \lt} \nonumber \\
&\quad +y(0,t)(T_3y(0,t)+T_4 y_x(0,t)) + 2y(1,t)a(1)M_x(1)y(1,t) \nonumber \\
&\label{eqn:synth_thm_1}\quad + 2y(1,t)a(1) \left(\igzo K_{1,x}(1,x)y(x,t)dx+M(1)y_x(1,t) \right).
\end{align}

Since $w=\mathcal{P}y$, we have
\begin{align*}
w(x,t)=M(x)y(x,t)+&\igzx K_1(x,\xi)y(\xi,t) d\xi \\
+ &\igxo K_2(x,\xi)y(\xi,t) d\xi.
\end{align*}
Thus boundary condition $w(0,t)=0$ and the hypothesis $K_2(0,x)=0$ imply
\begin{align}
&\label{eqn:synth_left_cond} y(0,t)=0.
\end{align}

Similarly, $u(t)=w_x(1,t)$ implies

\[
u(t)=M(1)y_x(1,t)+M_x(1)y(1,t)+\igzo K_{1,x}(1,x)y(x,t)dx.
\]

Combining this with $u(t)=(\mathcal{F}w)(t)=(\mathcal{Z}\mathcal{P}^{-1}w)(t)=(\mathcal{Z}y)(t)$, we obtain

\begin{align}
&(R_1-M_x(1))y(1,t) + \igzo R_2(x)y(x,t)dx  \nonumber \\
&\label{eqn:synth_right_cond}=\igzo K_{1,x}(1,x)y(x,t)dx+M(1)y_x(1,t).
\end{align}

Substituting~\eqref{eqn:synth_left_cond} and~\eqref{eqn:synth_right_cond} into~\eqref{eqn:synth_thm_1} and using the definitions of $R_1$ and $R_2(x)$ produces

\begin{alignat*}{2}
&\frac{d}{dt}V(w(t))=\ip{\mathcal{A}\mathcal{P}y(t)}{y(t)}&&+\ip{y(t)}{\mathcal{A}\mathcal{P}y(t)}  \\
&\leq  \igzo y(x,t) \bigg(T_0(x)_{2,2}y(x,t)&&+\igzx T_1(x,s)y(s,t)ds  \\
&  &&+\igxo T_2(x,s)y(s,t)ds   \bigg) dx,
\end{alignat*} where we have used the fact that $R_1$ and $R_2(x)$ cancel the boundary terms $T_{0_{1,1}}$ and $T_{0_{1,2}}$.
From the Theorem hypotheses,
\[
\left\{-T_{0_{2,2}}- 2 \delta M,-T_1-2 \delta K_1, -T_2-2 \delta K_2\right\} \in \Xi_{d_1,d_2,0}.
\]
Thus we conclude that
\[\frac{d}{dt}V(w(t)) \leq -2 \delta V(w(t)) , \quad t >0.\]
Integrating in time yields
\begin{align*}
 V(w(t)) \leq e^{-2 \delta t} V(w(0)) \Rightarrow & \langle \mathcal{P}y(\cdot,t), y(\cdot,t) \rangle \\
  & \leq  e^{-2 \delta t} \langle w_0, \mathcal{P}^{-1}w_0 \rangle.
\end{align*}

Since $\{M,K_1,K_2\} \in \Xi_{d_1,d_2 \epsilon}$, $\epsilon \|y(\cdot,t)\|^2 \leq \langle \mathcal{P}y(\cdot,t), y(\cdot,t) \rangle$ and thus

\[
\|y(\cdot,t)\|   \leq e^{- \delta t} \sqrt{\frac{\langle w_0, \mathcal{P}^{-1}w_0 \rangle}{\epsilon}}.
\]
Hence,
\begin{align*}
\|w(\cdot,t)\|=\|(\mathcal{P}y)(\cdot,t)\| \leq &  \|\mathcal{P}\|_\mathcal{L} \|y(\cdot,t)\| \\
\le &  \|\mathcal{P}\|_\mathcal{L} e^{- \delta t} \sqrt{\frac{\langle w_0, \mathcal{P}^{-1}w_0 \rangle}{\epsilon}}.
\end{align*}
Which concludes the proof.
 \end{proof}

\subsection{Numerical Results for Full-State Feedback Synthesis}\label{contsynthnum}
\paragraph*{Example 5} In this example, we apply Theorem~\ref{thm:synthesis} to Example 2 from the section on stability analysis. Specifically, System~\eqref{eqn:PDE_cont_form} -~\eqref{eqn:PDE_cont_form_BC} with $a(x)=x^3-x^2+2$ and $b(x)=3x^2-2x$ and $c(x)= -0.5x^3 + 1.3x^2 - 1.5x + 0.7+\lambda$. Table \ref{table_synthesis_1} presents the maximum  $\lambda$, for which a controller can be constructed, as a function of degree $d=d_1=d_2$.
\begin{table}[h]
\begin{center}
    \begin{tabular}{l *{7}{c}}\hline \hline
  & $d=4$ & $5$ & $6$ & $7$ \\ \hline
$\lambda$ &  $15$ & $18$ & $25.9$ & $35$ \\
\end{tabular}
\end{center}
\caption{Maximum $\lambda$ under feedback as a function of polynomial degree, $d=d_1=d_2$ for Example 5 with $\delta=0.1$ and $\epsilon=0.001$.}
\label{table_synthesis_1}
\end{table}The maximum $\lambda$ for which we can construct an exponentially stabilizing controller for is $\lambda=35$, which is $651.1 \%$ increase over the stability margin of $4.66$ which was predicted using finite-difference methods in the previous section. A static controller of the form $u(t)=-kw(1,t)$, $k > 0$, can also be devised using Sturm-Liouville theory \cite[Chapter~5]{egorov1996spectral}. Such a static controller can stabilize the system for $\lambda < 17.58$. The presented methodology can stabilize the system for $\lambda=35$, which is an increase of $99.09 \%$ over $\lambda=17.58$.

Figure \ref{fig:control_control} illustrates the state evolution of the controlled system for $\lambda=35$, $\delta=0.1$ and $\epsilon=0.001$ and the required control effort. Finally, Figure \ref{fig:V_control} illustrates the Lyapunov functional and its time derivative for the controlled system. The initial condition is chosen arbitrarily as 
\begin{equation}\label{eqn:initial}
w_0(x)=e^{-\frac{(x-0.3)^2}{2 (0.07)^2}}-e^{-\frac{(x-0.7)^2}{2 (0.07)^2}}.
\end{equation}

\begin{figure}[h!]
\centering
\subfigure[State evolution]{%
\includegraphics[scale=0.22]{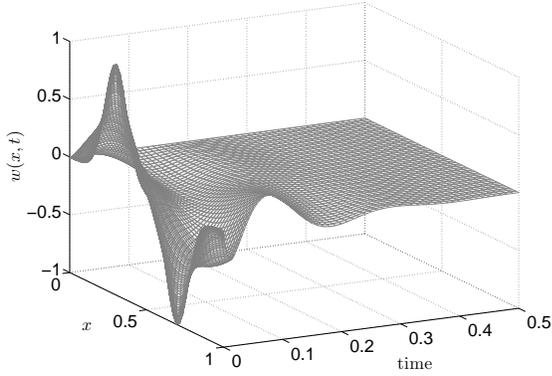}
\label{fig:subfigure1}}
\quad
\subfigure[Control effort $u(t)=(\mathcal{F}w)(t)=(\mathcal{Z}\mathcal{P}^{-1}w)(t)$]{%
\includegraphics[scale=0.22]{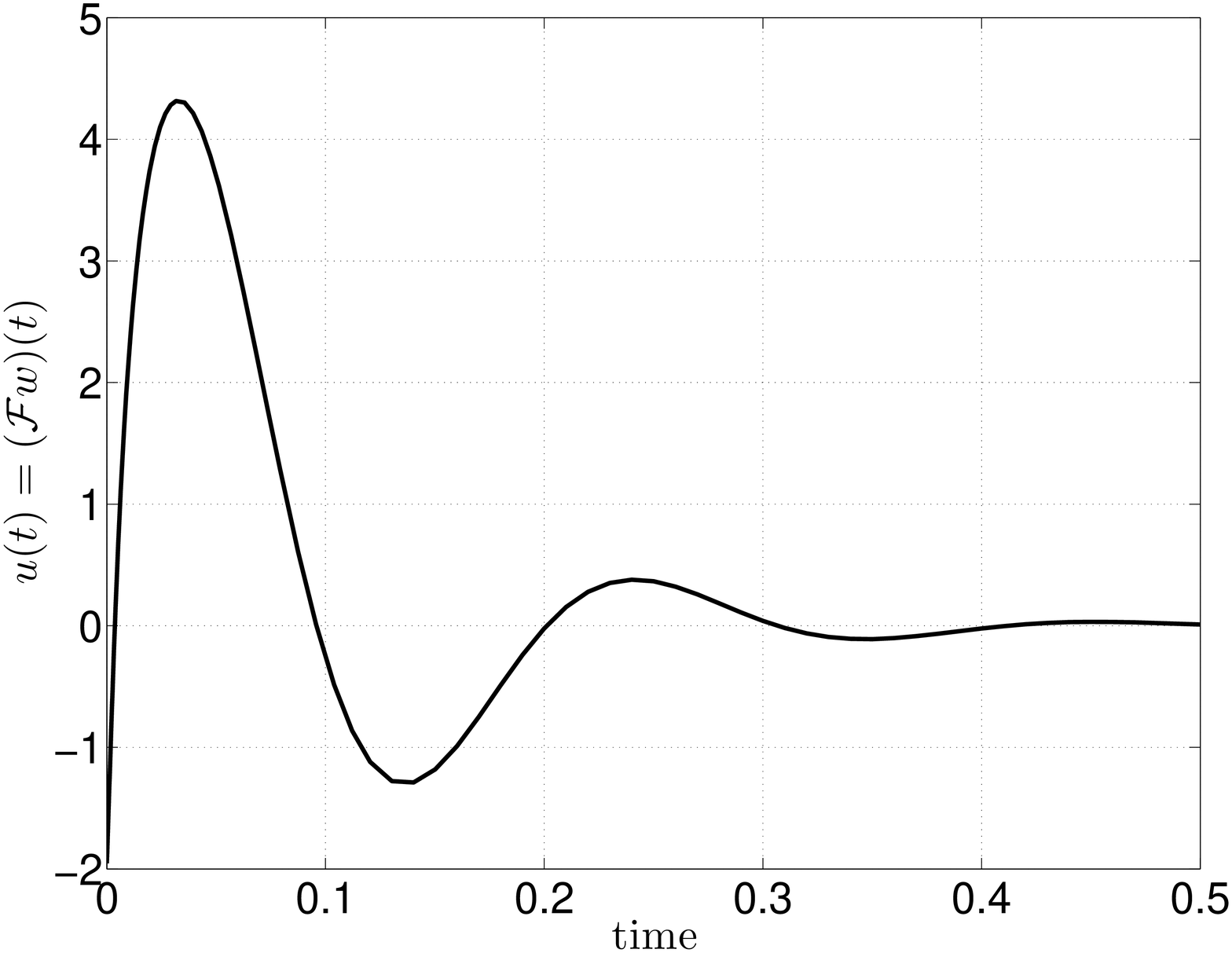}
\label{fig:subfigure2}}
\caption{Evolution of state and input for $a(x)=x^3-x^2+1$ and $b(x)=3x^2-2x$ and $c(x)= -0.5x^3 + 1.3x^2 - 1.5x + 0.7+\lambda$ with $\lambda=35$ and $\delta=0.1$ in Example 5.}
\label{fig:control_control}
\end{figure}

\begin{figure}[h!]
\centering
\subfigure[Illustration of $V(t) \geq \epsilon \|w(\cdot,t)\|^2$.]{%
\includegraphics[scale=0.22]{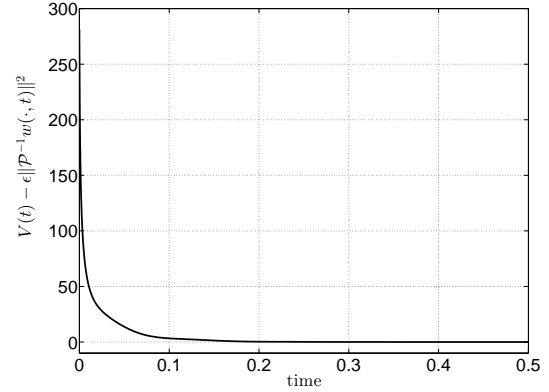}
\label{}}
\quad
\subfigure[Illustration of $\dot{V}(t) \leq -2 \delta V(t) $. ]{%
\includegraphics[scale=0.22]{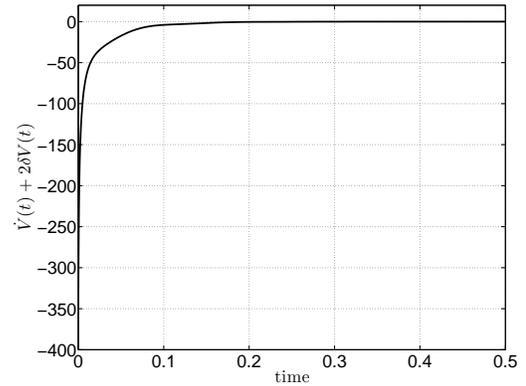}
\label{}}
\caption{Lyapunov functional and its derivative for the controlled system with $\delta = 0.1$ and $\epsilon =0.001$.}
\label{fig:V_control}
\end{figure}

\paragraph*{Example 6} In this example, we apply Theorem~\ref{thm:synthesis} to System~\eqref{eqn:PDE_cont_form} -~\eqref{eqn:PDE_cont_form_BC} with $a(x)=x^3-x^2+2$ and $b(x)=3x^2-2x$ and $c(x)= -0.5x^3 + 1.3x^2 - 1.5x + 6.7$. These values render the system unstable as verified by numerical simulation in Figure \ref{fig:control_auto}.
\begin{figure}[h!]
\centering
\subfigure[State evolution]{%
\includegraphics[scale=0.22]{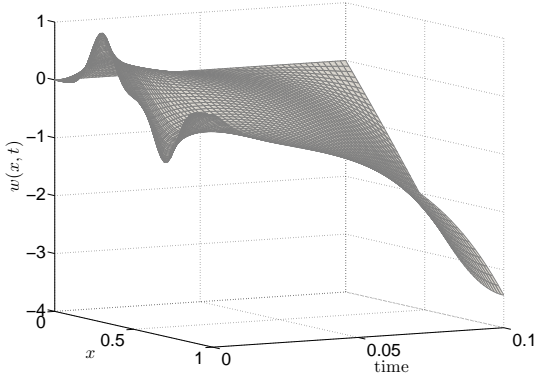}
\label{fig:subfigure1}}
\quad
\subfigure[State norm evolution]{%
\includegraphics[scale=0.22]{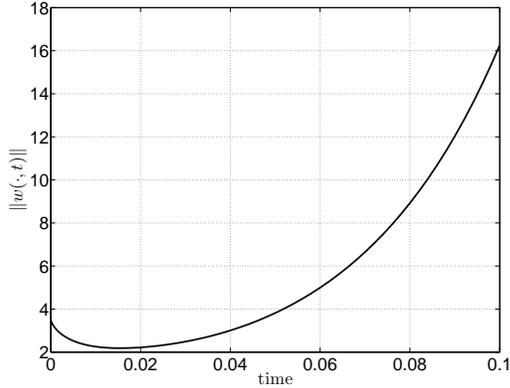}
\label{fig:subfigure2}}
\caption{Evolution of autonomous state for $a(x)=x^3-x^2+2$ and $b(x)=3x^2-2x$ and $c(x)= -0.5x^3 + 1.3x^2 - 1.5x + 6.7$.}
\label{fig:control_auto}
\end{figure}
We wish to find the maximum exponential decay rate $\delta$ for which we can construct a controller. Table \ref{table_synthesis_2} presents the results. 
\begin{table}[h]
\begin{center}
    \begin{tabular}{l *{7}{c}}\hline \hline
  & $d=4$ & $5$ & $6$ & $7$ \\ \hline
$\delta$ &  $1.7$ & $2.9$ & $20.9$ & $22$ \\
\end{tabular}
\end{center}
\caption{Maximum decay rate $\delta$ under feedback as a function of polynomial degree, $d=d_1=d_2$ for Example 5 with $\epsilon=0.001$.}
\label{table_synthesis_2}
\end{table}
As we see, the maximum $\delta$ for which we can construct an exponentially stabilizing controller is $\delta=22 $. This is an increase of $89.98 \%$ over $\delta=11.58$ for which an exponentially stabilizing controller can be constructed using Sturm-Liouville theory.

\paragraph*{Example 7} The presence of the integral kernels $K_1$ and $K_2$ in the Lyapunov functional operator $\mathcal{P}$ necessitates the inclusion of $R_2$ in the control operator $\mathcal{Z}$. As a result, if we wish to use this controller with only an output, instead of the complete state, available for design, an observer is required to be constructed. Thus, it is important to establish the performance improvement gained by the inclusion of $K_1$, $K_2$ and $R_2$. For this purpose, we compare the results obtained in Example 5 to the results obtained for a simple static output feedback based controller which is achieved by setting $K_1=K_2=0$ and $R_2=0$. We apply Theorem~\ref{thm:synthesis}, for $K_1=K_2=0$ and $R_2=0$, on the System considered in Example 5, that is, with $a(x)=x^3-x^2+2$ and $b(x)=3x^2-2x$ and $c(x)= -0.5x^3 + 1.3x^2 - 1.5x + 0.7+\lambda$. Table \ref{table_synthesis_3} presents the maximum  $\lambda$, for which a static controller can be constructed, as a function of degree $d=d_1=d_2$.
\begin{table}[h]
\begin{center}
    \begin{tabular}{l *{7}{c}}\hline \hline
  & $d=4$ & $5$ & $6$ & $7$ \\ \hline
$\lambda$ &  $9.1$ & $9.24$ & $9.24$ & $9.24$ \\
\end{tabular}
\end{center}
\caption{Maximum $\lambda$, for $K_1=K_2=0$ and $R_2=0$, as a function of polynomial degree, $d=d_1=d_2$ for Example 7 with $\delta=0.1$ and $\epsilon=0.001$.}
\label{table_synthesis_3}
\end{table} Upon comparing these results with the ones presented in Table~\ref{table_synthesis_1}, it is evident that the inclusion of $K_1$, $K_2$ and $R_2$ produces much sharper results.

\section{Observer Synthesis}\label{obsynth}
Recall the dynamics of System~\eqref{eqn:PDE_form}:
\begin{equation}
 w_t(x,t)=a(x)w_{xx}(x,t)+b(x)w_x(x,t)+c(x)w(x,t)
\end{equation}
with output $z(t)=\mathcal{C}w(t) = w(1,t)$. Because of the infinite-dimensional nature of PDEs of the Form~\eqref{eqn:PDE_form}, real-time measurement of the state is not possible. For this reason, any realistic approach to control must include an observer and must account for the error dynamics in the closed-loop response. The simplest form of observer for which it is possible to verify closed-loop stability is the Luenberger observer. In our version of the Luenberger observer, the dynamics of the state estimate, $\hat w$ are defined by operator $\mathcal{O}:\lt \rightarrow \lt$ and $O_1 \in \R$ as
\begin{align}
 \wh_t(x,t)=&a(x)\wh_{xx}(x,t)+b(x)\wh_x(x,t)+c(x)\wh(x,t) \nonumber \\
 &\label{eqn:obs_coupled_1}+(\mathcal{O}(\hat{z}(t)-z(t)))(x),
\end{align}
where $\hat z(t)=\mathcal{C}\wh(t)=\wh(1,t)$ with boundary conditions
\begin{align}
 &\label{eqn:obs_coupled_BC_1}\wh(0,t)=0, \qquad \wh_x(1,t)=O_1 (\hat{z}(t)-z(t))+u(t),
\end{align} 
where recall that in feedback $u(t)=\mathcal{F}\wh(t)$ and hence the state itself satisfies
\begin{equation}
 w_t(x,t)\label{eqn:obs_coupled_2}=a(x)w_{xx}(x,t)+b(x)w_x(x,t)+c(x)w(x,t)
\end{equation}
with output $z(t) = w(1,t)$ and boundary conditions \begin{align}
 &\label{eqn:obs_coupled_BC_2}w(0,t)=0, \qquad w_x(1,t)=u(t)=\mathcal{F}\wh(t).
\end{align}

A block-diagram of the coupled dynamics can be found in Figure~\ref{fig:schema}.
\begin{figure}[h!]
\centering
\begin{tikzpicture}[font=\Large,node distance = 1cm, auto,scale=0.6, every node/.style={transform shape}]
 
\node [block, label=System] (system) {\begin{align*} &w_t(x,t)=a(x)w_{xx}(x,t)+b(x)w_x(x,t)+c(x)w(x,t) \end{align*}
\[w(0,t)=0, \quad w_x(1,t)=u(t), \quad z(t)=w(1,t)\]};

\node [block, below=4cm of system] (observer) {\begin{align*} \wh_t(x,t)=& a(x)\wh_{xx}(x,t)+b(x)\wh_x(x,t)+c(x)\wh(x,t) \\
& +(\mathcal{O}(\hat{z}(t)-z(t)))(x) \end{align*}
\[\wh(0,t)=0,  \quad  \wh_x(1,t)=O_1(\hat{z}(t)-z(t))+u(t)\]};

\node [cloud,label={[label distance=-0.6cm]0:$-$},label={[label distance=-0.6cm]270:$+$}, below right=2.60cm and 0.3cm of system](add1) {};

\draw [->] (system) -- (8,0)
node [midway, sloped, above] {$z(t)$};
\draw [->] (7.5,0) |- (add1);

\node [smallblock, left=1cm of add1](L){$\mathcal{O}$};
\draw[->](add1)--(L);
\coordinate [left=2cm of L] (c1);
\coordinate [below=1.2cm of c1] (c2);
\draw [-] (L) to (c1);
\draw[->] (c1) to  (c1 |- observer.north);

\coordinate [below left=-0.5cm and 1cm of observer] (c3);
\draw [-] (observer.west|-c3) -- (c3);
\coordinate [below=1cm of c3](c4);
\draw [-] (c3) to (c4);
\draw [->] (c4) -| (add1)
node [near start, sloped, below] {$\hat{z}(t)$};

\node [smallblock, above left=-0.1cm and 2.5cm of L](U1){$O_1$};
\node [cloud,label={[label distance=-0.6cm]0:$+$},label={[label distance=-0.6cm]180:$+$}, left=2cm of U1](add2) {};
\coordinate [right=0.5cm of L](c5);
\draw [->] (c5) |- (U1)
node [near end, sloped, above] {$\hat{z}(t)-z(t)$};
\draw [->] (U1) to (add2);
\draw [->] (add2) to (observer.north -| add2);

\node [smallblock, below left=0.2cm and 1.5cm of add2](F){$\mathcal{F}$};
\coordinate [above left=-0.5cm and 1cm of observer] (c6);
\draw [->] (c6)|-(F)
node [near end, sloped, above] {$\wh(x,t)$};
\draw [-] (observer.west |- c6) to (c6);
\coordinate [left=1.5cm of add2] (c7);
\draw [->] (F) |- (add2)
node [near end, sloped, above] {$u(t)$};

\coordinate [left=2.5cm of system] (c8);
\draw [-] (c7) -| (c8);
\draw [->] (c8) -- (system.west)
node [midway, above] {$u(t)$};

\coordinate [above left=2.5cm and 1.5cm of observer] (A);
\coordinate [above right=2.5cm and 1.5cm of observer] (B);
\coordinate [below left=1.2cm and 1.5cm of observer] (C);
\coordinate [below right=1.2cm and 1.5cm of observer] (D);
\node [draw=black, dashed, ultra thick,label={[anchor=south]below:Observer Based Controller}, fit= (observer) (A) (B) (C) (D)]{};
\end{tikzpicture}
\caption{Schema representing the coupled dynamics \eqref{eqn:obs_coupled_1}-\eqref{eqn:obs_coupled_BC_2}}
\label{fig:schema}
\end{figure}
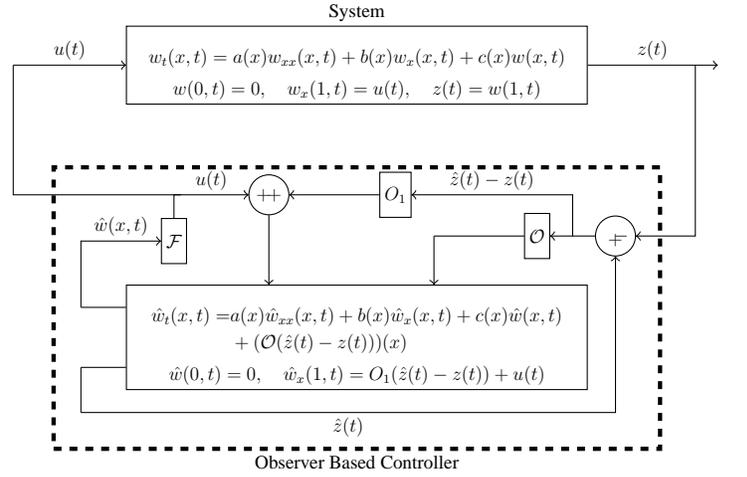

For the coupled dynamics, we consider the following coupled initial conditions
\begin{equation}
w(x,0)=w_0(x) \in H^2(0,1)\quad \text{and }\quad \hat w(x,0)=\wh_0(x) \in H^2(0,1),\label{eqn:obs_init_1}
\end{equation}
where we assume the initial conditions are consistent with the equations as .
\begin{align}
&w_0(0)=0, \quad \wh_0(0)=0, \quad w_{0,x}(1)=\mathcal{F}\wh_0, \quad \text{and} \nonumber \\
&\label{eqn:obs_init_2} \wh_{0,x}(1)=O_1(\wh_0(1)-w_0(1))+\mathcal{F}\wh_0.
\end{align}

In finite-dimensional systems, the Luenberger observer has the property that the eigenvalues of the closed-loop system is the union of the eigenvalues of $\mathcal{A}+\mathcal{LC}$ and the eigenvalues of $\mathcal{A}+\mathcal{BF}$. This implies that stability in closed-loop is equivalent to stability of these two subsystems.

In the following theorem, we prove the analogue of this result for System~\eqref{eqn:PDE_form} in feedback using the Luenberger observer. Our conditions have the form of the following Linear Operator Inequality.
\begin{align}
&\ip{(\mathcal{A}\mathcal{P}+\mathcal{B}\mathcal{Z}) x}{x}+\ip{x}{(\mathcal{A}\mathcal{P}+\mathcal{B}\mathcal{Z}) x} \le -\epsilon \norm{x}^2\\
&\ip{(\mathcal{P}\mathcal{A}+\mathcal{V}\mathcal{C})x}{\mathcal{P}x}\ip{x}{(\mathcal{P}\mathcal{A}+\mathcal{V}\mathcal{C}) x} \le -\epsilon \norm{x}^2
\end{align}

\begin{theorem}\label{observer_synth_colloc}
Suppose there exist
\[
\{M_c,K_{1,c},K_{2,c}\} \in \Omega_{d_1,d_2,\epsilon},\quad \{M_o,K_{1,o},K_{2,o}\} \in \Omega_{d_1,d_2,\epsilon} 
\] and $\epsilon, \delta >0$,
such that
\begin{align*}
& \left\{-T_{0_{2,2}}-2\delta M_c, -T_1-2\delta K_{1,c}, -T_2- 2\delta K_{2,c} \right\} \in \Xi_{d_1,d_2,0}, \\
& \left\{-Q_{0_{2,2}}-2\delta M_o, -Q_1-2\delta K_{1,o}, -Q_2-2\delta K_{2,o} \right\} \in \Xi_{d_1,d_2,0}, \\
 &K_{2,o}(0,x)=0 \quad \text{ and }\quad K_{2,c}(0,x)=0.
\end{align*}
where
\begin{alignat*}{2}
 &(P_c v)(x)= M_c(x)v(x) &&+ \igzx K_{1,c}(x,\xi) v(\xi) d \xi  \\
 &\qquad \qquad \qquad &&+ \igxo K_{2,c}(x,\xi) v(\xi) d \xi,\\
 &(P_o v)(x)= M_o(x)v(x) &&+ \igzx K_{1,o}(x,\xi) v(\xi) d \xi \\
 &\qquad \qquad \qquad &&+ \igxo K_{2,o}(x,\xi) v(\xi) d \xi,\\
&\{T_0,T_1,T_2,T_3,T_4\}=&&\mathcal{N}_\epsilon(M_c,K_{1,c},K_{2,c}),\text{ and }\\
&\{Q_0,Q_1,Q_2,Q_3\}=&&\mathcal{M}_\epsilon(M_o,K_{1,o},K_{2,o}).
\end{alignat*}

Let
\[
\mathcal{F}w:=\mathcal{Z}\mathcal{P}^{-1}_c w \qquad \text{and}\qquad \mathcal{O}w:=\mathcal{P}_o^{-1}\mathcal{V}w
\] 
where
\begin{align*}
&(\mathcal{Z}y)=R_1y(1)+\igzo R_2(x)y(x)dx, \quad R_1=-\frac{T_{0_{1,1}}}{2 a(1)},\\
&  R_2=-\frac{T_{0_{1,2}}}{a(1)},\quad O_1=\frac{1}{2a(1)M_o(1)}\left(a_x(1)M(1)+a(1)M_{o,x}(1) \right),
\end{align*} and
\begin{align*}
&\mathcal{V}r=\left[\left(a_x(1)-O_1 a(1)-b(1) \right)K_{1,o}(1,x)+a(1)K_{1,o,x}(1,x) \right]r.
\end{align*}
Then, for initial conditions $w_0$ and $\wh_0$ given in \eqref{eqn:obs_init_1}-\eqref{eqn:obs_init_2}, there exists a constant $\omega>0$ such that any solution $\{w,\hat w\}$ of~\eqref{eqn:obs_coupled_1}-\eqref{eqn:obs_coupled_BC_2} satisfies
\[
  \left\|\bmat{w(t) \\ \wh(t) }\right\|_{\lt} \leq \omega e^{-\delta t} \left\|\bmat{w_0 \\ \wh_0 }\right\|_{\lt}.
\]

\end{theorem}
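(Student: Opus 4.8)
The plan is to decouple the closed loop through the classical observer-error change of variables and then certify stability of the resulting block-triangular system with a weighted sum of the two Lyapunov functionals already constructed for the controller and the observer. First I would introduce the estimation error $\wt:=\wh-w$ and subtract~\eqref{eqn:obs_coupled_2} from~\eqref{eqn:obs_coupled_1}. Since $\hat z-z=\wh(1,t)-w(1,t)=\wt(1,t)$, the error obeys the \emph{autonomous} system $\wt_t=a\wt_{xx}+b\wt_x+c\wt+(\mcl{O}\,\wt(1,t))$ with $\wt(0,t)=0$ and $\wt_x(1,t)=O_1\wt(1,t)$, while the plant, with $u=\mcl{F}\wh=\mcl{F}w+\mcl{F}\wt$, retains the closed-loop controller structure of Theorem~\ref{thm:synthesis} forced by $\wt$. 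Thus $(w,\wt)$ solves a block-triangular system whose error subsystem does not see $w$, which is the infinite-dimensional analogue of the separation structure quoted before the theorem.

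For the error subsystem I would take $V_o(\wt)=\ip{\wt}{\mcl{P}_o\wt}$ and differentiate along the $\wt$-dynamics. Using self-adjointness of $\mcl{P}_o$ and the identity $\mcl{P}_o\mcl{O}=\mcl{P}_o\mcl{P}_o^{-1}\mcl{V}=\mcl{V}$, the derivative collapses to the observer LOI $\ip{(\mcl{P}_o\mcl{A}+\mcl{V}\mcl{C})\wt}{\wt}+\ip{\wt}{(\mcl{P}_o\mcl{A}+\mcl{V}\mcl{C})\wt}$. The gains $O_1$ and $\mcl{V}$ are chosen precisely so that the boundary contributions $Q_{0_{1,1}}$ and $Q_{0_{1,2}}$ produced by the nonhomogeneous Neumann condition $\wt_x(1)=O_1\wt(1)$ are cancelled, and $K_{2,o}(0,x)=0$ removes the $Q_3$ term, exactly as in the proof of Theorem~\ref{thm:stability}. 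The hypothesis $\{-Q_{0_{2,2}}-2\delta M_o,-Q_1-2\delta K_{1,o},-Q_2-2\delta K_{2,o}\}\in\Xi_{d_1,d_2,0}$ then gives $\dot V_o\le-2\delta V_o$, hence $\|\wt(\cdot,t)\|\le c_1 e^{-\delta t}\|\wt_0\|$.

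For the plant I would use $V_c(w)=\ip{w}{\mcl{P}_c^{-1}w}$ and repeat the computation of Theorem~\ref{thm:synthesis} with $y=\mcl{P}_c^{-1}w$. Because now $u=\mcl{F}\wh=\mcl{Z}y+\mcl{Z}\mcl{P}_c^{-1}\wt$, the boundary identity~\eqref{eqn:synth_right_cond} acquires one extra term; after $R_1,R_2$ cancel $T_{0_{1,1}},T_{0_{1,2}}$ exactly as before, the diagonal part still returns $-2\delta V_c$, but a residual \emph{cross term} of the form $\wt(1,t)\,\ell(y)$ survives, where $\ell$ combines the point value $y(1,t)$ (coming from the Neumann coupling $O_1\wt(1)$) with a bounded $L_2$ functional of $y$ (coming from the distributed injection $\mcl{O}\mcl{C}\wt$). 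I would then set $V=V_c+\kappa V_o$ with $\kappa>0$, so that $\dot V\le-2\delta V_c-2\delta\kappa V_o+\wt(1,t)\ell(y)$, split the cross term by Young's inequality into $|y(1,t)|^2$ and $|\wt(1,t)|^2$ parts, absorb the first into the plant dissipation and the second into $\kappa V_o$ by enlarging $\kappa$, and conclude $\dot V\le-2\delta V$. Integrating, and using $\epsilon\|\cdot\|^2\le V_c,V_o$, boundedness of $\mcl{P}_c,\mcl{P}_o$, and $\|\wh\|^2\le 2\|w\|^2+2\|\wt\|^2$, yields the stated estimate with a suitable $\omega>0$.

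The hard part will be the last step. Because $\mcl{B}$ and $\mcl{C}$ are unbounded (a Dirac functional at $x=1$), the coupling unavoidably enters through the point evaluations $\wt(1,t)$, $y(1,t)$ and the Neumann trace, which are \emph{not} controlled by the $L_2$-based functionals $V_c,V_o$. Making the domination rigorous requires extracting the $H^1$ (diffusion) coercivity hidden in the maps $\mcl{M}_\epsilon,\mcl{N}_\epsilon$ — equivalently, sharpening the $-2\delta V$ estimate to carry a spare $-c(\|y_x\|^2+\|\wt_x\|^2)$ margin, which dominates the point evaluations through the trace/Poincar\'e bound reflected by the $\frac{\pi^2}{2}\alpha\epsilon$ terms — and balancing this against the cross term via the choice of $\kappa$. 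Alternatively, one may invoke analyticity of the parabolic semigroup to bound the traces of $\wt(t)$ by a decaying $H^2$ estimate and close the argument as an exponentially stable cascade.
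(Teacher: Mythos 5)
Your proposal takes essentially the same route as the paper: the same change of variables $e=\wh-w$ yielding an autonomous error system, the same Lyapunov functional $\ip{e}{\mathcal{P}_o e}$ for that system, the same use of the identity $\mathcal{P}_o\mathcal{O}=\mathcal{V}$ together with the choices of $O_1$ and $\mathcal{V}$ to cancel the boundary terms produced by Corollary~\ref{cor:primal2}, and then the machinery of Theorem~\ref{thm:synthesis} with $\mathcal{P}_c$ for the plant driven by $u=\mathcal{F}w+\mathcal{F}e$. The only divergence is the final cascade step, and there the paper is far terser than you are: it simply asserts that, since $w_x(1,t)=\mathcal{F}w(t)+\mathcal{F}e(t)$, ``applying $\mathcal{P}_c$ to Theorem~\ref{thm:synthesis}'' yields exponential stability of the coupled system and hence the existence of $\omega$ --- no composite functional $V_c+\kappa V_o$, no Young's inequality, and no treatment of the cross terms at all. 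The difficulty you flag as ``the hard part'' --- that the interconnection enters through the point evaluation $e(1,t)$ and the Neumann trace, which are not dominated by the $L_2$-based functionals --- is genuine, but it is not resolved in the paper either; your proposal reproduces the paper's argument wherever the paper actually gives one, and is more explicit (and more candid) precisely at the step the paper dispatches in a single sentence.
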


\begin{proof}
We begin by defining the state estimation error $e(x,t)=\wh(x,t)-w(x,t)$, the dynamics of which are given by
 \begin{equation}
 \label{obs_error_eqn}
 e_t(x,t)=a(x)e_{xx}(x,t)+b(x)e_x(x,t)+c(x)e(x,t)+(\mathcal{O}e(1,t))(x)
\end{equation} with boundary conditions
\begin{equation}
 \label{obs_error_eqn_BCs}
 e(0,t)=0, \qquad e_x(1,t) = O_1 e(1,t).
\end{equation}
For the error system, we define the following Lyapunov functional
\[
V(e(t))= \langle e(t),\mathcal{P}_o e(t) \rangle.
\]
Taking the time derivative yields
 \begin{align*}
\frac{d}{dt}V(t)=& \langle e_t(t),\mathcal{P}_o e(t) \rangle + \langle e(t),\mathcal{P}_o e_t(t) \rangle \\
=& \langle \mathcal{A} e(t),\mathcal{P}_o e(t) \rangle + \langle e(t),\mathcal{P}_o\mathcal{A}e(t) \rangle + 2 \ip{\mathcal{O}e(1,t)}{\mathcal{P}_o e(t)}.
\end{align*}
Let $\{Q_0,Q_1,Q_2,Q_3\}=\mathcal{M}_\epsilon(M_o,K_{1,o},K_{2,o})$ then $K_{2,o}(0,x)=0$ implies $Q_3(x)=0$, and hence Corollary~\ref{cor:primal2} and $e_x(1,t)=O_1(\hat z(t)-z(t))=O_1 e(1,t)$ imply
 \begin{align}
 &\frac{d}{dt}V(t) \nonumber \\
 & \leq   \ip{\bmat{e(1,t) \\ e(\cdot,t)}}{ \mathcal{Q} \bmat{e(1,t) \\ e(\cdot,t)}}_{\R \times \lt} \nonumber \\
 & \qquad + 2 \igzo (\mathcal{O}e(1,t))(x) (\mathcal{P}_o e)(x,t)dx + 2 O_1 a(1)M_o(1)e^2(1,t) \nonumber \\
 &\label{eqn:Vdot_obs1}\qquad +  2e(1,t) O_1 \igzo a(1)K_{1,o}(1,x) e(x,t)dx .
 \end{align}
 where
\begin{align*}
(\mathcal{Q}y)(s) := &Q_0(s)\bmat{y(1) \\ y(s)} +   \igzs  \bmat{0 & 0 \\ 0 & Q_1(s,t)} \bmat{y(1) \\ y(t)} dt \\
&+ \igso  \bmat{0 & 0 \\ 0 & Q_2(s,t)} \bmat{y(1) \\ y(t)} dt.
\end{align*}
Now,
 \[
 \igzo (\mathcal{O}e(1,t))(x) (\mathcal{P}_oe)(x,t)dx = \igzo (\mathcal{P}_o \mathcal{O}e(1,t))(x) e(x,t)dx.
 \]
and  $\mathcal{V}=\mathcal{P}_o\mathcal{O}$ implies

 \begin{align}
 &\igzo (\mathcal{O}e(1,t))(x) (\mathcal{P}_0e)(x,t)dx =  \igzo (\mathcal{V}e(1,t))(x) e(x,t)dx \nonumber \\
&= e(1,t) \igzo \left( \vphantom{\pfs}(a_x(1)-O_1a(1)-b(1))K_{1,o}(1,x) \right. \nonumber \\
&\label{eqn:Vdot_obs_inter}\left. \vphantom{\pfs}\qquad \qquad  \qquad   +a(1)K_{1,o,x}(1,x) \right) e(x,t)dx.
\end{align} Substituting Equation~\eqref{eqn:Vdot_obs_inter} into \eqref{eqn:Vdot_obs1}, yields
\begin{alignat*}{2}
 &\frac{d}{dt}V(t) \\
 &\leq \igzo e(x,t) \bigg(Q_0(x)_{2,2}e(x,t)&&+\igzx Q_1(x,s)e(s,t)ds  \\
 & &&+\igxo Q_2(x,s)e(s,t)ds    \bigg)dx,
\end{alignat*} where, the boundary terms have been canceled due to $\mathcal{V}$ and $O_1$.

Since we have
\[
\left\{-Q_{0_{2,2}}-2\delta M_o, -Q_1-2\delta K_{1,o}, -Q_2-2\delta K_{2,o} \right\} \in \Xi_{d_1,d_2,0},
\]
we conclude that
\[
 \frac{d}{dt}V(t) \leq -2  \delta V(t), \quad t>0.
\]
Since $\{M_o,K_{1,o},K_{2,o}\} \in \Xi_{d_1,d_2,\epsilon}$, we have
 \[
 \|e(\cdot,t)\| \leq e^{-\delta t} \sqrt{\frac{\ip{e_0}{\mathcal{P}_o e_0}}{\epsilon}}, \quad t>0.
 \]
Now, since the state satisfies
\begin{equation}
 w_t(x,t)\label{eqn:obs_coupled_2}=a(x)w_{xx}(x,t)+b(x)w_x(x,t)+c(x)w(x,t)
\end{equation}
with  $w(0,t)=0$ and $w_x(1,t)=\mathcal{F}\wh(t)=\mathcal{F} w(t) + \mathcal{F}e(t)$ then by applying $\mathcal{P}_c$ to Theorem~\ref{thm:synthesis}, we conclude exponential stability of the coupled system. which implies the existence of an $\omega > 0$ such that
\[
  \left\|\bmat{w(t) \\ \wh(t) }\right\|_{\lt^2} \leq \omega e^{-\delta t} \left\|\bmat{w_0 \\ \wh_0 }\right\|_{\lt^2}.
\]
\end{proof}
 Note that in this theorem we have chosen a common positivity margin $\epsilon>0$ and exponential decay rate $\delta>0$ for the controller and observer synthesis conditions. In practice, it is customary to choose a faster decay rate for the observer than the controller. In this case, the conditions should be modified accordingly.

\subsection{Observer Synthesis Numerical Results}

\paragraph{Example 8} In this final section, we perform numerical experiments on the same example presented in Section~\ref{contsynthnum}. Specifically, we apply Theorem \ref{observer_synth_colloc} to System~\eqref{eqn:obs_coupled_1}-\eqref{eqn:obs_coupled_BC_2} with $a(x)=x^3-x^2+2$, $b(x)=3x^2-2x$, $c(x)= -0.5x^3 + 1.3x^2 - 1.5x + 0.7+\lambda$. The results presented here are simulations obtained using the observer based controller $u(t)=(\mathcal{F}\wh)(t)$ given by the conditions of Theorem~\ref{observer_synth_colloc} and obtained using the operator inversion technique described in Theorem~\ref{thm:inv_op}.

Table \ref{table_obs_2} presents the maximum  $\lambda$ for which an observer can be constructed using $\epsilon=0.001$ and $\delta=0.1$ as a function of degree $d=d_1=d_2$. Figure \ref{fig:obs} illustrates the evolution of the trajectory of the state estimate $\wh(x,t)$, system state $w(x,t)$ and the error state $e(x,t)=\wh(x,t)-w(x,t)$ for $\delta=0.1$. Finally, Figure~\ref{fig:V_observer} illustrates the Lyapunov functional defined in the proof of Theorem~\ref{observer_synth_colloc} for the error dynamics. The initial condition $w_0(x)$ is given in Equation~\eqref{eqn:initial} and for the observer we choose $\wh_0(x)=0$.

\begin{table}[h!]
\begin{center}
    \begin{tabular}{l *{7}{c}}\hline \hline
  & $d=4$ & $5$ & $6$ & $7$ \\ \hline
$\lambda$ &  $15$ & $18$ & $25.9$ & $35$ \\
\end{tabular}
\end{center}
\caption{Maximum $\lambda$ of the error system as a function of $d=d_1=d_2$ for Numerical Example 8.}
\label{table_obs_2}
\end{table}

\begin{figure}[ht]
\centering
\subfigure[Observer state evolution]{%
\includegraphics[scale=0.22]{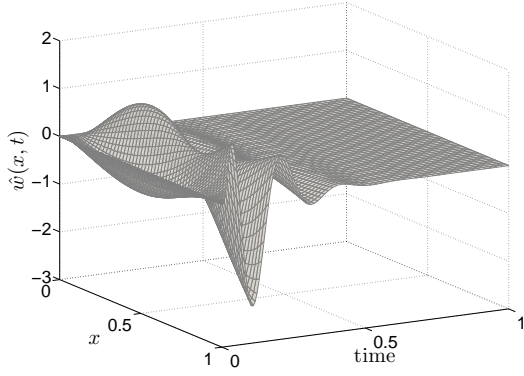}
\label{fig:subfigure1}}
\quad
\subfigure[System state evolution]{%
\includegraphics[scale=0.22]{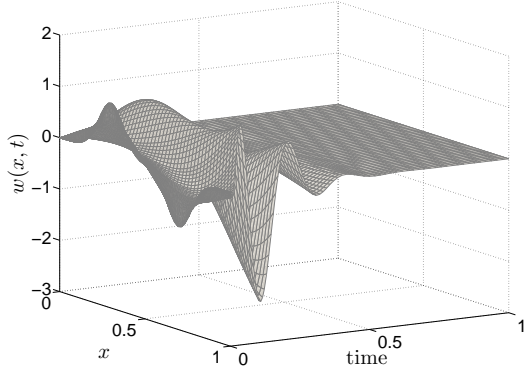}
\label{fig:subfigure1}}
\quad
\subfigure[Error in the estimate of the state]{%
\includegraphics[scale=0.22]{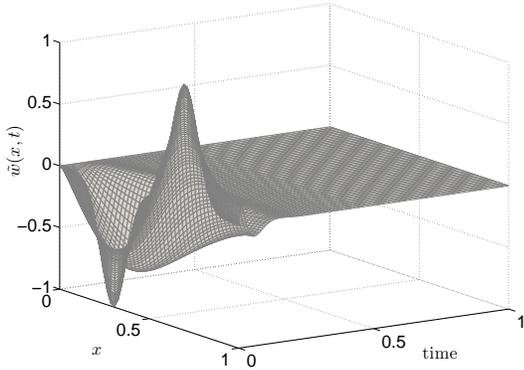}
\label{fig:subfigure1}}
\caption{Evolution of the observer state $\wh(x,t)$, the system state $w(x,t)$ and the error state $e(x,t)$.}
\label{fig:obs}
\end{figure}

\begin{figure}[ht]
\centering
\subfigure[Illustration of $V(t) \geq \epsilon \|\wt(\cdot,t)\|^2$.]{%
\includegraphics[scale=0.22]{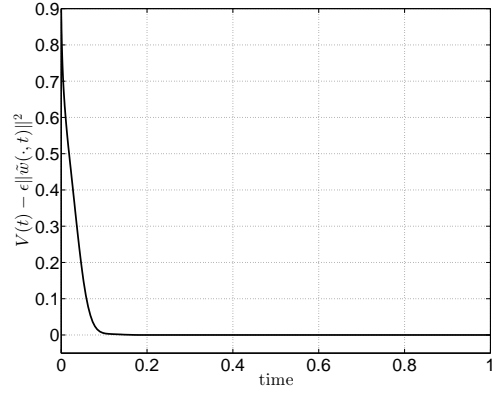}
\label{}}
\caption{Lyapunov functional for the error system with $\delta = 0.1$ and $ \epsilon =0.001$.}
\label{fig:V_observer}
\end{figure}

\section{Conclusion}

 In this paper, we have developed a algorithmic approach to the design of observer-based controllers for a general class of scalar parabolic partial differential equations using point measurements and feedback at the boundary. The results use the sum-of-squares methodology to parameterize a convex set of positive operators. In this way we cast the problem of controller synthesis in the framework of convex optimization - a class of optimization problems for which we have efficient numerical algorithms. Furthermore, we have applied our results to a difficult numerical example in order to demonstrate that our results are practical and effective. The reader is invited to contemplate natural extensions of this work including the development of methods for control of coupled partial-differential equations. We also speculate that the conditions as stated are conservative and may be improved through a generalization of the Wirtinger inequality, or some other method for relating state parameters $w, w_s, w_{ss}, w(1)$, etc. Additional possibilities include application to other classes of PDE system.

%

\appendix\label{sec:appendix}
First, recall the variation  of Wirtinger's Inequality.
\begin{lemma}[\cite{hardy1952inequalities},\cite{krstic2008boundary}]
\label{lem:wirtinger}
 let $z \in H^2(0,1)$ be a scalar function. Then
  \[\int_0^1 (z(s))^2ds \leq (z(0))^2+ \frac{4}{\pi^2} \int_0^1 (z_s(s))^2 ds.\]
\end{lemma}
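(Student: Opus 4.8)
The plan is to recognise the inequality as the sharp Poincar\'e--Wirtinger estimate for the Sturm--Liouville operator $-\tfrac{d^2}{ds^2}$ on $(0,1)$ under the boundary behaviour $z(0)=0$, $z_s(1)=0$, whose least eigenvalue is $\mu_1=\tfrac{\pi^2}{4}$ with eigenfunction $\sin(\tfrac{\pi}{2}s)$; the constant $\tfrac{4}{\pi^2}=\mu_1^{-1}$ is then exactly the sharp Rayleigh-quotient bound, which simultaneously fixes the constant and identifies the extremal profile. I would prove it by a completing-the-square (Riccati) argument rather than by invoking spectral theory directly.

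Concretely, I would introduce the comparison function $\rho(s)=\tfrac{\pi}{2}\tan\!\big(\tfrac{\pi}{2}(1-s)\big)$, which solves the Riccati identity $\rho_s+\rho^2=-\tfrac{\pi^2}{4}$ and satisfies $\rho(1)=0$. Expanding the manifestly nonnegative quantity and integrating the cross term by parts gives
\[
0\le\int_0^1\big(z_s-\rho z\big)^2\,ds=\int_0^1 z_s^2\,ds-\big[\rho z^2\big]_0^1+\int_0^1(\rho_s+\rho^2)z^2\,ds,
\]
and substituting the Riccati identity collapses the last integral to $-\tfrac{\pi^2}{4}\int_0^1 z^2\,ds$. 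Rearranging and using $\rho(1)=0$ then yields
\[
\frac{\pi^2}{4}\int_0^1 z^2\,ds\le\int_0^1 z_s^2\,ds+\rho(0)\,z(0)^2,
\]
after which dividing by $\tfrac{\pi^2}{4}$ would deliver the claimed bound provided the boundary coefficient is $\rho(0)=\tfrac{\pi^2}{4}$.

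The step I expect to be the main obstacle is exactly controlling this boundary contribution $\rho(0)z(0)^2$. A Riccati trajectory \emph{can} achieve $\rho(0)=\tfrac{\pi^2}{4}$, namely $\rho(s)=\tfrac{\pi}{2}\tan\!\big(\tfrac{\pi}{2}(d-s)\big)$ with $d=\tfrac{2}{\pi}\arctan\tfrac{\pi}{2}\in(0,1)$; but then $\rho(1)<0$, so the endpoint at $s=1$ contributes a nonnegative multiple of $z(1)^2$ that cannot be discarded. Conversely, normalising by $\rho(1)=0$ to remove the $z(1)^2$ term forces $\rho(s)\sim 1/s$ as $s\to0^+$, i.e. $\rho(0)=+\infty$. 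Thus a single multiplier cannot simultaneously kill the $s=1$ boundary term and keep the $s=0$ coefficient finite, and reconciling the two endpoints is the crux of the argument. The clean regime is $z(0)=0$: there $z\in H^2\subset C^1$ gives $z(s)=O(s)$ near the origin, so $\rho(s)z(s)^2\to0$ and the sharp inequality $\int_0^1 z^2\le\tfrac{4}{\pi^2}\int_0^1 z_s^2$ follows, with equality at $z=\sin(\tfrac{\pi}{2}s)$. To carry a nonzero $z(0)$ through explicitly I would instead fall back on the spectral identity obtained by expanding $z$ in the orthonormal eigenbasis $\{\sqrt2\sin(\tfrac{(2n-1)\pi}{2}s)\}$ and applying Parseval to both $z$ and $z_s$; the endpoint datum then enters through the factor $\phi_n'(0)=\sqrt{2\mu_n}$, making transparent precisely how $z(0)$ contributes to the estimate and where the interplay with the $z(1)$ boundary term must be accounted for.
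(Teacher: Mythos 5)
Your Riccati computation is correct as far as it goes, and the obstruction you isolated---that no multiplier $\rho$ can simultaneously satisfy $\rho_s+\rho^2=-\tfrac{\pi^2}{4}$, vanish at $s=1$, and remain finite at $s=0$---is not a defect of the method: it is the statement itself that fails. The lemma as printed is false. Take $z(s)=1+\epsilon\sin(\tfrac{\pi}{2}s)$ with $\epsilon>0$; then $z(0)=1$, $z_s(1)=0$, and
\[
\int_0^1 z^2\,ds = 1+\frac{4\epsilon}{\pi}+\frac{\epsilon^2}{2},
\qquad
z(0)^2+\frac{4}{\pi^2}\int_0^1 z_s^2\,ds = 1+\frac{\epsilon^2}{2},
\]
so the claimed bound is violated by $4\epsilon/\pi$ (note this counterexample even satisfies the Neumann condition $z_s(1)=0$ from your eigenvalue setup, so adding that hypothesis does not rescue the statement). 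The failure is structural: writing $z=1+\delta v$ with $v(0)=0$ and $\int_0^1 v\,ds\neq 0$, the left-hand side acquires a term linear in $\delta$ while the right-hand side moves only at order $\delta^2$; hence no inequality $\int_0^1 z^2\le z(0)^2+B\int_0^1 z_s^2$ can hold for any finite $B$---the coefficient multiplying $z(0)^2$ must strictly exceed $1$. (Conversely, with the derivative coefficient fixed at $\tfrac{4}{\pi^2}$, no finite boundary coefficient suffices, since the Rayleigh quotient $\tfrac{\pi^2}{4}$ of the mixed problem is attained only in the Dirichlet limit.) Your planned spectral fallback cannot close this gap either: Parseval in the basis $\phi_n=\sqrt{2}\sin\bigl((n-\tfrac12)\pi s\bigr)$ gives $\|z\|^2=\sum_n c_n^2$ for every $z\in L_2$, but the companion identity $\|z_s\|^2=\sum_n\mu_n c_n^2$ is precisely what breaks when $z(0)\neq 0$ (for $z\equiv 1$ the right-hand side diverges while the left-hand side is zero), because it is produced by an integration by parts whose boundary terms you would be discarding.

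What you did prove---the case $z(0)=0$, where $\rho(s)=\tfrac{\pi}{2}\tan(\tfrac{\pi}{2}(1-s))$ works because $\rho z^2\to 0$ at the origin---is the sharp Wirtinger inequality $\int_0^1 z^2\,ds\le\tfrac{4}{\pi^2}\int_0^1 z_s^2\,ds$, and that is exactly the form invoked (legitimately, since $w\in\mathcal{D}_0$ there) in the proof of Lemma~\ref{lem:primal}. For general $z$ the correct statement must degrade both constants; for instance, $z(s)=z(0)+\int_0^s z_\sigma\,d\sigma$ together with Cauchy--Schwarz and $(a+b)^2\le 2a^2+2b^2$ gives
\[
\int_0^1 z^2\,ds\le 2z(0)^2+2\int_0^1 z_s^2\,ds.
\]
For comparison, the paper itself offers no proof of this lemma---it is only cited to Hardy--Littlewood--P\'olya and to Krstic--Smyshlyaev---and what those references actually contain are the two true variants above, not the stated hybrid with coefficient pair $(1,\tfrac{4}{\pi^2})$. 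Since the hybrid form is applied with $z(0)\neq 0$ in the proof of Lemma~\ref{lem:dual} (and hence enters Corollary~\ref{cor:dual2} and Theorems~\ref{thm:dualstability} and~\ref{thm:synthesis}), repairing the statement also forces a revision of the terms $\mp\tfrac{\pi^2}{2}\alpha\epsilon$ in the definition of $\mathcal{N}_\epsilon$. So the honest conclusion of your attempt is not a completed proof but a correct diagnosis: the two endpoint requirements are irreconcilable because the inequality, with these constants, is not true.
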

Now recall the definition of $\mathcal{M}_\epsilon$.
\begin{definition}
We say $\{Q_0,Q_1,Q_2,Q_3\}=\mathcal{M}_\epsilon(M,K_1,K_2)$ if the following hold
\begin{align}
 &Q_0(s)_{1,1}=\left[\left(b(1)-a_s(1) \right)M(1)-a(1)M_s(1) \right],\\
 &Q_0(s)_{1,2}=Q_0(s)_{2,1} \nonumber \\
 &=\left[\left(b(1)-a_s(1) \right)K_1(1,s)-a(1)K_{1,s}(1,s) \right], \\
&Q_0(s)_{2,2}=\pfs \left[ \pfs \left[ a(s)M(s)\right]-b(s)M(s) \right]+2M(s)c(s) \nonumber \\
&   +  \left[\pfs \left[2a(s)\left(K_1(s,t)-K_2(s,t) \right) \right] \right]_{t=s}-\frac{\pi^2}{2}\alpha \epsilon ,\\
&Q_1(s,t) \nonumber \\
&=\left(\pfs \left[\pfs \left[a(s)K_1(s,t)\right]-b(s)K_1(s,t) \right]+c(s)K_1(s,t) \right) \nonumber \\
&+\left(\pft \left[\pft \left[a(t)K_1(s,t)\right]-b(t)K_1(s,t) \right]+c(t)K_1(s,t) \right),  \\ 
&Q_2(s,t)=Q_1(t,s)\nonumber \\
&=\left(\pfs \left[\pfs \left[a(s)K_2(s,t)\right]-b(s)K_2(s,t) \right]+c(s)K_2(s,t) \right) \nonumber \\
&+\left(\pft \left[\pft \left[a(t)K_2(s,t)\right]-b(t)K_2(s,t) \right]+c(t)K_2(s,t) \right)\text{ and }\\
&Q_3(s)=-2a(0)K_2(0,s),
\end{align} where $ K_{1,s}(1,s)=\left[K_{1,s}(s,t)|_{s=1} \right]_{t=s}$.
\end{definition}

\begin{lemma}\label{lem:primal}
Suppose we are given  $\{M,K_1,K_2\} \in \Xi_{d_1,d_2,\epsilon}$ and $\{Q_0,Q_1,Q_2,Q_3\}=\mathcal{M}_\epsilon(M,K_1,K_2)$. Then, for $\mathcal{A}$ as defined in Equation~\eqref{eqn:Aoperator} and $\mathcal{P}$ as defined in Equation~\eqref{eqn:Poperator}, we have that
\begin{align*}
\ip{\mathcal{A}w}{\mathcal{P}w}+\ip{w}{\mathcal{P}\mathcal{A}w} \le &  \ip{\bmat{w(1)\\w}}{ \mathcal{Q} \bmat{w(1)\\w}}_{\R \times L_2} \\
& + \igzo w_s(0)Q_3(s)w(s)ds
\end{align*}
for any $w \in \mathcal{D}_0$ where $\mathcal{D}_0$ is defined in Equation~\eqref{state_space_stab} and where $\mathcal{Q}$ is defined as
\begin{align*}
(\mathcal{Q}y)(s) :=  Q_0(s)\bmat{y(1) \\ y(s)} &+   \igzs  \bmat{0 & 0 \\ 0 & Q_1(s,t)} \bmat{y(1) \\ y(t)} dt \\
&+ \igso  \bmat{0 & 0 \\ 0 & Q_2(s,t)} \bmat{y(1) \\ y(t)} dt.
\end{align*}
\end{lemma}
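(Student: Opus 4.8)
The plan is to reduce the left-hand side to the quadratic form $\ip{\cdots}{\mathcal{Q}\cdots}$ plus the $Q_3$-term by repeated integration by parts, carefully accounting for every boundary term, and then to use Wirtinger's inequality (Lemma~\ref{lem:wirtinger}) \emph{exactly once} to pass from an identity to the stated bound. First I would substitute $(\mathcal{A}w)(x)=a(x)w_{xx}(x)+b(x)w_x(x)+c(x)w(x)$ and split both $\ip{\mathcal{A}w}{\mathcal{P}w}$ and $\ip{w}{\mathcal{P}\mathcal{A}w}$ into a multiplier piece (carrying $M$), a lower Volterra piece (carrying $K_1$ on $\{\xi<x\}$), and an upper Volterra piece (carrying $K_2$ on $\{\xi>x\}$). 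The organizing idea is to integrate $\ip{\mathcal{A}w}{\mathcal{P}w}$ by parts in the outer variable $x$ (twice on the $a\,w_{xx}$ term, once on $b\,w_x$) and to integrate $\ip{w}{\mathcal{P}\mathcal{A}w}$ by parts in the inner variable $\xi$. With this choice the kernel derivatives fall on the first argument in the first inner product and on the second argument in the second, so that upon adding them the double-integral contributions assemble into precisely the symmetric combinations $Q_1=(\pfs[\pfs(aK_1)-bK_1]+cK_1)+(\pft[\pft(aK_1)-bK_1]+cK_1)$ and the analogous $Q_2$ of Definition~\ref{def:stab_map}.

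Next I would collect the remaining, non-double-integral terms. The multiplier pieces combine into the interior density $(aM)_{ss}-(bM)_s+2cM$, which is the $M$-part of $Q_{0_{2,2}}$, while differentiating the variable limits of the Volterra integrals (Leibniz rule) produces the diagonal contributions $[\pfs 2a(K_1-K_2)]_{t=s}$ that complete $Q_{0_{2,2}}$. The boundary terms are then disposed of using $w\in\mathcal{D}_0$: every term proportional to $w(0)$ or to $w_x(1)$ vanishes, the surviving $x=1$ terms assemble into $Q_{0_{1,1}}w(1)^2$ and the $w(1)$-linear coefficient $Q_{0_{1,2}}$, and the only terms carrying the unconstrained boundary value $w_x(0)$ are a $K_2$-term from the lower limit of the upper Volterra integral in $\ip{\mathcal{A}w}{\mathcal{P}w}$ and a $K_1$-term from the lower limit of the lower Volterra integral in $\ip{w}{\mathcal{P}\mathcal{A}w}$.

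Here I would invoke the fact that the parametrization of Theorem~\ref{thm:jointpos} (with symmetric $U$) forces the corner identity $K_1(s,0)=K_2(0,s)$, so that these two $w_x(0)$-contributions add to exactly $\igzo w_s(0)Q_3(s)w(s)\,ds$ with $Q_3(s)=-2a(0)K_2(0,s)$. The inequality itself enters through the second integration by parts of the multiplier term, which leaves the negative-definite remainder $-2\igzo a(x)M(x)w_x(x)^2\,dx$. Since $a\ge\alpha$ and, on $\Xi_{d_1,d_2,\epsilon}$, one has $M\ge\epsilon$, this is bounded above by $-2\alpha\epsilon\igzo w_x^2\,dx$, and Lemma~\ref{lem:wirtinger} together with $w(0)=0$ gives $\igzo w_x^2\,dx\ge\frac{\pi^2}{4}\igzo w^2\,dx$, producing the summand $-\frac{\pi^2}{2}\alpha\epsilon$ of $Q_{0_{2,2}}$.

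I expect the main obstacle to be the boundary-term bookkeeping rather than any single estimate: one must verify that every term produced by the many integrations by parts is either absorbed into $\mathcal{Q}$ and $Q_3$, or vanishes by the conditions $w(0)=0$ and $w_x(1)=0$, or cancels through the corner identity $K_1(s,0)=K_2(0,s)$. The Wirtinger step that converts the $w_x^2$ remainder into the $w^2$ density is the unique point at which equality becomes inequality, and it is routine once the identity has been established.
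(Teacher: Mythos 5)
Your proposal is correct, and it reaches the bound by the same skeleton as the paper---integration by parts, boundary bookkeeping on $\mathcal{D}_0$, and a single use of Lemma~\ref{lem:wirtinger} on the remainder $-2\igzo a(x)M(x)w_x(x)^2dx$ (which, minor point, is produced by the \emph{first}, not the second, integration by parts of the multiplier term)---but it organizes the computation genuinely differently. The paper first collapses the left-hand side to $2\ip{\mathcal{A}w}{\mathcal{P}w}$ using self-adjointness of $\mathcal{P}$, integrates by parts only in the outer variable, and then manufactures the symmetric kernels $Q_1,Q_2$ by halving each double integral, exchanging the order of integration, and renaming variables. You instead keep $\ip{\mathcal{A}w}{\mathcal{P}w}$ and $\ip{w}{\mathcal{P}\mathcal{A}w}$ separate and integrate by parts in the outer variable in the first and the inner variable in the second, so the two summands of $Q_1$ and of $Q_2$ appear directly upon addition; a pleasant byproduct is that the diagonal terms proportional to $a(x)\left(K_1(x,x)-K_2(x,x)\right)w(x)w_x(x)$ arising in the two computations carry opposite signs and cancel with no hypothesis needed, whereas the paper must kill them via $K_1(s,s)=K_2(s,s)$. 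The one point to repair is your claim that the kernel symmetry of the $\Xi$-parametrization enters only through the corner identity $K_1(s,0)=K_2(0,s)$: your route also needs the full relation $K_1(x,y)=K_2(y,x)$ (equivalently, antisymmetry of $K_1-K_2$) in two further places---first, so that the $w(x)^2$-diagonal terms coming from the Leibniz rule in the outer-variable computation and from the $\xi=x$ boundary terms in the inner-variable computation sum to the coefficient $\left[\pfs\left[2a(s)\left(K_1(s,t)-K_2(s,t)\right)\right]\right]_{t=s}$ of $Q_{0_{2,2}}$, and second, so that the $\xi=1$ boundary terms of the inner integrations by parts reproduce the $K_1(1,s)$ and $K_{1,s}(1,s)$ entries of $Q_{0_{1,2}}$. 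Since this symmetry is exactly what Theorem~\ref{thm:jointpos} guarantees (it is also what licenses the paper's doubling step and its order-of-integration trick), this is a bookkeeping correction rather than a gap; with it made explicit, your argument is complete.
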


\begin{proof}
We begin by considering the following decomposition
\begin{align}
&\ip{\mathcal{A}w}{\mathcal{P}w}+\ip{w}{\mathcal{P}\mathcal{A}w}  \nonumber \\
&=2\igzo \left(a(s)w_{ss}(s)+b(s)w_s(s)+c(s)w(s) \right)(\mathcal{P}w)(s)ds \nonumber \\
&\label{Gamma_eqn} = 2\left(\Gamma_1+\Gamma_2+\Gamma_3+\Gamma_4+\Gamma_5\right),
\end{align}
 where
 \begin{align*}
 &\Gamma_1 = \igzo w_{ss}(s)a(s)M(s)w(s)ds, \\
 &\Gamma_2 = \igzo w_s(s)b(s)M(s)w(s)ds, \\
 &\Gamma_3 \\
 &= \igzo w_{ss}(s)a(s) \left(\igzs K_1(s,t)w(t) dt + \igso K_2(s,t)w(t)d t  \right)ds, \\
 &\Gamma_4 = \igzo w_s(s)b(s) \left(\igzs K_1(s,t)w(t) d t + \igso K_2(s,t)w(t)d t  \right)ds
 \end{align*}  and
 \begin{align*}
 \Gamma_5 =  \igzo w(s)^2 M(s) c(s) ds &+ \igzo \igzs w(s) c(s) K_1(s,t)w(t) d t ds \\
 & + \igzo \igso w(s) c(s) K_2(s,t)w(t) dt ds.
 \end{align*}
 Applying integration by parts and using the boundary condition $w(0)=0$ yields
 \begin{align*}
 \Gamma_1 =& - \igzo w_s(s)^2 a(s)M(s)ds + \igzo w(s)^2\left( \frac{1}{2}\frac{\partial^2}{\partial s^2} \left[ a(s)M(s)\right] \right) ds \\
 &- w(1)^2 \left(\frac{1}{2}\left(a_s(1)M(1)+ a(1)M_s(1) \right) \right) \\
 &+ w_s(1)a(1)M(1) w(1).
 \end{align*}

 Since $a(s) \geq \alpha$ and $\{M,K_1,K_2\} \in \Xi_{d_1,d_2,\epsilon}$, we have $a(s)M(s) \geq \alpha \epsilon$. Thus, by application of the Wirtinger Inequality and boundary condition $w(0)=0$, we have
 \[
 - \igzo w_s(s)^2 a(s)M(s) ds \leq -\frac{\pi^2}{4}\alpha \epsilon \igzo w(s)^2 ds.
 \]
We conclude that
 \begin{align}
 \Gamma_1 \leq & \igzo w(s)^2\left[\left( \frac{1}{2}\frac{\partial^2}{\partial s^2} \left[ a(s)M(s)\right] \right)-\frac{\pi^2}{4} \alpha \epsilon\right] dx \nonumber \\
 &  -w(1)^2 \left(\frac{1}{2}\left(a_s(1)M(1)+ a(1)M_s(1) \right) \right) \nonumber \\
 &\label{Gamma_1} +w_s(1)w(1) a(1)M(1).
 \end{align}
Through integration by parts and application of boundary conditions, we also obtain
 \begin{align}
 &\Gamma_2 \nonumber \\
 &\label{Gamma_2}=  - \igzo w(s)^2 \left( \frac{1}{2} \pfs \left[ b(s)M(s)\right] \right)ds + (w(1))^2 \left( \frac{1}{2}b(1)M(1) \right).
 \end{align}
Now, note that for $(M,K_1,K_2) \in \Xi_{d_1,d_2,\epsilon}$, we have $K_1(x,y)=K_2(y,x)$. Exploiting this property, we find 
\begin{align}
 \Gamma_3 =& \igzo w(s)^2\left( \left[ \pfs \left[a(s)(K_1(s,t)-K_2(s,t)) \right] \right]_{t=s} \right) ds \nonumber \\
 & +  \igzo \igzs w(s) \left( \frac{\partial^2}{\partial s^2}\left[a(s)K_1(s,t) \right] \right) w(t) d t ds \nonumber \\
 &+ \igzo \igso w(s) \left( \frac{\partial^2}{\partial s^2}\left[a(s)K_2(s,t) \right] \right) w(t) d t ds \nonumber \\
 &  - w(1) \igzo \left(a_s(1)K_1(1,s)+a(1)K_{1,s}(1,s) \right) w(s) ds \nonumber \\
 &   + w_s(1) \igzo a(1)K_1(1,s) w(s)ds. \nonumber
 \end{align} We can re-write the previous expression as
 \begin{align}
 \Gamma_3 =& \igzo w(s)^2\left( \left[ \pfs \left[a(s)(K_1(s,t)-K_2(s,t)) \right] \right]_{t=s} \right) ds \nonumber \\
 & +  \igzo \igzs w(s) \left(\hlf \frac{\partial^2}{\partial s^2}\left[a(s)K_1(s,t) \right] \right) w(t) d t ds \nonumber \\
 &+ \igzo \igso w(s) \left(\hlf \frac{\partial^2}{\partial s^2}\left[a(s)K_2(s,t) \right] \right) w(t) d t ds \nonumber \\
 &+  \igzo \igzs w(s) \left(\hlf \frac{\partial^2}{\partial s^2}\left[a(s)K_1(s,t) \right] \right) w(t) d t ds \nonumber \\
 &+ \igzo \igso w(s) \left(\hlf \frac{\partial^2}{\partial s^2}\left[a(s)K_2(s,t) \right] \right) w(t) d t ds \nonumber \\
 &  - w(1) \igzo \left(a_s(1)K_1(1,s)+a(1)K_{1,s}(1,s) \right) w(s) ds \nonumber \\
 &   + w_s(1) \igzo a(1)K_1(1,s) w(s)ds. \nonumber
 \end{align} Changing the order of integration in the last two double integrals and switching the variables $s$ and $t$,
 \begin{align}
 \Gamma_3 =& \igzo w(s)^2\left( \left[ \pfs \left[a(s)(K_1(s,t)-K_2(s,t)) \right] \right]_{t=s} \right) ds \nonumber \\
 & +  \igzo \igzs w(s) \left(\hlf \frac{\partial^2}{\partial s^2}\left[a(s)K_1(s,t) \right] \right. \nonumber \\
 &\left. \qquad \qquad +\hlf \frac{\partial^2}{\partial t^2}\left[a(t)K_1(s,t) \right] \right) w(t) d t ds \nonumber \\
 &+ \igzo \igso w(s) \left(\hlf \frac{\partial^2}{\partial s^2}\left[a(s)K_2(s,t) \right] \right. \nonumber \\
 &\left. \qquad \qquad +\hlf \frac{\partial^2}{\partial t^2}\left[a(t)K_2(s,t) \right] \right) w(t) d t ds \nonumber \\
 &  - w(1) \igzo \left(a_s(1)K_1(1,s)+a(1)K_{1,s}(1,s) \right) w(s) ds \nonumber \\
 &\label{Gamma_3}   + w_s(1) \igzo a(1)K_1(1,s) w(s)ds.
 \end{align}
 
 Similarly,
 \begin{align}
 \Gamma_4 =&- \igzo \igzs w(s) \left(\hlf \pfs \left[b(s)K_1(s,t)\right] \right. \nonumber \\
 &\left. \qquad \qquad + \hlf \frac{\partial}{\partial t}\left[ b(t)K_1(s,t)\right] \right) w(t) d t ds  \nonumber \\
 & - \igzo \igso w(s) \left(\hlf \pfs \left[b(s)K_2(s,t) \right] \right. \nonumber \\
 & \left. \qquad \qquad + \hlf \frac{\partial}{\partial t} \left[b(s)K_2(s,t) \right] \right) w(t) d t ds \nonumber \\
 & \label{Gamma_4} + w(1) \igzo b(1)K_1(1,s)w(s)ds.
 \end{align} 
Finally, employing a change of order of integration produces
 \begin{align}
 \Gamma_5 = & \igzo w(s)^2 M(s) c(s) ds \nonumber \\
 &+ \igzo \igzs w(s) \left(\hlf \left[c(s)+c(t) \right] K_1(s,t) \right) w(t) d t ds \nonumber \\
 &\label{Gamma_5} + \igzo \igso w(s) \left( \hlf \left[c(s)+c(t)\right] K_2(s,t)\right)w(t) dt ds.
 \end{align}
 Substituting \eqref{Gamma_1}-\eqref{Gamma_5} into \eqref{Gamma_eqn} gives us
 \begin{align}
 &\ip{\mathcal{A}w}{\mathcal{P}w}+\ip{w}{\mathcal{P}\mathcal{A}w} \nonumber \\
 &\leq    \ip{\bmat{w(1)\\w}}{ \mathcal{Q} \bmat{w(1)\\w}}_{\R \times L_2} +w_s(0) \igzo Q_3(s)w(s)ds \nonumber \\
 &\label{eqn:primal_setup}\quad +2 w_s(1) \left( a(1)M(1)w(1)+ \igzo  a(1)K_1(1,s)w(s)ds \right) .
 \end{align} Since $w \in \mathcal{D}_0$, $w_s(1)=0$. This gives us the desired result.
 \end{proof}

\begin{corollary}\label{cor:primal2}
Suppose we are given  $\{M,K_1,K_2\} \in \Xi_{d_1,d_2,\epsilon}$ and $\{Q_0,Q_1,Q_2,Q_3\}=\mathcal{M}_\epsilon(M,K_1,K_2)$. Then, for $\mathcal{A}$ as defined in Equation~\eqref{eqn:Aoperator} and $\mathcal{P}$ as defined in Equation~\eqref{eqn:Poperator}, we have that
 \begin{align}
 &\ip{\mathcal{A}w}{\mathcal{P}w}+\ip{w}{\mathcal{P}\mathcal{A}w} \nonumber \\
 &\leq    \ip{\bmat{w(1)\\w}}{ \mathcal{Q} \bmat{w(1)\\w}}_{\R \times L_2} +w_s(0) \igzo Q_3(s)w(s)ds \nonumber \\
 &\quad +2 w_s(1) \left( a(1)M(1)w(1)+ \igzo  a(1)K_1(1,s)w(s)ds \right) .
 \end{align}
for any $w \in H^2(0,1)$ with $w(0)=0$ where $\mathcal{Q}$ is defined as
\begin{align*}
(\mathcal{Q}y)(s) := Q_0(s)\bmat{y(1) \\ y(s)} &+   \igzs  \bmat{0 & 0 \\ 0 & Q_1(s,t)} \bmat{y(1) \\ y(t)} dt \\
&+ \igso  \bmat{0 & 0 \\ 0 & Q_2(s,t)} \bmat{y(1) \\ y(t)} dt.
\end{align*}
\end{corollary}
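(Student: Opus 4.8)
The plan is to observe that this corollary is nothing more than the penultimate line of the proof of Lemma~\ref{lem:primal}, namely inequality~\eqref{eqn:primal_setup}, read off before the final simplification is applied. The strategy, then, is to re-examine that proof and verify that every step leading up to \eqref{eqn:primal_setup} requires only the regularity $w \in H^2(0,1)$ together with the single boundary condition $w(0)=0$, and that the stronger membership $w \in \mathcal{D}_0$ from \eqref{state_space_stab} is invoked exactly once — at the very end — to kill the residual term carrying the factor $w_s(1)$.

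First I would retrace the integration-by-parts computations of $\Gamma_1$ through $\Gamma_5$ and confirm that the only boundary evaluation used to discard endpoint contributions at $s=0$ is $w(0)=0$; the endpoint contributions at $s=1$ are never discarded but instead retained and collected into $\mathcal{Q}$ and into the $w_s(1)$-term. Next I would check the single analytic estimate in that derivation — the application of the Wirtinger inequality (Lemma~\ref{lem:wirtinger}) to bound $-\igzo w_s(s)^2 a(s)M(s)\,ds$ — and note that this too uses only $w(0)=0$ and the regularity $w \in H^2(0,1)$ guaranteeing $w_s \in \lt$. Consequently, assembling \eqref{Gamma_1}--\eqref{Gamma_5} into \eqref{Gamma_eqn} reproduces \eqref{eqn:primal_setup} verbatim, which is precisely the claimed inequality.

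The essential point, and the only thing that needs emphasis, is that Lemma~\ref{lem:primal} specializes \eqref{eqn:primal_setup} by setting $w_s(1)=0$, whereas the corollary deliberately keeps the term $2w_s(1)\left(a(1)M(1)w(1)+\igzo a(1)K_1(1,s)w(s)\,ds\right)$, so the bound remains valid on the larger class of $w\in H^2(0,1)$ satisfying only $w(0)=0$. Thus no new estimate is required: the corollary follows directly by stopping one step earlier in the proof of Lemma~\ref{lem:primal}. There is no real obstacle here; the subtlety is purely bookkeeping — verifying that $w_x(1)$ never entered the derivation of \eqref{eqn:primal_setup}, so that relaxing it from $0$ to an arbitrary value leaves the inequality intact.
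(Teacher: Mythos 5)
Your proposal is correct and is exactly the paper's own argument: the paper proves Corollary~\ref{cor:primal2} by instructing the reader to omit the last line of the proof of Lemma~\ref{lem:primal} (i.e., to stop at inequality~\eqref{eqn:primal_setup} before invoking $w_s(1)=0$). Your additional bookkeeping---checking that the integrations by parts and the Wirtinger step use only $w(0)=0$ and $H^2$ regularity---is precisely the verification the paper leaves implicit.
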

\begin{proof}
Omit the last line in the proof of Lemma~\ref{cor:primal2}.
\end{proof}

The following lemma gives a result which is dual to Lemma~\ref{lem:dual}.

\begin{definition}
We say $\{T_0,T_1,T_2,T_3,T_4\}=\mathcal{N}_\epsilon(M,K_1,K_2)$ if the following hold
\begin{align}
T_0(s)_{1,1}=& \left[-a(1)M_s(1)+(b(1)-a_s(1))M(1) \right], \\
T_0(s)_{1,2}=&T_0(s)_{2,1}=-a(1)K_{1,s}(1,s), \\
T_0(s)_{2,2}=& \left[(a_{ss}(s)-b_s(s))M(s)+b(s)M_s(s)\right]+2 M(s)c(s) \nonumber \\
 & +a(s)\left[M_{ss}(s)+2\pfs \left[K_1(s,t)-K_2(s,t) \right] \right]_{t=s} \nonumber \\
 & -\frac{\pi^2}{2}\alpha \epsilon ,\\
T_1(s,t)=& a(s)K_{1,ss}(s,t)+b(s)K_{1,s}(s,t)+c(s)K_1(s,t) \nonumber \\
&+a(t)K_{1,tt}(s,t)+b(t)K_{1,t}(s,t)+c(t)K_1(s,t), \\
T_2(s,t)=& a(s)K_{2,ss}(s,t)+b(s)K_{2,s}(s,t)+c(s)K_2(s,t) \nonumber \\
&+a(t)K_{2,tt}(s,t)+b(t)K_{2,t}(s,t)+c(t)K_2(s,t),\\
T_3=&a_x(0)M(0)-a(0)M_x(0)-b(0)M(0)+\frac{\pi^2}{2}\alpha \epsilon \text{ and }\\
T_4=&-2 a(0)M(0).
\end{align}
\end{definition}

\begin{lemma}\label{lem:dual}
Suppose $\{M,K_1,K_2\} \in \Xi_{d_1,d_2,\epsilon}$ and $\{T_0,T_1,T_2,T_3,T_4\}=\mathcal{N}_\epsilon(M,K_1,K_2)$. Then, for $\mathcal{A}$ as defined in Equation~\eqref{eqn:Aoperator} and $\mathcal{P}$ as defined in Equation~\eqref{eqn:Poperator}, we have that
\begin{align*}
&\ip{\mathcal{A}\mathcal{P}w}{w}+\ip{w}{\mathcal{A}\mathcal{P}w} \\
&\le  \ip{\bmat{w(1)\\w}}{ \mathcal{T} \bmat{w(1)\\w}}_{\R \times L_2} +w(0)\left(T_3 w(0) + T_4 w_s(0) \right).
\end{align*}
for any $w \in \mathcal{P}^{-1}\mathcal{D}_0$ where $\mathcal{D}_0$ is defined in Equation~\eqref{state_space_stab} and
\begin{align*}
(\mathcal{T}y)(s) := T_0(s)\bmat{y(1) \\ y(s)} &+   \igzs  \bmat{0 & 0 \\ 0 & T_1(s,t)} \bmat{y(1) \\ y(t)} dt \\
& + \igso  \bmat{0 & 0 \\ 0 & T_2(s,t)} \bmat{y(1) \\ y(t)} dt.
\end{align*}
\end{lemma}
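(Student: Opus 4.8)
The plan is to mirror the proof of Lemma~\ref{lem:primal}, exploiting the fact that the dual quantity is, up to a boundary concomitant, the \emph{primal} quantity for the formal adjoint of $\mathcal{A}$. Since $\mathcal{P}=\mathcal{P}^\star$ and the inner product is real, $\ip{\mathcal{A}\mathcal{P}w}{w}+\ip{w}{\mathcal{A}\mathcal{P}w}=2\ip{\mathcal{A}\mathcal{P}w}{w}$, and the Lagrange identity for $\mathcal{A}$ reads
\[
(\mathcal{A}u)v-u(\mathcal{A}^\star v)=\pfs\!\left[a(u_sv-uv_s)+(b-a_s)uv\right],
\]
where $\mathcal{A}^\star v=a\,v_{ss}+(2a_s-b)v_s+(a_{ss}-b_s+c)v$. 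Taking $u=\mathcal{P}w$, $v=w$ and integrating over $[0,1]$ reduces the dual expression to the primal pairing $\ip{\mathcal{A}^\star w}{\mathcal{P}w}+\ip{\mathcal{P}w}{\mathcal{A}^\star w}$ plus boundary data at $s=0,1$. This explains the form of $\mathcal{N}_\epsilon$: substituting the adjoint coefficients $(a,\,2a_s-b,\,a_{ss}-b_s+c)$ into the kernel formulas of $\mathcal{M}_\epsilon$ reproduces $T_1$, $T_2$ and the interior multiplier $T_{0_{2,2}}$ exactly, which I would verify as a consistency check before assembling the boundary terms.

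First I would carry out the direct computation. Writing $\mathcal{A}\mathcal{P}w=a(\mathcal{P}w)_{ss}+b(\mathcal{P}w)_s+c(\mathcal{P}w)$ and integrating by parts twice to move both derivatives off $\mathcal{P}w$, I would expand $\mathcal{P}w$ using~\eqref{eqn:Poperator} and decompose into five pieces analogous to $\Gamma_1,\dots,\Gamma_5$. The double-integral pieces are treated exactly as in Lemma~\ref{lem:primal}: change the order of integration on the $\igzs$/$\igso$ terms, relabel $s\leftrightarrow t$, and use $K_1(x,y)=K_2(y,x)$ to symmetrize. This produces the kernels $T_1$, $T_2$ together with the diagonal contributions $\left[\pfs(K_1-K_2)\right]_{t=s}$ that feed $T_{0_{2,2}}$.

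The key analytic step is the Wirtinger inequality (Lemma~\ref{lem:wirtinger}). The leading term yields $-\igzo a(s)M(s)w_s(s)^2\,ds$, and since here the argument need only satisfy $\mathcal{P}w\in\mathcal{D}_0$ rather than $w\in\mathcal{D}_0$, the value $w(0)$ does \emph{not} vanish. Thus Wirtinger retains the boundary term, giving $-\frac{\pi^2}{4}\alpha\epsilon\igzo w^2\,ds+\frac{\pi^2}{4}\alpha\epsilon\,w(0)^2$; after doubling, the first part is precisely the $-\frac{\pi^2}{2}\alpha\epsilon$ in $T_{0_{2,2}}$ and the second is the $+\frac{\pi^2}{2}\alpha\epsilon$ in $T_3$, which is exactly why $T_3$ carries this compensating constant. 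This is the only inequality used; all other manipulations are identities.

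The main obstacle is the boundary-term bookkeeping, where the dual case genuinely differs from the primal. There are two sources of such terms: the Leibniz terms from differentiating $\mathcal{P}w$, and the Lagrange concomitant from transferring derivatives. I would collapse these using the two constraints $(\mathcal{P}w)(0)=0$ and $(\mathcal{P}w)_s(1)=0$. The latter,
\[
M(1)w_s(1)=-M_s(1)w(1)-[K_1(1,1)-K_2(1,1)]w(1)-\igzo K_{1,s}(1,\xi)w(\xi)\,d\xi,
\]
eliminates every $w_s(1)$ term, folding it into the coefficients of $w(1)^2$ and $w(1)\igzo K_1(1,\xi)w\,d\xi$, which become $T_{0_{1,1}}$ and $T_{0_{1,2}}$; these acquire an extra $(b(1)-a_s(1))M(1)$-type contribution from the concomitant relative to a naive adjoint substitution, accounting for the sign that distinguishes $\mathcal{N}_\epsilon$ from $\mathcal{M}_\epsilon$ on the boundary block. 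The terms at $s=0$, together with the residual Wirtinger contribution, assemble into $w(0)\left(T_3w(0)+T_4w_s(0)\right)$ with $T_4=-2a(0)M(0)$. Collecting the interior terms into the operator $\mathcal{T}$ and the boundary data as above yields the stated inequality for any $w\in\mathcal{P}^{-1}\mathcal{D}_0$.
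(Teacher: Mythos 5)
Your proposal is correct and follows essentially the same route as the paper's proof: the same five-term decomposition of $2\ip{\mathcal{A}\mathcal{P}w}{w}$ with integration by parts, the Wirtinger inequality applied while retaining the $w(0)^2$ term (the key difference from Lemma~\ref{lem:primal}, which is exactly what produces the $\frac{\pi^2}{2}\alpha\epsilon$ in $T_3$), symmetrization of the double integrals via change of order of integration, and elimination of the $w_s(1)$ terms through the constraint $(\mathcal{P}w)_s(1)=0$. The Lagrange-identity/adjoint framing in your first paragraph is a correct but supplementary consistency check rather than part of the argument, and the constraint $(\mathcal{P}w)(0)=0$ you invoke is never actually needed, since the $s=0$ boundary terms are simply collected into $w(0)\left(T_3 w(0)+T_4 w_s(0)\right)$ exactly as you conclude.
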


\begin{proof}
We begin by considering the following decomposition
\begin{align}
&\ip{\mathcal{A}\mathcal{P}w}{w}+\ip{w}{\mathcal{A}\mathcal{P}w} \nonumber \\
&= 2 \igzo \left(a(s)\frac{\partial^2}{\partial s^2}\left[(\mathcal{P}w)(s) \right]+b(s)\pfs \left[(\mathcal{P}w)(s) \right] \right. \nonumber \\
&\quad \left. \vphantom{\pfs} + c(s)(\mathcal{P}w)(s)  \right)w(s) ds \label{eqn:dual_gamma}= 2 \left(\Gamma_1+\Gamma_2+\Gamma_3+\Gamma_4+\Gamma_5  \right),
\end{align} where
\begin{align*}
&\Gamma_1 = \igzo w(s)a(s) \frac{\partial^2}{\partial s^2} \left[M(s)w(s) \right]ds,\\
&\Gamma_2 = \igzo w(s)b(s) \pfs \left[M(s)w(s) \right]ds,\\
&\Gamma_3 \\
&= \igzo w(s)a(s)\frac{\partial^2}{\partial s^2} \left[\igzs K_1(s,t)w(t)dt + \igso K_2(s,t)w(t)dt \right]ds,\\
&\Gamma_4 \\
&=\igzo w(s)b(s) \pfs \left[\igzs K_1(s,t)w(t)dt + \igso K_2(s,t)w(t)dt \right]ds
\end{align*} and
\begin{align*}
\Gamma_5 = &\igzo w(s)^2 M(s)c(s)ds  + \igzo \igzs w(s)c(s)K_1(s,t)w(t)dtds \\
&+ \igzo \igso w(s)c(s)K_2(s,t)w(t)dtds.
\end{align*}

Applying integration by parts,
\begin{align*}
\Gamma_1 =& -\igzo w_s(s)^2 a(s)M(s)ds \\
&+ \frac{1}{2}\igzo w(s)^2 \left[a_{ss}(s)M(s)+a(s)M_{ss}(s) \right]ds \\
&+ \frac{1}{2}w(1)^2 \left[a(1)M_s(1)-a_s(1)M(1) \right]+w(1)a(1)M(1)w_s(1) \\
&+ \frac{1}{2}w(0)^2 \left[a_s(0)M(0)-a(0)M_s(0) \right] \\
&-w(0)a(0)M(0)w_s(0).
\end{align*} Since $a(s)M(s) \geq \alpha \epsilon$, applying Lemma~\ref{lem:wirtinger} yields
\begin{align*}
&-\igzo w_s(s)^2 a(s)M(s)ds \leq -\frac{\pi^2}{4}\alpha \epsilon \igzo w(s)^2 ds+\frac{\pi^2}{4}\alpha \epsilon w(0)^2. 
\end{align*} Thus
\begin{align}
\Gamma_1 \leq & \igzo w(s)^2 \left( \frac{1}{2}\left[a_{ss}(s)M(s)+a(s)M_{ss}(s) \right]-\frac{\pi^2}{4}\alpha \epsilon \right)ds  \nonumber \\
&+ \frac{1}{2}w(1)^2 \left[a(1)M_s(1)-a_s(1)M(1) \right] \nonumber \\
&+ w(0)^2 \left( \frac{1}{2}\left[a_s(0)M(0)-a(0)M_s(0) \right]+\frac{\pi^2}{4} \alpha \epsilon \right) \nonumber \\
&\label{eqn:dual_gamma_1}+w(1)a(1)M(1)w_s(1)-w(0)a(0)M(0)w_s(0).
\end{align}

Similarly
\begin{align}
\Gamma_2 = & \frac{1}{2}\igzo w(s)^2 \left[b(s)M_s(s)-b_s(s)M(s) \right]ds \nonumber \\
&\label{eqn:dual_gamma_2}+\frac{1}{2}w(1)^2 b(1)M(1) -\frac{1}{2}w(0)^2 b(0)M(0).
\end{align}

Applying integration by parts and using the fact that $K_1(s,s)=K_2(s,s)$, we get
\begin{align}
\Gamma_3 = & \igzo w(s)^2 \left(a(s)\left[\pfs[K_1(s,t)-K_2(s,t)] \right]_{t=s} \right)ds \nonumber \\
&+\igzo \igzs w(s)a(s)K_{1,ss}(s,t)w(t)dtds \nonumber \\
&+\igzo \igso w(s)a(s)K_{2,ss}(s,t)w(t)dtds. \nonumber
\end{align} Using a change of order of integration as applied in Equation~\eqref{Gamma_3} in Lemma~\ref{lem:primal}, we obtain
\begin{align}
\Gamma_3 = & \igzo w(s)^2 \left(a(s)\left[\pfs[K_1(s,t)-K_2(s,t)] \right]_{t=s} \right)ds \nonumber \\
&+\igzo \igzs w(s)\left(\hlf a(s)K_{1,ss}(s,t)\right)w(t)dtds \nonumber \\
&+\igzo \igzs w(s)\left(\hlf a(t)K_{1,tt}(s,t)\right)w(t)dtds \nonumber \\
&+\igzo \igso w(s)\left(\hlf a(s)K_{2,ss}(s,t)\right)w(t)dtds \nonumber \\
&\label{eqn:dual_gamma_3}+\igzo \igso w(s)\left(\hlf a(t)K_{2,tt}(s,t)\right)w(t)dtds
\end{align} Similarly,
\begin{align}
&\Gamma_4 = \nonumber \\
 & \igzo \igzs w(s)\left(\hlf b(s)K_{1,s}(s,t)  +\hlf b(t)K_{1,t}(s,t)\right)w(t)dtds \nonumber \\
&\label{eqn:dual_gamma_4}+\igzo \igso w(s)\left(\hlf b(s)K_{2,s}(s,t) +\hlf b(t)K_{2,t}(s,t)\right)w(t)dtds
\end{align} and
\begin{align}
\Gamma_5 = &\igzo w(s)^2 M(s)c(s)ds \nonumber \\
& + \igzo \igzs w(s)\left(\hlf c(s)K_1(s,t)+\hlf c(t)K_1(s,t)\right)w(t)dtds \nonumber \\
&\label{eqn:dual_gamma_5} + \igzo \igso w(s)\left(\hlf c(s)K_2(s,t)+\hlf c(t)K_2(s,t)\right)w(t)dtds.
\end{align}

Substituting \eqref{eqn:dual_gamma_1}-\eqref{eqn:dual_gamma_5} in \eqref{eqn:dual_gamma},
\begin{align}
&\ip{\mathcal{A}\mathcal{P}w}{w}+\ip{w}{\mathcal{A}\mathcal{P}w} \nonumber \\
&\le  \ip{\bmat{w(1)\\w}}{ \mathcal{T} \bmat{w(1)\\w}}_{\R \times L_2}+w(0)\left(T_3 w(0) + T_4 w_s(0) \right) \nonumber \\
&\quad+ 2 \igzo w(1)a(1)K_{1,s}(1,s)w(s)ds+ 2w(1)a(1)M(1)w_s(1) \nonumber \\
&\label{eqn:proof_ineq}\quad+2w(1)a(1)M_s(1)w(1).
\end{align}

Since $w \in \mathcal{P}^{-1}\mathcal{D}_0$, there exists a $y \in \mathcal{D}_0$ such that $w=\mathcal{P}^{-1}y$ which implies $y=\mathcal{P}w$. Hence, we obtain the boundary condition
\[y_s(1)=M_s(1)w(1)+M(1)w_s(1)+\igzo K_{1,s}(1,s)w(s)ds.\]
Since $y \in D_0$, $y_s(1)=0$ and hence
\[
M_s(1)w_1(1)=-M(1)w_s(1)-\igzo K_{1,s}(1,s)w(s)ds.
\]
Substituting this boundary condition into the last term of \eqref{eqn:dual_setup} gives us the desired result.
\end{proof}

\begin{corollary}\label{cor:dual2}
Suppose we are given  $\{M,K_1,K_2\} \in \Xi_{d_1,d_2,\epsilon}$ and $\{T_0,T_1,T_2,T_3,T_4\}=\mathcal{N}_\epsilon(M,K_1,K_2)$. Then, for $\mathcal{A}$ as defined in Equation~\eqref{eqn:Aoperator} and $\mathcal{P}$ as defined in Equation~\eqref{eqn:Poperator}, we have that
\begin{align}
&\ip{\mathcal{A}\mathcal{P}w}{w}+\ip{w}{\mathcal{A}\mathcal{P}w} \\
&\le  \ip{\bmat{w(1)\\w}}{ \mathcal{T} \bmat{w(1)\\w}}_{\R \times L_2}+w(0)\left(T_3 w(0) + T_4 w_s(0) \right) \nonumber \\
&\quad+ 2 \igzo w(1)a(1)K_{1,s}(1,s)w(s)ds+ 2w(1)a(1)M(1)w_s(1) \nonumber \\
&\label{eqn:dual_setup}\quad+2w(1)a(1)M_s(1)w(1).
\end{align}
for any $w \in \mathcal{P}^{-1}\mathcal{D}$ where $\mathcal{D}$ is defined in Equation~\eqref{state_space_stab} and
\begin{align*}
(\mathcal{T}y)(s) :=  T_0(s)\bmat{y(1) \\ y(s)} &+   \igzs  \bmat{0 & 0 \\ 0 & T_1(s,t)} \bmat{y(1) \\ y(t)} dt \\
& + \igso  \bmat{0 & 0 \\ 0 & T_2(s,t)} \bmat{y(1) \\ y(t)} dt.
\end{align*}
\end{corollary}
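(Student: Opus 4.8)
The plan is to recognize that Corollary~\ref{cor:dual2} is precisely the intermediate inequality~\eqref{eqn:proof_ineq} established partway through the proof of Lemma~\ref{lem:dual}, now stated over the domain $\mathcal{P}^{-1}\mathcal{D}$ instead of $\mathcal{P}^{-1}\mathcal{D}_0$. Consequently I would simply reproduce the argument of Lemma~\ref{lem:dual} up to and including~\eqref{eqn:proof_ineq} and stop there, in direct analogy with the proof of Corollary~\ref{cor:primal2}, which merely omits the last line of Lemma~\ref{lem:primal}.

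Concretely, first I would expand $\ip{\mathcal{A}\mathcal{P}w}{w}+\ip{w}{\mathcal{A}\mathcal{P}w}$ into the five terms $\Gamma_1,\dots,\Gamma_5$ as in~\eqref{eqn:dual_gamma}, apply integration by parts to each, and use the Wirtinger inequality of Lemma~\ref{lem:wirtinger} on the term $-\igzo w_s(s)^2 a(s)M(s)\,ds$ produced by $\Gamma_1$, exploiting $a(s)M(s)\ge\alpha\epsilon$. Grouping the interior multiplier and double-integral contributions into the quadratic form defined by $\mathcal{T}$, whose blocks are exactly the components of $\mathcal{N}_\epsilon(M,K_1,K_2)$, and collecting the boundary terms at $s=0$ into $w(0)(T_3 w(0)+T_4 w_s(0))$, yields the right-hand side of~\eqref{eqn:dual_setup} together with the three residual boundary terms at $s=1$ appearing on its last two lines.

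The single point requiring care is the bookkeeping of boundary conditions. The derivation up to~\eqref{eqn:proof_ineq} is purely integration by parts plus Wirtinger and uses no boundary data on $y=\mathcal{P}w$ other than the regularity $\mathcal{P}w\in H^2(0,1)$; this regularity, together with $w\in H^2(0,1)$ (obtained by a standard bootstrap, since $M\ge\epsilon>0$ and $K_1,K_2$ are polynomial, so $w=M^{-1}(\mathcal{P}w-\text{smooth integral terms})$), makes every integration by parts legitimate. In Lemma~\ref{lem:dual} the three residual $s=1$ terms are collapsed to $2w(1)a(1)y_s(1)$ via the identity $y_s(1)=M_s(1)w(1)+M(1)w_s(1)+\igzo K_{1,s}(1,s)w(s)ds$ and then eliminated using $y_s(1)=0$, valid because there $y\in\mathcal{D}_0$. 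For the present corollary $w\in\mathcal{P}^{-1}\mathcal{D}$ gives only $y=\mathcal{P}w\in\mathcal{D}$, so $y_s(1)=\mathcal{F}y$ is generally nonzero; hence these terms cannot be discarded and must be retained, which is exactly why~\eqref{eqn:dual_setup} carries them.

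There is no genuine obstacle here beyond verifying that none of the $\Gamma_i$ manipulations covertly invoked $y_s(1)=0$ before~\eqref{eqn:proof_ineq} (they do not) and that each integration-by-parts boundary contribution has been correctly allocated to the $\mathcal{T}$-form, the $s=0$ terms, or the retained $s=1$ terms. In summary, the proof is: run the proof of Lemma~\ref{lem:dual} and omit only the final substitution of the boundary condition $y_s(1)=0$.
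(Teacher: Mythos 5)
Your proposal is correct and is essentially identical to the paper's own proof, which states that Corollary~\ref{cor:dual2} is implied by the proof of Lemma~\ref{lem:dual} at Inequality~\eqref{eqn:proof_ineq}, i.e., by stopping before the final substitution of the boundary condition $y_s(1)=0$. Your added observations---that none of the $\Gamma_i$ manipulations invoke any boundary data on $y=\mathcal{P}w$ before that point, and that for $w\in\mathcal{P}^{-1}\mathcal{D}$ the residual $s=1$ terms must therefore be retained---are precisely the justification the paper leaves implicit.
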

The proof of Corollary~\ref{cor:dual2} is implied by the proof of Lemma~\ref{lem:dual} in Inequality~\eqref{eqn:proof_ineq}.

\subsection{Acknowledgements}

 This research was carried out with the financial support of the Chateaubriand fellowship program and NSF CAREER Grant CMMI-1151018.
 
\bibliographystyle{plain}
\bibliography{TAC}
\begin{IEEEbiographynophoto}{Aditya Gahlawat}
received the B.Tech degree in mechanical engineering from Punjabi University, Patiala, India in 2007, the M.S. degree in mechanical and aerospace engineering from Illinois Institute of Technology, Chicago, USA in 2009 and is currently pursuing a Ph.D. degree in mechanical and aerospace engineering from Illinois Institute of Technology, Chicago, USA and Universit\'{e} de Grenoble, St. Martin d'Heres, France. 

His research focuses on the application of convex optimization based methods for the analysis and control of systems governed by partial differential equations with application to thermonuclear fusion.

Aditya Gahlawat was awarded the Chateaubriand fellowship in 2011 and 2012.
\end{IEEEbiographynophoto}

\begin{IEEEbiographynophoto}{Matthew M. Peet}
received the B.S. degree in physics and in aerospace engineering from the University of Texas, Austin, TX, USA, in 1999 and the M.S. and Ph.D. degrees in aeronautics and astronautics from Stanford University, Stanford, CA, in 2001 and 2006,
respectively.

He was a Postdoctoral Fellow at the National Institute for Research in Computer Science and Control (INRIA), Paris, France, from 2006 to 2008, where he worked in the SISYPHE and BANG groups. He was an Assistant Professor of Aerospace Engineering in the Mechanical, Materials, and Aerospace Engineering Department, Illinois Institute of Technology, Chicago, IL, USA, from 2008 to 2012. Currently, he is an Assistant Professor of Aerospace Engineering, School for the Engineering of Matter, Transport, and Energy, Arizona State University, Tempe, AZ, USA, and Director of the Cybernetic Systems and Controls Laboratory. His research interests are in the role of computation as it is applied to the understanding and control of complex and large-scale systems. Applications include fusion energy and immunology.

Dr. Peet received a National Science Foundation CAREER award in 2011.
\end{IEEEbiographynophoto}
\end{document}